\documentclass[journal,onecolumn]{IEEEtran}
\usepackage{amsmath,amsthm,mathtools} 
\usepackage{amssymb,amsfonts,mathrsfs,bm} 
\usepackage{extarrows} 
\usepackage{array,tabularx,booktabs,multirow,colortbl,makecell} 
\usepackage{tablefootnote}
\usepackage{tikz-cd} 
\usepackage{physics} 
\usepackage{graphicx} 
\usepackage{xcolor} 
\usepackage[caption=false,font=normalsize,labelfont=sf,textfont=sf]{subfig} 
\usepackage{hyperref} 
\hypersetup{hidelinks,colorlinks=true,linkcolor=blue,citecolor=blue,urlcolor=blue}
\usepackage{stfloats} 
\usepackage{textcomp} 
\usepackage{url}
\usepackage[noadjust]{cite}
\makeatletter
\newtheoremstyle{mythm}{3pt}{3pt}{}{16pt}{\bfseries}{:}{.5em}{}
\theoremstyle{mythm}
\newtheorem{theorem}{Theorem}
\newtheorem{example}{Example}
\newtheorem{definition}{Definition}
\newtheorem{remark}{Remark}

\newtheorem{corollary}{Corollary}
\newtheorem{lemma}{Lemma}

\newtheorem{construction}{Construction}

\newcommand{\cC}{\mathcal{C}}

\newcommand{\cI}{\mathcal{I}}

\newcommand{\cK}{\mathcal{K}}

\newcommand{\cZ}{\mathcal{Z}}

\newcommand{\N}{\mathbb{N}}

\newcommand{\F}{\mathbb{F}}

\newcommand{\sC}{\mathscr{C}}

\DeclareMathOperator{\coff}{Coff}
\DeclareMathOperator{\diag}{Diag}
\renewcommand{\le}{\leqslant}

\renewcommand{\ge}{\geqslant}

\newcommand{\mb}[1]{\mathbf{#1}}        

\newcommand{\parenv}[1]{\left( #1 \right)}

\newcommand{\anglenv}[1]{\left< #1 \right>}

\begin{document}
\title{Convertible Polynomial Evaluation Codes in the Merge Regime: A Skew-Polynomial Framework}
\author{
    Songping~Ge,
    Han~Cai,~\IEEEmembership{Member,~IEEE},
    and~Xiaohu~Tang,~\IEEEmembership{Fellow,~IEEE}
    \thanks{Songping Ge and Xiaohu Tang are with the School of Mathematics, Southwest Jiaotong University, Chengdu, 610031, China. Xiaohu Tang is also with  the Information Coding and Transmission Key Lab of Sichuan Province, CSNMT Int. Coop. Res. Centre (MoST), Southwest Jiaotong University, Chengdu, 610031, China  (e-mail: gesongping@my.swjtu.edu.cn; xhutang@swjtu.edu.cn).}
    \thanks{Han Cai is with the Information Coding and Transmission Key Lab of Sichuan Province, CSNMT Int. Coop. Res. Centre (MoST), Southwest Jiaotong University, Chengdu, 610031, China (e-mail: hancai@aliyun.com).}
    \thanks{Parts of this work have been accepted for presentation at 2026 IEEE International Symposium on Information Theory \cite{ge2026convertible}.}
}
\maketitle

\begin{abstract}
In this paper, we develop an algebraic framework for merge-regime convertible codes that works directly with the polynomial-evaluation structure of the underlying codes. In the skew polynomial ring $\mathbb{F}_{q^m}[x;x^q,0]$, congruence relations do not in general translate directly into the pointwise evaluation identities required for code conversion, since skew-polynomial evaluation is not multiplicative. To overcome this difficulty, we characterize the minimal skew polynomials associated with unions of conjugacy classes, establish an evaluation-compatible product rule, and derive a special Chinese Remainder Theorem (sCRT) tailored to these evaluation sets. Based on this machinery, we propose two conversion templates for skew polynomial evaluation codes (PECs). The first applies to distinct initial PECs, whose different evaluation sets naturally provide the moduli required by sCRT. The second treats identical initial PECs, for which distinct auxiliary evaluation sets and explicit algebraic compatibility conditions are introduced to preserve unchanged symbols and to generate written symbols from designated read symbols. When instantiated with the standard basis $1,x,\ldots,x^{k-1}$, both constructions yield merge conversions for linearized Reed--Solomon codes whose final codes are equivalent to linearized Reed--Solomon codes and achieve per-symbol access-optimal cost. The framework further specializes to the commutative ring $\mathbb{F}_q[x]$, yielding corresponding conversion constructions for ordinary PECs and recovering the known polynomial-form constructions for Reed--Solomon and Tamo--Barg codes as special cases. As a further application, to the best of our knowledge, this specialization gives the first merge-regime convertible construction with Gabidulin initial codes and a final code equivalent to a Gabidulin code, while attaining per-symbol optimal access under the symbol-access model considered in this paper.
\end{abstract}
\begin{IEEEkeywords}
Convertible code, merge regime, polynomial evaluation code, skew polynomial
\end{IEEEkeywords}

\section{introduction}\label{sec-intro}

The widespread adoption of large-scale cloud storage and distributed file systems, such as Azure Storage \cite{huang2012Erasure} and the Google File System \cite{ghemawat2003Google}, has made disk failures a routine part of operations rather than a rare exception. Erasure codes with fixed code rates are used to ensure data reliability. 
However, the failure rate of devices may change during their lifetime. Thus, usually we design codes for the worst-case scenarios, which may result in a large amount of redundancy for periods with low failure rates. To overcome this problem and achieve more economical storage systems, in \cite{kadekodi2019Cluster},
Kadekodi et al. proposed a new  framework capable of adjusting code rates and redundancy on the fly based on changes in device failure rates, which helps maximize storage efficiency throughout the device lifetime. This process is known as \textit{code conversion} \cite{maturana2022Convertible}, i.e., the process converts an \textit{initial code} into a \textit{final code}. Later, Maturana and Rashmi \cite{maturana2022Convertible} introduced \textit{convertible codes} to enable efficient conversions between initial and final codes. In \cite{kong2024Locally}, Kong extended this scheme to the codes with locality. In \cite{ge2026mds}, Ge et al. generalized the definition of convertible codes by allowing initial codes and final codes to be distinct codes.

However, code conversion may incur a large access cost, which is defined as the total number of symbols read from the
initial code plus the number of new symbols generated from those reads during the conversion process. Thus, for convertible codes,
one of the main objectives is to reduce the access cost consumed in conversion. On one hand, \cite{maturana2022Convertible,ge2026mds, shi2026Bounds,wang2026Lower} proposed theoretical bounds on the access cost for convertible codes. On the other hand, explicit constructions are proposed to achieve the bounds with equality: Code conversion on maximum distance separable (MDS) codes \cite{maturana2022Convertible, maturana2020Accessoptimal, chopra2024Low, ge2026mds}; Code conversion on locally repairable codes \cite{maturana2023Locally, kong2024Locally, ge2026Locally, shi2026Bounds}; Code conversion on Reed-Muller codes \cite{Gruica2026Linear}; Code conversion on vector codes \cite{maturana2022Bandwidth, maturana2023Bandwidth, wang2026Lower, singhvi2025Tight}.

From the known bounds on the access cost for code conversion on MDS codes, each new symbol in the final code must be generated by reading at least one symbol from each initial code. In the literature, the special case where each new symbol is generated by reading exactly one symbol from each initial code is referred to as \textit{per-symbol access-optimal MDS convertible
code}~\cite{ramkumar2026MDS}.

Existing explicit constructions of convertible codes can be broadly viewed from two main perspectives. One line of work develops conversions based on generator matrices \cite{maturana2022Convertible,maturana2020Accessoptimal,chopra2024Low,Gruica2026Linear}, parity-check matrices \cite{ge2026mds,ge2026Locally,shi2026Bounds} or rational function fields \cite{shi2026Bounds}. Another line exploits polynomial representations of structured codes, with existing constructions mainly focusing on Reed--Solomon and Tamo--Barg codes \cite{kong2024Locally}. For Reed--Solomon and Tamo--Barg codes, the existing constructions impose different algebraic conditions on their evaluation vectors, such as span-preserving and structured transformation conditions. Since both code families admit a common polynomial-evaluation representation, a natural question is: can these polynomial-based constructions be unified and extended to general polynomial evaluation codes (PECs)?

A natural way to pursue this idea is through the Chinese Remainder Theorem (CRT). In the commutative setting, one may associate different evaluation sets with suitable annihilating polynomials and combine the message polynomials of the initial codewords through congruence relations. However, extending this approach to skew polynomial evaluation is not straightforward. In general, skew-polynomial evaluation is not multiplicative, i.e., $(fg)(a)$ need not equal $f(a)g(a)$. Consequently, a polynomial congruence modulo an annihilating polynomial does not by itself guarantee the pointwise evaluation identities needed to preserve unchanged symbols or to generate written symbols during conversion. The main algebraic challenge is therefore to identify moduli associated with the evaluation sets for which both CRT-type congruences and the required evaluation behavior can be controlled.

To address these issues, our main contributions are summarized as follows.

\begin{itemize}

\item \textbf{Algebraic foundation for CRT-based conversion over skew polynomials.}
We study the skew polynomial ring $\mathbb{F}_{q^m}[x;x^q,0]$ and characterize the minimal skew polynomials associated with individual conjugacy classes and unions of conjugacy classes. In particular, these polynomials have a commutative structure that allows us to establish an evaluation-compatible product rule for the corresponding factors. Based on these properties, we derive a special Chinese Remainder Theorem (sCRT) whose moduli are determined by the evaluation sets. This provides the algebraic mechanism that connects polynomial congruences with the pointwise evaluations required in code conversion.

\item \textbf{Two conversion constructions for skew polynomial evaluation codes.}
Using the above sCRT machinery, we develop two merge-regime conversion constructions for skew polynomial evaluation codes (PECs). The first construction treats distinct initial skew PECs, where the different evaluation sets naturally provide distinct minimal skew polynomials that can serve as the moduli in the sCRT construction. The second construction treats identical initial skew PECs, where such distinct moduli are not directly available. To handle this case, we introduce auxiliary evaluation sets together with two explicit algebraic compatibility conditions, C1 and C2. These conditions ensure that the unchanged symbols of the identical initial codes can be consistently represented in the final code and that each written symbol can be generated from the prescribed read symbols.

\item \textbf{Per-symbol access-optimal conversions for linearized Reed--Solomon codes.}
When the polynomial basis is chosen as
$1,x,\ldots,x^{k-1}$, the two skew-PEC constructions specialize to linearized Reed--Solomon codes. We obtain conversions for both distinct and identical initial linearized Reed--Solomon codes in which the final code is equivalent to a linearized Reed--Solomon code. Moreover, the resulting conversions attain the per-symbol access lower bound: each written symbol is generated by reading exactly one symbol from each initial codeword.

\item \textbf{Commutative specialization and a Gabidulin-code application.}
When the skew polynomial framework specializes to the commutative ring $\mathbb{F}_q[x]$, the sCRT reduces to the standard CRT and the two skew-PEC constructions yield corresponding conversion constructions for ordinary PECs. This viewpoint recovers the known polynomial-form constructions for Reed--Solomon and Tamo--Barg codes as special cases of the same framework. More importantly, by choosing the polynomial basis associated with Gabidulin codes, we obtain, to the best of our knowledge, the first merge-regime convertible construction whose initial codes are Gabidulin codes and whose final code is equivalent to a Gabidulin code. The construction also achieves per-symbol optimal access under the symbol-access model considered in this paper.
\end{itemize}

Our objective is therefore structural rather than a field-size improvement for generic MDS convertible codes: the main gain is a common polynomial-evaluation mechanism that preserves or recovers specific algebraic code families. The parameter and field-size restrictions in Table~\ref{compare mds}, including the divisibility requirements of the skew constructions, should be viewed as the cost of this structured framework rather than as improvements over the smallest known fields for unrestricted MDS conversion.

We list our main results together with representative known MDS merge-conversion constructions in Table~\ref{compare mds}, where each initial code $\mathcal{C}_{I_i}$ is an $[n_{I_i},k_{I_i}]_q$ MDS code for $i\in[t]$, and the final code $\mathcal{C}_{F}$ is an $[n_{F},k_{F}]$ MDS code. Denote $r_{I_i}\triangleq n_{I_i}-k_{I_i}$ and $r_{F}\triangleq n_{F}-k_{F}$.

The paper is organized as follows. Section~\ref{sec-pre} reviews irregular convertible codes and the access-cost model. Section~\ref{sec-CRT} develops the evaluation-compatible sCRT machinery for $\mathbb{F}_{q^m}[x;x^q,0]$. Section~\ref{sec-conversion LRS} gives the two skew-PEC conversion templates, Constructions~\ref{con C} and~\ref{con D}, and specializes them to linearized Reed--Solomon codes. Section~\ref{sec-degraded} specializes the framework to $\mathbb{F}_q[x]$, yielding Constructions~\ref{con A} and~\ref{con B}, and then derives the Reed--Solomon, Tamo--Barg, and Gabidulin consequences. Section~\ref{sec con} concludes the paper and discusses limitations and open directions.

\renewcommand{\arraystretch}{1.5}
\begin{table}[t!]
    \label{compare mds}
    \centering
    \caption{Some explicit known $(t,1)_{|\mathbb{F}|}$ access-optimal irregular merge-convertible codes for MDS codes to an MDS code}
    \resizebox{0.8\textwidth}{!}{
    \begin{tabular}{|c|c|c|c|c|}
    \hline
     \textbf{Construction} & \textbf{Type} & \textbf{Restrictions}  & \textbf{Field size $|\mathbb{F}|$ requirement} & \textbf{Reference} \\
    \hline
    \hline
     \multirow{2}{*}{General} & \multirow{2}{*}{generator matrix} & \multirow{2}{*}{\makecell{$\mathcal{C}_{I_1}=\mathcal{C}_{I_2}=\cdots=\mathcal{C}_{I_t}$,\\ $r_F\le\min\{k_{I_1},r_{I_1}\}$}} &  \multirow{2}{*}{$|\mathbb{F}|\ge \max\left\{ \makecell{2^{O((n_F)^3)},\\ n_{I_1}-1} \right\}$} & \multirow{2}{*}{\cite[Theorem 21]{maturana2022Convertible}}  \\
    & & & &\\
    \hline
    \multirow{2}{*}{Hankel-I} & \multirow{2}{*}{generator matrix} & \multirow{2}{*}{\makecell{$\mathcal{C}_{I_1}=\mathcal{C}_{I_2}=\cdots=\mathcal{C}_{I_t}$,\\ $r_F\le \lfloor r_{I_1}/t\rfloor$}} &  \multirow{2}{*}{$|\mathbb{F}|\ge \max\{n_F,n_{I_1}\}-1$} &  \multirow{2}{*}{\cite[Example 22]{maturana2022Convertible}} \\
    & & & &\\
    \hline
     \multirow{2}{*}{Hankel-II} & \multirow{2}{*}{generator matrix} & \multirow{2}{*}{\makecell{$\mathcal{C}_{I_1}=\mathcal{C}_{I_2}=\cdots=\mathcal{C}_{I_t}$,\\ $r_F\le r_{I_1}-t+1$}} &  \multirow{2}{*}{$|\mathbb{F}|\ge \max\{ k_{I_1}r_{I_1},n_{I_1}-1\}$} & \multirow{2}{*}{\cite[Example 23]{maturana2022Convertible}}  \\
    & & & &\\
    \hline
     \multirow{4}{*}{\makecell{Hankel$_s$ for\\ $t\le s\le r_{I_1}$}} & \multirow{4}{*}{generator matrix} & \multirow{4}{*}{\makecell{$\mathcal{C}_{I_1}=\mathcal{C}_{I_2}=\cdots=\mathcal{C}_{I_t}$,\\ $r_F\le(s-t+1)\lfloor r_{I_1}/s \rfloor$\\ $+\max\{(r_{I_1}\mod s)-t+1,0\}$}} & \multirow{4}{*}{$|\mathbb{F}|\ge \max\left\{ \makecell{sk_{I_1}+\lfloor r_{I_1}/s \rfloor-1,\\ n_{I_1}-1} \right\}$} & \multirow{4}{*}{\cite[Theorem 24]{maturana2022Convertible}} \\
    & & & &\\
    & & & &\\
    & & & &\\
    \hline
    \multirow{3}{*}{\makecell{General \\ $k \times r$ Super-Regular \\ Vandermonde Matrix}} & \multirow{3}{*}{generator matrix} & \multirow{3}{*}{\makecell{$\mathcal{C}_{I_1}=\mathcal{C}_{I_2}=\cdots=\mathcal{C}_{I_t}$,\\ $k_F \le k$,\\ $r_{I_1}=r_F\le r$}} & \multirow{3}{*}{\makecell{$|\mathbb{F}|\ge \max\left\{ \makecell{2^{O(r_F \log k_F)},\\ n_{I_1}-1} \right\}$}} & \multirow{3}{*}{\cite[Theorem 3]{chopra2024Low}}  \\
    & & & &\\
    & & & &\\
    \hline
    \multirow{3}{*}{\makecell{$k \times 3$ Super-Regular \\ Vandermonde Matrix}} & \multirow{3}{*}{generator matrix} & \multirow{3}{*}{\makecell{$\mathcal{C}_{I_1}=\mathcal{C}_{I_2}=\cdots=\mathcal{C}_{I_t}$,\\ $k_F \le k$,\\ $r_{I_1}=r_F\le 3$}} & \multirow{3}{*}{\makecell{$|\mathbb{F}|=p^w$,\\ $w\ge k$}} & \multirow{3}{*}{\cite[Theorem 4]{chopra2024Low}}  \\
     && & &\\
     && & &\\
    \hline
    \multirow{3}{*}{\makecell{$k \times 3$ Super-Regular \\ Vandermonde Matrix\\ for Finite Fields of Characteristic $2$}} & \multirow{3}{*}{generator matrix} & \multirow{3}{*}{\makecell{$\mathcal{C}_{I_1}=\mathcal{C}_{I_2}=\cdots=\mathcal{C}_{I_t}$,\\ $k_F \le k$,\\ $r_{I_1}=r_F\le 3$}} & \multirow{3}{*}{$|\mathbb{F}|=2^w \ge k$} & \multirow{3}{*}{\cite[Theorem 5]{chopra2024Low}}  \\
     && & &\\
     && & &\\
    \hline
     \multirow{2}{*}{GRS} & \multirow{2}{*}{polynomial form} & \multirow{2}{*}{\makecell{$\mathcal{C}_{I_1}=\mathcal{C}_{I_2}=\cdots=\mathcal{C}_{I_t}$,\\ $r_{F}\le\min\{k_{I_1},r_{I_1}\}$}} & \multirow{2}{*}{\makecell{$|\mathbb{F}|\ge (t+1)\max\{k_{I_1},r_{I_1}\}+1$,\\ $\max\{k_{I_1},r_{I_1}\}\mid(q-1)$}} & \multirow{2}{*}{\cite[Corollary II.2]{kong2024Locally}}  \\
     && & &\\
    \hline
    Extended GRS & parity-check matrix &  / & $|\mathbb{F}|\ge \max\{n_{I_1},\cdots,n_{I_{t}},n_{F}\}-1$ &  \cite[Theorem 6]{ge2026mds}  \\
    \hline
    Cauchy Matrix &  parity-check matrix & $\mathcal{C}_{I_1}, \dots, \mathcal{C}_{I_t}$ are $t$ distinct codes & $|\mathbb{F}|\ge k_{F} +2\max\{r_{I_1},\cdots,r_{I_{t}}, r_{F}\}$ & \cite[Theorem 7]{ge2026mds}  \\
    \hline
    \multirow{2}{*}{Rational Function Field} & \multirow{2}{*}{Function-field form} & \multirow{2}{*}{$t\le r_F$} & \multirow{2}{*}{\makecell{$|\mathbb{F}|\ge \max\{n_{I_1},\cdots,n_{I_{t}},n_{F}\}-1$,\\ exists a subgroup of order $r_F$ in $\mathrm{PGL_2}(|\mathbb{F}|)$\tablefootnote{$\mathrm{PGL_2(q)}$ is the projective general linear group, i.e., $\mathrm{PGL_2(q)}=\mathrm{GL}_2(q)/Z(\mathrm{GL}_2(q))$, where $Z(\mathrm{GL}_2(q))=\{u\mathbf{I}_2:u\in\F_q^*\}$ and $\mathrm{GL}_2(q)$ is the general linear group of $2\times2$ invertible matrices over $\F_q$. For more details, please refer to \cite{shi2026Bounds}.} }} & \multirow{2}{*}{\cite[Theorem IV.1]{shi2026Bounds}} \\
    && & &\\
    \hline
    \multirow{3}{*}{Linearized Reed-Solomon Code} & \multirow{3}{*}{polynomial form} & \multirow{3}{*}{\makecell{$\mathcal{C}_{I_1}, \dots, \mathcal{C}_{I_t}$ are $t$ distinct $[n_{I_1},k_{I_1}]$ codes,\\ $m\mid n_{I_1}, n_{F}, k_{I_1}$,\\ $r_{F}\le\min\{k_{I_1},r_{I_1}\}$}} & \multirow{3}{*}{\makecell{$|\mathbb{F}|\ge \left( \frac{tk_{I_1}+r_{I_1}}{m}+1 \right)^m$}} & \multirow{3}{*}{Corollaries~\ref{cor-con C}}  \\
    & & & &\\
    & & & &\\
    \hline
    \multirow{3}{*}{Linearized Reed-Solomon Code} & \multirow{3}{*}{polynomial form} & \multirow{3}{*}{\makecell{$\mathcal{C}_{I_1}=\mathcal{C}_{I_2}=\cdots=\mathcal{C}_{I_t}$, \\ $m\mid n_{I_1}, n_{F}, k_{I_1}$,\\ $r_{F}\le\min\{k_{I_1},r_{I_1}\}$}} & \multirow{3}{*}{\makecell{$|\mathbb{F}|\ge \left( \frac{tk_{I_1}+r_{I_1}}{m}+1 \right)^m$}} & \multirow{3}{*}{Corollaries~\ref{cor5}}  \\
    & & & &\\
    & & & &\\
    \hline
    \multirow{2}{*}{Gabidulin Code} & \multirow{2}{*}{polynomial form} & \multirow{2}{*}{\makecell{$\mathcal{C}_{I_1}=\mathcal{C}_{I_2}=\cdots=\mathcal{C}_{I_t}$,\\ $r_{F}\le\min\{k_{I_1},r_{I_1}\}$}} & \multirow{2}{*}{$|\mathbb{F}|\ge q_1^{(k_{I_1}+1)(m+1)}, k_{I_1}+t-2\ge m> t$} & \multirow{2}{*}{Corollary \ref{thm6}} \\
    & & & &\\
    \hline
    \end{tabular}
   }
\end{table}

\section{preliminaries}\label{sec-pre}

We first introduce some notation and definitions.
For any $a\in\N$, we denote $[a]\triangleq \{1,2,\dots,a\}$.
Let $\F$ be the finite field. For any prime power $q$, we denote $\F_q$ be the finite field with $q$ elements and $\F_q^*\triangleq \F_q\setminus\{0\}$. The norm for $\alpha$ in a finite field $\F_{q^m}$ defined as $\mathrm{N}_{\F_{q^m}/\F_q}(\alpha)=\alpha^{1+q+\cdots+q^{m-1}}$.
An $[n,k]_q$ linear code $\cC$ over $\F_q$ is a $k$-dimensional subspace of $\F_q^n$ with $k\times n$ generator matrix $\mb{G}$ such that $\cC= \{\mb{mG}:\mathbf{m}\in\F_q^k\}$. An $[n,k]_q$ linear code $\cC$ is called an MDS code if and only if the minimum Hamming distance of $\cC$ is $n-k+1$. We say a code $\mathcal{C}_1$ with a generator matrix $\mathbf{G}_1$ is equivalent to a code $\mathcal{C}_2$ with a generator matrix $\mathbf{G}_2$, if $\mathbf{G}_1=\mathbf{G}_2\mathbf{D}$ for some invertible diagonal matrix $\mathbf{D}$. For a matrix $\mb{M} \in \F_q^{m\times n}$, denote $\anglenv{\mb{M}}$ as the vector space spanned by the column vectors of $\mb{M}$ over $\F_q$. For a vector $\mb{v}=(v_1,v_2,\dots,v_n)$, we denote $(v_i)_{i\in[n]} \triangleq \mb{v}$. Let $\top$ be the transpose operator.

\subsection{Convertible codes}

We recall irregular convertible codes~\cite{ge2026mds} in the merge regime, which generalized the framework for code conversions proposed by \cite{maturana2022Convertible} that concentrates on the identical initial codes.

\begin{definition}[Irregular convertible code in the merge regime]\label{def cc}
Let $t>1$ be a positive integer. A $(t,1)_q$ irregular convertible code $\sC$ over $\F_q$ consists of: (1) $t$ initial codes $\cC_{I_1},\dots,\cC_{I_{t}}$ and a final codes $\cC_{F}$, where $\cC_{I_i}$ is an $[n_{I_i},k_{I_i}]_q$ code for $i\in[t]$, and $\cC_{F}$ is an $[n_{F},k_{F}]_q$ code, satisfying $\sum_{i=1}^{t}k_{I_i} = k_{F}$; (2) an irregular conversion procedure $\varphi$ which is an injective map as $\varphi: \prod_{i=1}^{t} \cC_{I_i} \to \cC_{F}$, i.e., $\varphi( (c_{\mb{m}_1},\dots,c_{\mb{m}_{t}}) )
= c_{\mb{m}}$, where $c_{\mb{m}_i} \in \cC_{I_i}$ is encoded from message $\mb{m}_i \in \F_q^{k_{I_i}}$ for $i\in[t]$, and $c_{\mb{m}} \in \cC_{F}$ is encoded from message $\mb{m} = (\mb{m}_1,\dots,\mb{m}_{t}) \in \F_q^{k_{F}}$.
\end{definition}

The code symbols in the initial codes are classified into three categories: (1) \textit{unchanged symbols} are code symbols in the initial codes that remain in the final codes; (2) \textit{read symbols} are code symbols in the initial codes that participate in the conversion process; (3) \textit{retired symbols} are code symbols in the initial codes that are discarded after the conversion process. The code symbols in the final codes are classified into two categories: (1) \textit{unchanged symbols} are same as above; (2) \textit{written symbols} are code symbols in the final codes that are not unchanged symbols.

\begin{definition}
The \textit{read access cost} of an irregular convertible code is defined as the total number of read symbols, denoted as $\rho_r$. The \textit{write access cost} of an irregular convertible code is defined as the total number of written symbols, denoted as $\rho_w$. The \textit{access cost} of an irregular convertible code is the sum of its read and write access costs, denoted as $\rho \triangleq \rho_r+\rho_w$.
\end{definition}

The lower bound on access cost for \textit{MDS irregular convertible codes} in the merge regime, i.e., the initial and final codes are MDS codes, is as follows.

\begin{theorem}[\cite{ge2026mds}]\label{thm bound}
Let $\sC$ be a $(t,1)_q$ MDS irregular merge-convertible code, then the write access cost $\rho_w$ satisfies $\rho_w\ge r_F$ and the read access cost $\rho_r$ satisfies
\begin{equation*}
\rho_r \ge \sum_{ i\in[t], r_{F}\le \min\{k_{I_i},r_{I_i}\} }  r_{F} + \sum_{ i\in[t], r_{F}> \min\{k_{I_i},r_{I_i}\} }k_{I_i},
\end{equation*}
where $r_{F}=n_{F}-k_{F}$ and $r_{I_i}=n_{I_i}-k_{I_i}$ for $i\in[t]$.
\end{theorem}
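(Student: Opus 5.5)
We argue each bound separately, using only the MDS property of the initial and final codes and the fact that $\varphi$ is injective and must reproduce every message. Throughout, let $U$ denote the set of unchanged symbols, so $|U|=n_F-\rho_w$. Write $U_i$ for the unchanged symbols coming from $\cC_{I_i}$, and $D_i$ for the read symbols of $\cC_{I_i}$.

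\emph{Write cost.} The unchanged symbols are determined by the initial codewords, and each unchanged symbol is a coordinate of one $\cC_{I_i}$. We claim that any $k_{I_i}+1$ symbols of an MDS codeword of $\cC_{I_i}$ are linearly dependent as functionals of $\mb m_i$. Hence at most $k_{I_i}$ unchanged symbols can come from $\cC_{I_i}$: otherwise the final code would contain $k_F+1$ coordinates that are not independent functionals of $\mb m$, whereas any $k_F+1\le n_F$ coordinates of an MDS $[n_F,k_F]$ code are in general position. A cleaner way to see this: the $k_{I_i}+1$ such coordinates, together with any $k_F-k_{I_i}$ further coordinates, form $k_F+1$ coordinates. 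These must span the full message space, with the only dependency being the MDS one; but the dependency among the $k_{I_i}+1$ coordinates already gives a codeword of the dual supported on them, of weight $\le k_{I_i}+1<k_F+1$, which contradicts the dual MDS distance $k_F+1$. Therefore $|U|\le\sum_i k_{I_i}=k_F$, and so $\rho_w=n_F-|U|\ge r_F$.

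\emph{Read cost, per initial code.} Fix $i$ and set $X=\mb m_i$. Since the final code is MDS, any $k_F$ of its symbols determine $\mb m$. Choose $k_F$ final symbols as follows: all symbols in $\bigcup_{j\ne i}U_j$ (at most $\sum_{j\ne i}k_{I_j}$ of them), together with $|U_i|$ symbols from $U_i$, and then pad with written symbols; this is possible because $\rho_w\ge r_F$. The written symbols depend on $\mb m_i$ only through $D_i$. Hence $\mb m_i$ is a function of the symbols in $U_i\cup D_i$, possibly overlapping, together with data independent of $\mb m_i$. This gives $|U_i\cup D_i|\ge k_{I_i}$. If $|D_i|<k_{I_i}$, then $|U_i\setminus D_i|\ge 1$ and the following refinement applies.

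\emph{Refinement (main obstacle).} Choose instead a set of $k_F$ final symbols that includes as many written symbols as possible, namely $\min\{\rho_w,k_F\}\ge\min\{r_F,k_F\}$, and drop $U_i$ symbols first. Then $\mb m_i$ must be recoverable from $(k_{I_i}-r_F)^+$ unchanged symbols together with $D_i$. By the MDS property of $\cC_{I_i}$, the symbols in $D_i\setminus U_i$ contribute rank at most $|D_i|$, which forces $|D_i|\ge\min\{k_{I_i},r_F\}$. Separately, if $r_F>r_{I_i}$, retired plus unchanged symbols cannot suffice. Counting $|U_i|\le k_{I_i}$, the $n_{I_i}$ symbols of $\cC_{I_i}$ must supply written information of rank $\min\{r_F,k_{I_i}\}$. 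When $r_F>r_{I_i}$, only $r_{I_i}$ non-message-determining redundancy is available, so the read symbols must already determine all of $\mb m_i$, forcing $|D_i|\ge k_{I_i}$. This last step is the delicate one. The plan is to use a dual-codeword argument: a nonzero dual codeword of $\cC_F$ supported on $U$ together with $r_F$ written symbols, restricted to the coordinates of $\cC_{I_i}$, must be a dual codeword of $\cC_{I_i}$ unless it is cancelled by reads. Bounding its weight by $r_{I_i}+1$ yields the threshold $r_F\le\min\{k_{I_i},r_{I_i}\}$. Finally, summing $|D_i|$ over $i$ gives the stated bound on $\rho_r$.
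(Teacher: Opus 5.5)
The paper does not prove this theorem; it imports it from \cite{ge2026mds}, so I assess your argument on its own terms. Two parts are sound. The write-cost bound is correct in its dual-distance form: $k_{I_i}+1$ unchanged symbols from $\cC_{I_i}$ would give a nonzero dual codeword of $\cC_F$ of weight at most $k_{I_i}+1\le k_F$, which is impossible. Your first phrasing about ``$k_F+1$ coordinates that are not independent'' is vacuous and should be dropped. The refinement giving $|D_i|\ge\min\{k_{I_i},r_F\}$ is also correct. It does not even need the MDS property of $\cC_{I_i}$, only that $k_F$ final symbols determine $\mb m$ and that $\mb m_i$ is independent of the other messages.

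The genuine gap is the case $r_{I_i}<r_F\le k_{I_i}$. There the theorem demands $|D_i|\ge k_{I_i}$, but your bound only yields $r_F$. For this case you give a heuristic (``only $r_{I_i}$ non-message-determining redundancy is available'') and a plan that cannot work as written. Dual codewords of the MDS code $\cC_{I_i}$ have weight at least $k_{I_i}+1$, so there is no dual object of weight $\le r_{I_i}+1$ to bound. The clause ``unless it is cancelled by reads'' is also not a checkable step. The number $r_{I_i}+1$ is the minimum distance of $\cC_{I_i}$ itself. The missing idea is to push a \emph{primal} low-weight initial codeword through $\varphi$.

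Here is how to close the gap. Suppose $|D_i|\le k_{I_i}-1$, and pick $Z\supseteq D_i$ with $|Z|=k_{I_i}-1$. A nonzero $c_i\in\cC_{I_i}$ vanishing on $Z$ exists, and $|\mathrm{supp}(c_i)|\le n_{I_i}-|Z|=r_{I_i}+1$. Apply the linear map $\varphi$ to $(0,\dots,0,c_i,0,\dots,0)$.
\begin{itemize}
\item All read symbols are $0$, so all written symbols are $0$.
\item The unchanged symbols coming from $\cC_{I_j}$, $j\ne i$, are $0$.
\item Hence the image is supported in $U_i\cap\mathrm{supp}(c_i)$, and it is nonzero by injectivity.
\end{itemize}
Its weight is at most $r_{I_i}+1\le r_F<r_F+1=d(\cC_F)$, a contradiction. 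Therefore $|D_i|\ge k_{I_i}$ whenever $r_F>r_{I_i}$.

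The same device also recovers your $\min\{k_{I_i},r_F\}$ bound in one stroke. Choose $Z$ to contain $D_i$ and as much of $U_i$ as possible; using $|U_i|\le k_{I_i}$ from your write-cost step, the image has weight at most $|D_i|+1$. With this step added, summing over the disjoint read sets $D_i$ gives the stated bound on $\rho_r$.
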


We recall the definition of per-symbol access-optimal MDS convertible codes~\cite{ramkumar2026MDS}.

\begin{definition}[Per-symbol access-optimal MDS convertible code~\cite{ramkumar2026MDS}]\label{def per}
Let $r_F\le\min\{k_{I_i},r_{I_i}\}$ for $i\in[t]$. An access-optimal $(t,1)_q$ MDS irregular convertible code is said to be per-symbol access-optimal if each of the $r_F$ written symbols in the final codeword can be computed independently by reading exactly one symbol from each of the $t$ initial codewords.
\end{definition}

\section{Evaluation-Compatible sCRT for $\mathbb{F}_{q^m}[x;x^q,0]$}\label{sec-CRT}

This section develops the algebraic mechanism that makes a CRT-based conversion possible in the skew setting. In a commutative polynomial ring, CRT congruences can be evaluated using the ordinary product rule. For skew polynomials, the corresponding implication is not automatic because evaluation is generally not multiplicative. Our goal is therefore not to establish a CRT for arbitrary skew rings, but to identify a family of moduli tied to the evaluation sets for which congruence and pointwise evaluation interact in the form needed by code conversion.

We proceed in three steps before stating the sCRT in Subsection~\ref{subsec-CRT}. Subsection~\ref{subsec-skew} reviews skew polynomial rings and evaluation. Subsection~\ref{subsec-conjugacy} identifies, through conjugacy classes, a commutative annihilating polynomial and proves the product rule used later to translate polynomial products into pointwise evaluations. Subsection~\ref{subsec-minimal} shows that these annihilating polynomials are precisely the minimal skew polynomials of the relevant conjugacy classes and extends the characterization to unions of classes. These results provide evaluation-set-specific moduli for the sCRT and explain why the theorem can be used directly in the conversion constructions of Section~\ref{sec-conversion LRS}.

\subsection{Skew polynomial ring}\label{subsec-skew}

In this subsection, we introduce the definition of the skew polynomial ring and propose some basic properties for commutativity and evaluation, which will be widely used in the subsequent subsections.

\begin{definition}
An endomorphism $\sigma$ of $\mathbb{F}$ is a map $\sigma:\F\to\F$ such that $\sigma(a+b)=\sigma(a)+\sigma(b)$ and $\sigma(ab)=\sigma(a)\sigma(b)$ for all $a,b\in\F$. A $\sigma$-derivation $\delta$ is a map $\delta:\F\to\F$ such that $\delta(a+b)=\delta(a)+\delta(b)$ and $\delta(ab)=\sigma(a)\delta(b)+\delta(a)b$ for all $a,b\in\F$.
\end{definition}

\begin{definition}[Skew polynomial ring~\cite{ore1933theory}]
Let $\sigma$ be an endomorphism of $\F$ and $\delta$ be a $\sigma$-derivation. The skew polynomial ring $\F[x;\sigma,\delta]$ is a non-commutative ring of skew polynomials as
\begin{equation*}
    \F[x;\sigma,\delta] = \left\{ \sum_{i\in[n]} a_ix^i : n\in\mathbb{Z}, a_i\in\F \right\},
\end{equation*}
where the coefficients are written to the left of variable $x$ and  addition is component wise and multiplication is distributive subject to the rule, for all $\alpha\in\F$,
\begin{equation}\label{rule}
    x \cdot \alpha = \sigma(\alpha)x+\delta(\alpha).
\end{equation}
The degree of $f(x)=\sum_{i\in[n]}a_ix^i\in\F[x;\sigma,\delta]$ is defined by the maximum $i$ such that $a_i\ne0$, denoted as $\deg(f)$. In particular, define $\deg(0)=\infty$ \cite{ore1933theory}. 
\end{definition}

The definition implies that $\deg(fg)=\deg(f)+\deg(g)$ for any two skew polynomials $f,g\in\F[x;\sigma,\delta]$. 
The multiplication rule in \eqref{rule} shows that $\alpha\cdot x$ is not always equal to $x\cdot\alpha$, implying that $\F[x;\sigma,\delta]$ is a non-commutative ring.
However, in some special cases, skew polynomials may commute with others.

\begin{lemma}\label{lem1}
Let $\F[x;\sigma,\delta]=\F_{q^m}[x;x^q,0]$. Then, $x^m \cdot \alpha = \alpha \cdot x^m$ for all $\alpha\in\F_{q^m}$, and $x \cdot \alpha = \alpha \cdot x$ for all $\alpha\in\F_q$.
\end{lemma}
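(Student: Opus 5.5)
The plan is to work directly from the multiplication rule \eqref{rule}, specialized to $\sigma(\alpha)=\alpha^q$ and $\delta=0$. In that case the rule reads $x\cdot\alpha=\alpha^q x$ for every $\alpha\in\F_{q^m}$. I will prove both claims of Lemma~\ref{lem1} from this single identity, together with associativity of multiplication in the ring.

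The first step is to establish, by induction on $j\ge 1$, the commutation rule
\begin{equation*}
x^j\cdot\alpha=\alpha^{q^j}x^j .
\end{equation*}
The base case $j=1$ is exactly \eqref{rule}. For the inductive step I write $x^{j+1}\cdot\alpha=x\cdot(x^j\cdot\alpha)$ and substitute the inductive hypothesis to get $x\cdot(\alpha^{q^j}x^j)$. Applying \eqref{rule} once more to the element $\alpha^{q^j}$ gives $(\alpha^{q^j})^q x\cdot x^j=\alpha^{q^{j+1}}x^{j+1}$.

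The second step takes $j=m$, which gives $x^m\cdot\alpha=\alpha^{q^m}x^m$. Every $\alpha\in\F_{q^m}$ satisfies $\alpha^{q^m}=\alpha$, since the multiplicative group $\F_{q^m}^*$ has order $q^m-1$ and the case $\alpha=0$ is trivial. Hence $x^m\cdot\alpha=\alpha\cdot x^m$, which is the first claim.

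The third step is the second claim. For $\alpha\in\F_q$ we have $\alpha^q=\alpha$ for the same reason, so \eqref{rule} directly gives $x\cdot\alpha=\alpha\cdot x$. I do not expect any real obstacle here. The only point needing care is the bookkeeping in the inductive step, where associativity is used and the coefficient is kept on the left of the variable, as required by the convention of the skew polynomial ring.
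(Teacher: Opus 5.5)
Your proof is correct and follows the same route as the paper, which iterates \eqref{rule} to obtain $x^m\cdot\alpha=\sigma^m(\alpha)x^m=\alpha^{q^m}x^m=\alpha x^m$, and then uses $\alpha^q=\alpha$ on $\F_q$ for the second claim. Your induction on $x^j\cdot\alpha=\alpha^{q^j}x^j$ spells out the iteration that the paper leaves implicit.
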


\begin{IEEEproof}
By the multiplication rule in \eqref{rule}, $x^m\cdot \alpha= \sigma^m(\alpha)\cdot x^m=\alpha^{q^m}\cdot x^m= \alpha\cdot x^m$ for $\alpha\in\F_{q^m}$. Also, we have $x\cdot \alpha=\sigma(\alpha)\cdot x=\alpha^q\cdot x=\alpha\cdot x$ for $\alpha\in\F_{q}$.
\end{IEEEproof}

Recall that $\F[x;\sigma,\delta]$ are left and right Euclidean domains~\cite{ore1933theory}. The evaluation of skew polynomials is defined as follows.

\begin{definition}[Evaluation~\cite{lam1985general,gopi2022improved}]
The evaluation of a skew polynomial $f(x)\in\F[x;\sigma,\delta]$ at a point $\alpha\in\F$, denoted by $f(\alpha)$, is defined as the remainder obtained when divide $f$ by $x-\alpha$ on the right, i.e., $f(x)=q(x)(x-\alpha)+f(\alpha)$.
\end{definition}

The evaluations of monomials of form $x^i$ shown in \cite{gopi2022improved} is useful for our purposes.

\begin{definition}[Power functions~\cite{gopi2022improved}]\label{def power}
The power functions are defined inductively. For every $\alpha\in\F$, $\Phi_0(\alpha)=1$ and $\Phi_{i+1}(\alpha)=\sigma(\Phi_i(\alpha))\alpha +\delta(\Phi_i(\alpha))$.
\end{definition}

\begin{lemma}[\cite{gopi2022improved}]\label{lem2}
Let $f(x)=\sum_{i}a_ix^i\in \F[x;\sigma,\delta]$. Then $f(\alpha)=\sum_{i}a_i\Phi_i(\alpha)$.
\end{lemma}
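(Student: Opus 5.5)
By linearity of right division, it suffices to treat a single monomial $x^i$ and show that its remainder on right division by $x-\alpha$ is $\Phi_i(\alpha)$. Indeed, if $x^i = q_i(x)(x-\alpha) + \Phi_i(\alpha)$ for every $i$, then
\begin{equation*}
f(x)=\sum_i a_i x^i = \Big(\sum_i a_i q_i(x)\Big)(x-\alpha) + \sum_i a_i\Phi_i(\alpha).
\end{equation*}
Here left multiplication by the constants $a_i$ preserves the shape ``left multiple of $(x-\alpha)$ plus a constant.'' The remainder $\sum_i a_i\Phi_i(\alpha)$ has degree $0$ (or is zero), so it is smaller in degree than $x-\alpha$. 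Since right division in the left/right Euclidean domain $\F[x;\sigma,\delta]$ has a unique quotient and remainder, this remainder is $f(\alpha)$.

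The monomial claim I would prove by induction on $i$. For $i=0$, take $q_0=0$ and use $\Phi_0(\alpha)=1$. For the inductive step, assume $x^i=q_i(x)(x-\alpha)+\Phi_i(\alpha)$ and multiply on the left by $x$:
\begin{equation*}
x^{i+1}= x q_i(x)(x-\alpha) + x\cdot \Phi_i(\alpha).
\end{equation*}
Applying the commutation rule \eqref{rule} with $\beta=\Phi_i(\alpha)$ gives $x\cdot\beta=\sigma(\beta)x+\delta(\beta)$. Writing $x=(x-\alpha)+\alpha$, this becomes
\begin{equation*}
x\cdot\beta=\sigma(\beta)(x-\alpha)+\sigma(\beta)\alpha+\delta(\beta).
\end{equation*}
Hence
\begin{equation*}
x^{i+1}=\big(xq_i(x)+\sigma(\beta)\big)(x-\alpha)+\sigma(\beta)\alpha+\delta(\beta),
\end{equation*}
and the constant term is exactly $\Phi_{i+1}(\alpha)$ by Definition~\ref{def power}. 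This closes the induction.

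The only delicate point is the non-commutativity: the factor $(x-\alpha)$ must stay on the right throughout. The argument respects this because every multiplication is performed on the left, namely by $x$ in the inductive step and by $a_i$ in the linearity step. Uniqueness of the remainder then rests on the degree additivity $\deg(fg)=\deg(f)+\deg(g)$ noted earlier. If $q(x-\alpha)+c=q'(x-\alpha)+c'$ with $c,c'$ constants, then $(q-q')(x-\alpha)=c'-c$; the left side is either zero or of degree at least $1$, while the right side has degree at most $0$, so $q=q'$ and $c=c'$.
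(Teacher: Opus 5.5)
Your proof is correct: the induction showing that $x^{i}-\Phi_i(\alpha)$ is a left multiple of $x-\alpha$, followed by left-linearity and uniqueness of the right remainder, is sound. The paper gives no proof of its own and simply imports the lemma from \cite{gopi2022improved}; your argument is the standard one behind that cited result, so there is nothing to add.
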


The norm  $\mathrm{N}_{\F_{q^m}/\F_q}(\alpha)$ and the power functions have the following relationship.

\begin{lemma}\label{lem3}
If $f(x)=x^m\in\F_{q^m}[x;x^q,0]$, then $f(\alpha)=\mathrm{N}_{\F_{q^m}/\F_q}(\alpha)$ for all $\alpha\in\F_{q^m}$.
\end{lemma}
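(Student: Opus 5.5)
By Lemma~\ref{lem2}, evaluating $f(x)=x^m$ at $\alpha$ gives $f(\alpha)=\Phi_m(\alpha)$. It therefore suffices to compute the power functions of Definition~\ref{def power} explicitly in the ring $\F_{q^m}[x;x^q,0]$, where $\sigma(a)=a^q$ and $\delta=0$. In this setting the recursion reads $\Phi_0(\alpha)=1$ and $\Phi_{i+1}(\alpha)=\Phi_i(\alpha)^q\,\alpha$.

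The plan is to prove by induction on $i$ the closed form
\begin{equation*}
\Phi_i(\alpha)=\alpha^{1+q+\cdots+q^{i-1}}=\alpha^{\frac{q^i-1}{q-1}},\qquad i\ge 0,
\end{equation*}
where the empty sum gives $\Phi_0(\alpha)=\alpha^0=1$, matching the base case.

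For the inductive step, assume the formula holds for $i$. Applying the Frobenius map yields $\Phi_i(\alpha)^q=\alpha^{q+q^2+\cdots+q^{i}}$. Multiplying by $\alpha$ then gives $\alpha^{1+q+\cdots+q^{i}}$, which is the formula for $i+1$.

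Setting $i=m$, we get $f(\alpha)=\Phi_m(\alpha)=\alpha^{1+q+\cdots+q^{m-1}}=\mathrm{N}_{\F_{q^m}/\F_q}(\alpha)$, by the definition of the norm in Section~\ref{sec-pre}.

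There is no real obstacle. The only care needed is to apply the recursion in the order the definition prescribes, namely $\sigma(\Phi_i(\alpha))\alpha$, so that the Frobenius acts on the previous power function and not on $\alpha$. The case $\alpha=0$ is covered automatically: both sides equal $0$ for $m\ge1$.

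As a sanity check, one can instead use right division. Since $x^m-\alpha^{1+\cdots+q^{m-1}}$ factors as $g(x)(x-\alpha)$ with $g=\sum_{j=0}^{m-1}\beta_j x^{j}$ for suitable $\beta_j$, the remainder matches. The power-function route is the cleaner one, though.
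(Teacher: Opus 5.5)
Your proof is correct and takes the same route as the paper: reduce to $\Phi_m(\alpha)$ via Lemma~\ref{lem2}, then evaluate the power-function recursion with $\sigma(a)=a^q$ and $\delta=0$. You make the induction $\Phi_i(\alpha)=\alpha^{1+q+\cdots+q^{i-1}}$ explicit where the paper simply cites Definition~\ref{def power}; the closing right-division ``sanity check'' only restates the claim and can be dropped.
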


\begin{IEEEproof}
$f(\alpha)=\Phi_m(\alpha)$ by Lemma \ref{lem2}. Since $\delta=0$ and $\sigma=x^q$, we have $\Phi_m(\alpha)= \alpha^{1+q+\cdots+q^{m-1}}=\mathrm{N}_{\F_{q^m}/\F_q}(\alpha)$ by Definition \ref{def power}.
\end{IEEEproof}

\subsection{Conjugacy and Product Rule}\label{subsec-conjugacy}

The multiplication rule in \eqref{rule} implies the rule $(fg)(x)=f(x)g(x)$ is not always true over the skew polynomial ring. In this subsection, we show commutative skew polynomials by the conjugacy class in Lemma \ref{lem4}, which can deduce $(fg)(x)=f(x)g(x)$ in Lemma \ref{lem7}. At the beginning, we introduce the definition of the conjugacy class as follows.

\begin{definition}[Conjugacy~\cite{lam1985general,martinez-penas2018skew}]
Let $\alpha\in\F$ and  $D_\alpha^{\sigma,\delta}:\F\to\F$ is a $(\sigma,\delta)$-operator~\cite{leroy1995pseudo} such that
\begin{equation*}
    D_\alpha^{\sigma,\delta}(\beta) = \sigma(\beta)\alpha+\delta(\beta).
\end{equation*}  Then the conjugacy class of $\alpha$ is defined as
\begin{equation*}
    C(\alpha) = \{ D_\alpha^{\sigma,\delta}(\beta)\beta^{-1} : \beta\in\F^* \},
\end{equation*}
Denote $D_\alpha$ for simplicity when $\sigma,\delta$ are known from the context.
\end{definition}

The conjugacy relation is an equivalence relation in $\F$, and conjugacy classes give a partition of $\F$ as shown in \cite{martinez-penas2018skew}. We will now show a special annihilating skew polynomial for the conjugacy class $C(\alpha)$, which has the property of commutativity.

\begin{lemma}[Commutativity]\label{lem4}
Let $\F[x;\sigma,\delta]=\F_{q^m}[x;x^q,0]$. Then the skew polynomial $f(x)=x^m-\mathrm{N}_{\F_{q^m}/\F_q}(\alpha)$ satisfies 
\begin{itemize}
    \item [1)] $f(a)=0$ for all $a\in C(\alpha)$;
    \item [2)] $f(a)=\mathrm{N}_{\F_{q^m}/\F_q}(\alpha_1)-\mathrm{N}_{\F_{q^m}/\F_q}(\alpha)\in\mathbb{F}_{q}^{*}$ for all $a\in C(\alpha_1)$, where $\alpha_1\not\in C(\alpha)$;
    \item [3)] $f(x)g(x)=g(x)f(x)$ for any skew polynomial $g(x)\in\F_{q^m}[x;x^q,0]$.
\end{itemize}
\end{lemma}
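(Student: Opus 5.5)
The plan is to combine the explicit evaluation formula of Lemma~\ref{lem2} with the norm computation of Lemma~\ref{lem3}. The key observation is that the norm is constant on conjugacy classes.

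\emph{Step 1 (norm is a class invariant).} Take $a\in C(\alpha)$. Since $\sigma(\beta)=\beta^q$ and $\delta=0$, we can write $a=D_\alpha(\beta)\beta^{-1}=\beta^{q-1}\alpha$ for some $\beta\in\F_{q^m}^*$. The norm is multiplicative, and $\mathrm{N}_{\F_{q^m}/\F_q}(\beta^{q-1})=\beta^{q^m-1}=1$. Hence
\[
\mathrm{N}_{\F_{q^m}/\F_q}(a)=\mathrm{N}_{\F_{q^m}/\F_q}(\alpha).
\]

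\emph{Step 2 (items 1) and 2)).} Evaluation is linear in the coefficients by Lemma~\ref{lem2}, and Lemma~\ref{lem3} gives the value of $x^m$. Together they yield $f(a)=\mathrm{N}_{\F_{q^m}/\F_q}(a)-\mathrm{N}_{\F_{q^m}/\F_q}(\alpha)$ for every $a\in\F_{q^m}$.

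For $a\in C(\alpha)$, Step 1 gives $f(a)=0$, which is item 1).

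For $a\in C(\alpha_1)$, Step 1 applied to $\alpha_1$ gives $f(a)=\mathrm{N}_{\F_{q^m}/\F_q}(\alpha_1)-\mathrm{N}_{\F_{q^m}/\F_q}(\alpha)$. This lies in $\F_q$ because norms take values in $\F_q$. To prove it is nonzero, I need the converse of Step 1: equal norms force conjugacy. I will handle this as follows.

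\emph{Step 3 (converse; the main obstacle).} Assume $\alpha,\alpha_1\ne0$. If $\mathrm{N}(\alpha_1)=\mathrm{N}(\alpha)$, then $\mathrm{N}(\alpha_1\alpha^{-1})=1$. By Hilbert's Theorem~90 for the cyclic extension $\F_{q^m}/\F_q$, equivalently by the surjectivity of $\beta\mapsto\beta^{q-1}$ onto the kernel of the norm, which follows from a cardinality count in the cyclic group $\F_{q^m}^*$, there is $\beta$ with $\alpha_1\alpha^{-1}=\beta^{q-1}$. Thus $\alpha_1\in C(\alpha)$, a contradiction.

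The zero case is separate. Here $C(0)=\{0\}$ and $\mathrm{N}(0)=0$, while the norm of any nonzero element is nonzero. So the difference is again nonzero whenever exactly one of $\alpha,\alpha_1$ is $0$.

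\emph{Step 4 (item 3)).} Write $g=\sum_i g_ix^i$. By Lemma~\ref{lem1}, $x^m$ commutes with every coefficient $g_i\in\F_{q^m}$, and it clearly commutes with every power $x^i$. Hence $x^mg=gx^m$. Since $\mathrm{N}_{\F_{q^m}/\F_q}(\alpha)\in\F_q$, Lemma~\ref{lem1} shows that $x$ commutes with it, so this constant commutes with every $x^i$. It also commutes with the coefficients $g_i$ because $\F_{q^m}$ is commutative. By distributivity, $fg=gf$.
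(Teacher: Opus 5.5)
Your proposal is correct and follows essentially the same route as the paper. Both arguments compute $f(a)=\mathrm{N}_{\F_{q^m}/\F_q}(a)-\mathrm{N}_{\F_{q^m}/\F_q}(\alpha)$ via Lemmas~\ref{lem2} and~\ref{lem3}, get class-invariance of the norm from $a=\beta^{q-1}\alpha$ and $\mathrm{N}(\beta^{q-1})=\beta^{q^m-1}=1$, use the converse ``equal norms imply conjugate'' for item 2), and use Lemma~\ref{lem1} for item 3). Two small tightenings on your side: you justify the norm-one step by Hilbert~90 or a cardinality count, where the paper simply asserts $\alpha_1\alpha^{-1}=\theta^{c(q-1)}$ for a primitive $\theta$, and you treat the class $C(0)=\{0\}$ separately, which the paper's use of $\alpha^{-1}$ tacitly skips.
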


\begin{IEEEproof}
According to Lemma \ref{lem3}, $f(a)= \mathrm{N}_{\F_{q^m}/\F_q}(a)- \mathrm{N}_{\F_{q^m}/\F_q}(\alpha)$ for all $a\in\F$. Let $a\in C(\alpha)$. Then, there exists $\beta\in\F_{q^m}^*$ such that $a=D_\alpha(\beta)\beta^{-1}=\alpha\beta^{q-1}$ by $\sigma=x^q$ and $\delta=0$. Thus,
\begin{equation*}
\begin{aligned}
    \mathrm{N}_{\F_{q^m}/\F_q}(a)
    &= \mathrm{N}_{\F_{q^m}/\F_q}(\alpha\beta^{q-1}) \\
    &= \mathrm{N}_{\F_{q^m}/\F_q}(\alpha) \cdot \mathrm{N}_{\F_{q^m}/\F_q}(\beta^{q-1}) \\
    &= \mathrm{N}_{\F_{q^m}/\F_q}(\alpha) \cdot \beta^{q^m-1} \\
    &= \mathrm{N}_{\F_{q^m}/\F_q}(\alpha).
\end{aligned}
\end{equation*}

Similarly, we claim $f(a)= \mathrm{N}_{\F_{q^m}/\F_q}(\alpha_1)- \mathrm{N}_{\F_{q^m}/\F_q}(\alpha)\in\F_q$ for $a\in C(\alpha_1)$.
Now for the second part, it is sufficient to show $f(a)\ne0$. Assume that
\begin{equation*}
    \mathrm{N}_{\F_{q^m}/\F_q}(\alpha_1)
    =\mathrm{N}_{\F_{q^m}/\F_q}(\alpha).
\end{equation*}
Then, $\mathrm{N}_{\F_{q^m}/\F_q}(\alpha_1\alpha^{-1})=1$. This is to say 
that $\alpha_1\alpha^{-1}=\theta^{c(q-1)}$ with a primitive element $\theta$ of $\F_{q^m}$, i.e.,
\begin{equation*}
    \alpha_1 = \theta^{c(q-1)}\alpha \in C(\alpha).
\end{equation*}
Therefore, $f(a)\ne0$ since $\alpha_1\not\in C(\alpha)$.

At last, the commutative property comes from Lemma \ref{lem1} as
\begin{equation*}
\begin{aligned}
    f(x)g(x)
    &= (x^m-\mathrm{N}_{\F_{q^m}/\F_q}(\alpha))g(x) \\
    &= x^m \cdot g(x) - \mathrm{N}_{\F_{q^m}/\F_q}(\alpha) \cdot g(x) \\
    &= g(x) \cdot x^m -  g(x) \cdot \mathrm{N}_{\F_{q^m}/\F_q}(\alpha) \\
    &= g(x)(x^m-\mathrm{N}_{\F_{q^m}/\F_q}(\alpha)) \\
    &= g(x)f(x).
\end{aligned}
\end{equation*}

Above all, we complete the proof.
\end{IEEEproof}

In fact, Lemma \ref{lem4}  implies a necessary and sufficient condition to determine whether $\alpha$ and $\beta$ are in the same conjugacy class.

\begin{corollary}\label{cor2}
Let $\F[x;\sigma,\delta]=\F_{q^m}[x;x^q,0]$. For any $\alpha,\beta\in\F_{q^m}$, $\mathrm{N}_{\F_{q^m}/\F_q}(\beta) = \mathrm{N}_{\F_{q^m}/\F_q}(\alpha)$ if and only if $\alpha,\beta$ are in the same conjugacy class.
\end{corollary}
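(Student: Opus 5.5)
The statement is essentially a repackaging of the two halves of the proof of Lemma~\ref{lem4}, so the plan is to prove the two implications separately, reusing that argument. One must also handle $\alpha=0$ and $\beta=0$, which the lemma tacitly avoided.

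\emph{Forward direction (same class $\Rightarrow$ same norm).} Assume $\alpha$ and $\beta$ lie in the same conjugacy class. Since conjugacy is an equivalence relation, $\beta\in C(\alpha)$. With $\sigma=x^q$ and $\delta=0$, this means $\beta=D_\alpha(\gamma)\gamma^{-1}=\alpha\gamma^{q-1}$ for some $\gamma\in\F_{q^m}^*$. The norm is multiplicative, and $\mathrm{N}_{\F_{q^m}/\F_q}(\gamma^{q-1})=\gamma^{q^m-1}=1$. Hence $\mathrm{N}_{\F_{q^m}/\F_q}(\beta)=\mathrm{N}_{\F_{q^m}/\F_q}(\alpha)$. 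This is exactly the computation in part 1) of Lemma~\ref{lem4}. It also covers $\alpha=0$, since then $C(0)=\{0\}$ and both norms vanish.

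\emph{Reverse direction (same norm $\Rightarrow$ same class).} Suppose $\mathrm{N}_{\F_{q^m}/\F_q}(\beta)=\mathrm{N}_{\F_{q^m}/\F_q}(\alpha)$.
\begin{itemize}
\item If this common norm is $0$, then $\alpha=\beta=0$, because the norm of a nonzero element is a power of a nonzero element and so is nonzero. The conclusion is trivial.
\item Otherwise $\alpha,\beta\in\F_{q^m}^*$, and $\mathrm{N}_{\F_{q^m}/\F_q}(\beta\alpha^{-1})=1$. Fix a primitive element $\theta$ and write $\beta\alpha^{-1}=\theta^{j}$. Then the norm condition becomes $\theta^{j(q^m-1)/(q-1)}=1$, i.e.\ $(q^m-1)\mid j\,(q^m-1)/(q-1)$, i.e.\ $(q-1)\mid j$. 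So $j=c(q-1)$ and $\beta=\alpha(\theta^{c})^{q-1}=D_\alpha(\theta^c)(\theta^c)^{-1}\in C(\alpha)$.
\end{itemize}

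Alternatively, one can argue by contraposition from part 2) of Lemma~\ref{lem4}. If $\beta\notin C(\alpha)$, then evaluating $x^m-\mathrm{N}_{\F_{q^m}/\F_q}(\alpha)$ at $\beta\in C(\beta)$ gives a nonzero value. By Lemma~\ref{lem3} that value is $\mathrm{N}_{\F_{q^m}/\F_q}(\beta)-\mathrm{N}_{\F_{q^m}/\F_q}(\alpha)$, so the norms differ. I would still write out the explicit kernel computation, since part 2) of Lemma~\ref{lem4} relies on it.

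\emph{Main obstacle.} The only nontrivial step is identifying the kernel of the norm map as the set of $(q-1)$-th powers, $\{\gamma^{q-1}:\gamma\in\F_{q^m}^*\}$. This reduces to the divisibility argument in the cyclic group $\F_{q^m}^*$ given in the second case above. The zero case needs only the separate remark there.
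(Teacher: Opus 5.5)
Your proposal is correct and follows the paper's proof: the paper likewise gets ``same class $\Rightarrow$ same norm'' from the computation in Lemma~\ref{lem4}-1), and the converse from the argument in the proof of Lemma~\ref{lem4}-2) that norm-one elements have the form $\theta^{c(q-1)}$. Your explicit treatment of $\alpha=0$ or $\beta=0$ and the divisibility check $(q-1)\mid j$ fill in details the paper leaves implicit.
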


\begin{IEEEproof}
The sufficiency is directly from the proof of Lemma \ref{lem4}-2). The necessity holds by Lemma \ref{lem4}-1).
\end{IEEEproof}

In what follows, we show that the commutative skew polynomial $f(x)=x^m-\mathrm{N}_{\F_{q^m}/\F_q}(\alpha)$ in Lemma \ref{lem4} satisfies the property of evaluation, i.e., $(gf)(\alpha)=g(\alpha)f(\alpha)$ for any skew polynomial $g(x)\in \F_{q^m}[x;x^q,0]$ and any element $\alpha\in\mathbb{F}_{q^m}$. 
 We first introduce the general property of evaluation of $(fg)(\alpha)$ as follows.

\begin{lemma}[Product rule~\cite{lam1985general,lam1988vandermonde}]\label{lem6}
Let $f,g\in\F[x;\sigma,\delta]$. If $f(\alpha)=0$, then $(gf)(\alpha)=0$. If $f(\alpha)\ne0$, then
\begin{equation*}
    (gf)(\alpha)=g\left( D_\alpha(f(\alpha)) \cdot f(\alpha)^{-1} \right) \cdot f(\alpha).
\end{equation*}
\end{lemma}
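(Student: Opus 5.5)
We prove Lemma~\ref{lem6} directly from the definition of evaluation as a right remainder, using the two facts recorded earlier: $\F[x;\sigma,\delta]$ is a right Euclidean domain, and $\deg(fg)=\deg(f)+\deg(g)$.

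\textbf{Case $f(\alpha)=0$.} Right division of $f$ by $x-\alpha$ gives $f=q(x)(x-\alpha)$. Then $gf=(g q)(x-\alpha)$, so the remainder of $gf$ upon right division by $x-\alpha$ is $0$. Hence $(gf)(\alpha)=0$.

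\textbf{Case $f(\alpha)=c\neq 0$.} Write $f=q(x)(x-\alpha)+c$, so that
\[
gf=gq(x-\alpha)+g\,c .
\]
It therefore suffices to compute the remainder of $g(x)\,c$ modulo $x-\alpha$ on the right. The key identity is
\[
(x-\beta)\,c = \sigma(c)x+\delta(c)-\beta c = \sigma(c)(x-\alpha) + \bigl(\sigma(c)\alpha+\delta(c)-\beta c\bigr).
\]
With $\beta=D_\alpha(c)c^{-1}$, this gives $(x-\beta)c=\sigma(c)(x-\alpha)$, because $D_\alpha(c)-\beta c=0$.

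Now divide $g$ on the right by $x-\beta$: $g=h(x)(x-\beta)+g(\beta)$. Multiplying on the right by $c$ yields
\[
g\,c = h(x)\,\sigma(c)(x-\alpha) + g(\beta)\,c .
\]
The last term is a constant, so it is the remainder. Combining with the first display,
\[
gf=\bigl(gq+h\sigma(c)\bigr)(x-\alpha)+g(\beta)c,
\]
and uniqueness of the right remainder gives $(gf)(\alpha)=g\bigl(D_\alpha(c)c^{-1}\bigr)\,c$, which is the claim.

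The only non-routine step is the identity $(x-\beta)c=\sigma(c)(x-\alpha)$. It follows from the multiplication rule~\eqref{rule} together with the choice of $\beta$, and the rest is bookkeeping.
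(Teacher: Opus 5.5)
Your proof is correct and complete, and it holds for arbitrary $(\sigma,\delta)$, as the lemma is stated. The paper gives no argument of its own for Lemma~\ref{lem6}; it only refers to \cite[Appendix B]{gopi2022improved}, so your proof supplies what the paper leaves out.

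What you give is the standard division-based argument, essentially as in \cite{lam1985general}. After reducing $(gf)(\alpha)$ to the right remainder of $g\,c$ with $c=f(\alpha)$, everything rests on two ingredients:
\begin{itemize}
\item the identity $(x-\beta)c=\sigma(c)(x-\alpha)$ for $\beta=D_\alpha(c)c^{-1}$, which follows in one line from \eqref{rule};
\item uniqueness of right remainders, which you correctly justify by degree additivity.
\end{itemize}

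An alternative route uses tools the paper does assemble (Definition~\ref{def power}, Lemmas~\ref{lem2} and~\ref{lem equiv}):
\begin{itemize}
\item First show $(x h)(\alpha)=D_\alpha(h(\alpha))$ for every $h$.
\item By induction and left linearity, deduce $(gf)(\alpha)=g^{D_\alpha}(f(\alpha))$.
\item Conclude with Lemma~\ref{lem equiv} applied at $\beta=f(\alpha)$.
\end{itemize}
That route treats both cases at once, since $f(\alpha)=0$ is just additivity of $g^{D_\alpha}$. It also ties the product rule directly to the operator picture behind the linearized Reed--Solomon codes. However, it needs Leroy's identity as an extra input. Yours needs nothing beyond the commutation rule and the division algorithm.
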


The product rule is an important lemma for skew polynomials. For the proof of Lemma \ref{lem6} the reader may refer to \cite[Appendix B]{gopi2022improved}. In what follows, we propose a crucial conclusion about the skew polynomial $f(x)=x^m-\mathrm{N}_{\F_{q^m}/\F_q}(\alpha)$ for $C(\alpha)$, which can simplify the evaluation of the product of polynomials.

\begin{lemma}\label{lem7}
Let $\F[x;\sigma,\delta]=\F_{q^m}[x;x^q,0]$. For any skew polynomial $g(x)\in\F_{q^m}[x;x^q,0]$ and any conjugacy class $C(\alpha)$, the product $(gf)(x)$, where $f(x)=x^m-\mathrm{N}_{\F_{q^m}/\F_q}(\alpha)$ for $C(\alpha)$, satisfies
\begin{equation*}
    (gf)(a) = g(a)f(a) \mbox{ for all } a\in\F.
\end{equation*}
\end{lemma}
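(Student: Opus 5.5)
I will deduce this from the general product rule of Lemma~\ref{lem6}, splitting into cases according to whether $f(a)$ vanishes. The key input is that $f(a)=\mathrm{N}_{\F_{q^m}/\F_q}(a)-\mathrm{N}_{\F_{q^m}/\F_q}(\alpha)$ by Lemma~\ref{lem3}. Since the norm takes values in $\F_q$, this gives $f(a)\in\F_q$ for every $a\in\F_{q^m}$.

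\emph{Case $f(a)=0$} (equivalently $a\in C(\alpha)$, by Lemma~\ref{lem4}-1) and Corollary~\ref{cor2}). Lemma~\ref{lem6} gives $(gf)(a)=0$ directly, and $g(a)f(a)=0$ as well, so the identity holds trivially.

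\emph{Case $f(a)\neq 0$.} Put $c=f(a)\in\F_q^*$. Lemma~\ref{lem6} yields
\begin{equation*}
(gf)(a)=g\bigl(D_a(c)\,c^{-1}\bigr)\,c .
\end{equation*}
With $\sigma(\beta)=\beta^q$ and $\delta=0$ we have $D_a(c)=c^q a=c\,a$, because $c\in\F_q$. Hence $D_a(c)c^{-1}=a$, and therefore $(gf)(a)=g(a)c=g(a)f(a)$.

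As an alternative route, one could use Lemma~\ref{lem4}-3) to write $gf=fg$ and evaluate $fg$ by expanding $x^m g$ through Lemma~\ref{lem2} together with $\Phi_{i+m}(a)=\Phi_i(a)^{q^m}\cdots$. The direct argument above is simpler, so I will use it.

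The only step requiring care is confirming that $f(a)$ is fixed by $\sigma$, which is what collapses the conjugation $D_a(c)c^{-1}$ to $a$. That fact is immediate from Lemma~\ref{lem3}, since $f(a)$ is a difference of two norms and so lies in $\F_q$. I therefore do not expect a real obstacle.
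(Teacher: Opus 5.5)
Your proof is correct and is essentially the paper's argument: both split on whether $f(a)=0$, apply the product rule of Lemma~\ref{lem6}, and use $f(a)\in\F_q^*$ to collapse $D_a(f(a))f(a)^{-1}=f(a)^{q-1}a$ to $a$. The only cosmetic difference is that you justify $f(a)\in\F_q$ directly from Lemma~\ref{lem3} (a difference of norms), whereas the paper cites Lemma~\ref{lem4}-2).
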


\begin{IEEEproof}
If $f(a)=0$, then $(gf)(a)= 0= g(a)f(a)$ by Lemma \ref{lem6}. If $f(a)\ne0$, also by Lemma \ref{lem6}, we have
\begin{equation*}
\begin{aligned}
    (gf)(a)
    &= g(D_a(f(a)) \cdot f(a)^{-1}) \cdot f(a) \\
    &= g( (f(a)^{q} \cdot a) \cdot f(a)^{-1} ) \cdot f(a) \\
    &= g( f(a)^{q-1} \cdot a ) \cdot f(a) \\
    &= g(a)f(a),
\end{aligned}
\end{equation*}
where the last equation comes from $f(a)^{q-1}=1$ since $f(a)\in\F_q$ by Lemma \ref{lem4}-2) and $f(a)\ne0$ by assumption.
\end{IEEEproof}

\subsection{Minimal skew polynomial and centralizer}\label{subsec-minimal}

In previous Subsection \ref{subsec-conjugacy}, we have proposed the annihilating skew polynomial $f(x)=x^m-\mathrm{N}_{\F_{q^m}/\F_q}(\alpha)$ for conjugacy class $C(\alpha)$. In this subsection, we further show that $f(x)$ is the minimal skew polynomial of $C(\alpha)$ in Corollary \ref{cor4}, whose function is the same as that of co-prime polynomials in sdandard CRT for the ring $\mathbb{F}[x]$ shown in next Subsection \ref{subsec-CRT}.

We begin by introducing some necessary concepts, i.e., minimal skew polynomial, P-base, and centralizer. This is because, the minimal skew polynomial of the P-base of $C(\alpha)$ is simultaneously the minimal skew polynomial of $C(\alpha)$ shown in Corollaries \ref{cor3} and \ref{cor4}, and the P-base can be given by the centralizer shown in Lemma \ref{lem11}.

\begin{definition}
Let $A\subseteq\F[x;\sigma,\delta]$ and $B\subseteq\F$. The zero set of $A$ is defined as
\begin{equation*}
    \cZ(A) = \{ a\in\F : f(a)=0, \forall f\in A \}.
\end{equation*}
The associated ideal of $B$ is defined as
\begin{equation*}
    \cI(B) = \{ f\in\F[x;\sigma,\delta] : f(a)=0, \forall a\in B \}.
\end{equation*}
\end{definition}

As shown in \cite{martinez-penas2018skew}, $\cI(B)$ is a left ideal in $\F[x;\sigma,\delta]$ for any $B\subseteq\F$. And there exists a unique monic skew polynomial of minimal degree among those in $\cI(B)$, denoted as $\Lambda_B\in\cI(B)$ which is called the \textit{minimal skew polynomial} of $B$. We will now introduce the P-base.

\begin{definition}[P-base~\cite{lam1988vandermonde,martinez-penas2018skew}]\label{def p base}
Let $B\subseteq\F$.
\begin{enumerate}
  \item The P-closure of $B$ is defined as $\overline{B}=\cZ(\cI(B))=\cZ(\Lambda_B)$. And $B$ is called P-closed if $\overline{B}=B$.
  \item Let $B$ be P-closed. A subset $U\subseteq B$ generates $B$ if $\overline{U}=B$, which is called as set of P-generators of $B$.
  \item An element $a\in\F$ is P-independent from $B$ if $a\not\in\overline{B}$. The set $B$ is called P-independent if every $a\in B$ is P-independent from $B\setminus\{a\}$.
  \item Let $B$ be P-closed. A subset $U\subseteq B$ is called a P-basis of $B$ if $U$ is P-independent and a set of P-generators of $B$.
\end{enumerate}
\end{definition}

To analyze the P-basis of the conjugacy class $C(\alpha)$, we first show that $C(\alpha)$ is P-closed for $\sigma=x^q$ and $\delta=0$.

\begin{lemma}\label{lem8}
Let $\F[x;\sigma,\delta]=\F_{q^m}[x;x^q,0]$. Any conjugacy class $C(\alpha)$ for $\alpha\in\F$ is P-closed, i.e., $\overline{C(\alpha)}=C(\alpha)$.
\end{lemma}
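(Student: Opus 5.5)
We need to show that $\overline{C(\alpha)}\subseteq C(\alpha)$; the reverse inclusion $C(\alpha)\subseteq\overline{C(\alpha)}$ holds for any set, since every $\Lambda\in\cI(C(\alpha))$ vanishes on $C(\alpha)$ by definition. The plan is to sandwich the P-closure between $C(\alpha)$ and the zero set of the explicit annihilator $f(x)=x^m-\mathrm{N}_{\F_{q^m}/\F_q}(\alpha)$ from Lemma~\ref{lem4}.

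By Lemma~\ref{lem4}-1), $f(a)=0$ for every $a\in C(\alpha)$. Hence $f\in\cI(C(\alpha))$, and the P-closure satisfies
\begin{equation*}
    \overline{C(\alpha)}=\cZ(\cI(C(\alpha)))\subseteq\cZ(f),
\end{equation*}
since a point lying in the zero set of the whole ideal lies in the zero set of each of its members.

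It then suffices to show $\cZ(f)\subseteq C(\alpha)$. Take $a\in\F_{q^m}$ with $f(a)=0$. If $a=0$, then $f(0)=-\mathrm{N}_{\F_{q^m}/\F_q}(\alpha)$, which vanishes only when $\alpha=0$. In that case $C(0)=\{0\}$, because $D_0(\beta)\beta^{-1}=\beta^q\cdot 0\cdot\beta^{-1}=0$, so $a\in C(\alpha)$. If $a\notin C(\alpha)$, then $a$ lies in its own class $C(a)$ with $a\notin C(\alpha)$, and Lemma~\ref{lem4}-2) applied with $\alpha_1=a$ gives $f(a)\in\F_q^*$, a contradiction. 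Equivalently, Lemma~\ref{lem3} gives $f(a)=\mathrm{N}_{\F_{q^m}/\F_q}(a)-\mathrm{N}_{\F_{q^m}/\F_q}(\alpha)$, so $f(a)=0$ means the norms agree, and Corollary~\ref{cor2} puts $a$ in $C(\alpha)$. Thus $\cZ(f)=C(\alpha)$, and the chain
\begin{equation*}
    C(\alpha)\subseteq\overline{C(\alpha)}\subseteq\cZ(f)=C(\alpha)
\end{equation*}
closes.

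The main obstacle is the degenerate class $C(0)$, where the norm argument behind Lemma~\ref{lem4}-2) is stated only for nonzero elements; it is handled by the direct computation $C(0)=\{0\}$ above. Everything else follows immediately from Lemma~\ref{lem4} and Corollary~\ref{cor2}.
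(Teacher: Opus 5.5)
Your proposal is correct and is essentially the paper's proof: $f(x)=x^m-\mathrm{N}_{\F_{q^m}/\F_q}(\alpha)\in\cI(C(\alpha))$ by Lemma~\ref{lem4}-1), so $C(\alpha)\subseteq\cZ(\cI(C(\alpha)))\subseteq\cZ(f)=C(\alpha)$, with the last equality from Corollary~\ref{cor2}. Your extra zero-case discussion is harmless, but the case that the paper's norm argument (which divides by $\alpha$) leaves implicit is $\alpha=0$ with $a\neq0$, not $a=0$; that case is immediate, since then $f(a)=\mathrm{N}_{\F_{q^m}/\F_q}(a)\neq0$.
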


\begin{IEEEproof}
By the definition of P-closed, we need to prove $\cZ(\cI(C(\alpha))) = C(\alpha)$. According to Lemma \ref{lem4},
\begin{equation*}
    f(x)=x^m-\mathrm{N}_{\F_{q^m}/\F_q}(\alpha) \in \cI(C(\alpha))
\end{equation*}
Thus,
\begin{equation*}
    \cZ(\cI(C(\alpha))) \subseteq \cZ(f).
\end{equation*}
By Corollary \ref{cor2},
\begin{equation*}
    \cZ(f) = C(\alpha).
\end{equation*}
Therefore,
\begin{equation}\label{eqn25}
    \cZ(\cI(C(\alpha))) \subseteq C(\alpha).
\end{equation}
In addition, let $a\in C(\alpha)$. Then, $g(a)=0$ for any $g(x)\in\cI(C(\alpha))$. This implies that $a\in\cZ(\cI(C(\alpha)))$. Thus,
\begin{equation}\label{eqn26}
    C(\alpha) \subseteq \cZ(\cI(C(\alpha))).
\end{equation}
Combining \eqref{eqn25} and \eqref{eqn26}, we complete the proof.
\end{IEEEproof}

Furthermore, to determine an explicit P-basis of $C(\alpha)$, we recall some basic result and conceptions proposed in \cite{gopi2022improved} and \cite{martinez-penas2018skew}.

\begin{definition}[Centralizer~\cite{lam1988vandermonde}]
The $(\sigma,\delta)$-centralizer of $\alpha\in\F$, or simply centralizer, is defined as
\begin{equation*}
     \cK_\alpha^{\sigma,\delta} = \{ \beta\in\F : D_\alpha^{\sigma,\delta}(\beta)\beta^{-1} = \alpha \}.
\end{equation*}
Denote $\cK_\alpha$ for simplicity when $\sigma,\delta$ are known from the context.
\end{definition}

\begin{lemma}[{\cite[Example 1]{gopi2022improved}}]\label{lem9}
Let $\F[x;\sigma,\delta]=\F_{q^m}[x;x^q,0]$. The centralizer of every element $\alpha\in\F_{q^m}^*$ is $\cK_a=\F_q$.
\end{lemma}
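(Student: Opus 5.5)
The plan is to unwind the definition of the centralizer directly in the specific ring $\F_{q^m}[x;x^q,0]$, where the $(\sigma,\delta)$-operator takes a very simple form. Since $\sigma(\beta)=\beta^q$ and $\delta=0$, we have $D_\alpha(\beta)=\sigma(\beta)\alpha+\delta(\beta)=\beta^q\alpha$ for every $\beta\in\F_{q^m}$. This is the same computation already used in the proof of Lemma~\ref{lem4}, where it gave $D_\alpha(\beta)\beta^{-1}=\alpha\beta^{q-1}$. So for $\beta\in\F_{q^m}^*$, the defining condition $D_\alpha(\beta)\beta^{-1}=\alpha$ of $\cK_\alpha$ becomes $\alpha\beta^{q-1}=\alpha$.

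Next, I use the hypothesis $\alpha\in\F_{q^m}^*$ to cancel $\alpha$, which leaves $\beta^{q-1}=1$. The nonzero solutions of this equation in $\F_{q^m}$ are exactly the roots of $x^{q-1}-1$. These are the $q-1$ elements of $\F_q^*$, since $\F_q^*$ is the unique subgroup of order $q-1$ of the cyclic group $\F_{q^m}^*$ (equivalently, $\beta^q=\beta$ characterizes $\F_q$ inside $\F_{q^m}$). Conversely, every $\beta\in\F_q^*$ satisfies $\beta^{q-1}=1$ and hence lies in the centralizer. Thus the nonzero part of $\cK_\alpha$ is exactly $\F_q^*$.

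The only delicate point is the element $0$. The expression $D_\alpha(\beta)\beta^{-1}$ is undefined at $\beta=0$, but the centralizer is by convention a subfield (in~\cite{lam1988vandermonde} it is $\{0\}\cup\{\beta\neq0: D_\alpha(\beta)\beta^{-1}=\alpha\}$). I would state this convention explicitly and adjoin $0$, which gives $\cK_\alpha=\F_q$. This step is mere bookkeeping.

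The one place where real care is needed is the use of $\alpha\neq0$. For $\alpha=0$ the condition holds for every $\beta$, so the centralizer would be all of $\F_{q^m}$. The restriction in the statement is therefore essential, and I would point this out in passing.
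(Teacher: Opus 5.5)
Your proposal is correct. The paper gives no proof of its own here; it imports the fact from \cite[Example 1]{gopi2022improved}. Your argument is exactly the standard verification: the computation $D_\alpha(\beta)\beta^{-1}=\alpha\beta^{q-1}$ (the same identity the paper derives in the proof of Lemma~\ref{lem4}), cancellation of $\alpha\neq 0$, and identification of the roots of $x^{q-1}-1$ with $\F_q^*$. For $\beta=0$, it is slightly cleaner to write the defining condition as $D_\alpha(\beta)=\alpha\beta$, which holds trivially at $\beta=0$ and makes $0\in\cK_\alpha$ automatic rather than a convention.
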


\begin{lemma}[\cite{lam1985general,martinez-penas2018skew}]\label{lem10}
A subset $B\subseteq\F$ is P-independent if and only if $\deg(\Lambda_B)=|B|$.
\end{lemma}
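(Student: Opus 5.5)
This is a classical result, so I would sketch its standard proof in the framework of the paper, in both directions. The tool is the Lagrange-type interpolation property of the minimal skew polynomial: $\Lambda_B$ is monic of least degree in $\cI(B)$, and $\cI(B)=\F[x;\sigma,\delta]\Lambda_B$ by right division in the left Euclidean domain. I also use that, for $a\notin\cZ(\Lambda_B)$, we have $\Lambda_{B\cup\{a\}}=(x-a')\Lambda_B$ with $a'=D_a(\Lambda_B(a))\Lambda_B(a)^{-1}$. This follows from the product rule, Lemma~\ref{lem6}: the polynomial $(x-a')\Lambda_B$ vanishes on $B$ because $\Lambda_B$ does, and it vanishes at $a$ by the choice of $a'$. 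It is also minimal, because any element of $\cI(B\cup\{a\})$ is a left multiple of $\Lambda_B$ that does not vanish at $a$ when the left factor is a constant.

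The first step is the general bound $\deg(\Lambda_B)\le|B|$ for finite $B$, proved by induction using the recursion above. At each step the degree goes up by $1$ if the new point is P-independent of the previous ones, and by $0$ otherwise, since then $\Lambda_{B\cup\{a\}}=\Lambda_B$.

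\textbf{Necessity.} Suppose $B=\{a_1,\dots,a_n\}$ is P-independent. Adding the points one at a time, each $a_j$ is not in $\overline{\{a_1,\dots,a_{j-1}\}}$. This holds because P-closure is monotone, so $\overline{\{a_1,\dots,a_{j-1}\}}\subseteq\overline{B\setminus\{a_j\}}$, and $a_j$ is not in the latter by P-independence. Hence $\Lambda_{\{a_1,\dots,a_{j-1}\}}(a_j)\neq0$, every step raises the degree by one, and $\deg\Lambda_B=n$.

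\textbf{Sufficiency.} Suppose $\deg\Lambda_B=|B|$ but some $a\in B$ lies in $\overline{B\setminus\{a\}}$. Then $\Lambda_{B\setminus\{a\}}$ vanishes at $a$, so it lies in $\cI(B)$. Its degree is at most $|B|-1$ by the bound above, which contradicts the minimality of $\deg\Lambda_B$.

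The main obstacle is verifying the recursion step, namely that $(x-a')\Lambda_B$ is minimal and that the degree jumps exactly when $a\notin\overline{B}$. This rests on $\cI(B)$ being the principal left ideal generated by $\Lambda_B$, together with a careful application of Lemma~\ref{lem6}. Since the statement is cited from \cite{lam1985general,martinez-penas2018skew}, one may alternatively just invoke those references.
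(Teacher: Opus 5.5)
Your proof is correct, and it supplies something the paper does not: the paper gives no proof of Lemma~\ref{lem10}, only the citation to \cite{lam1985general,martinez-penas2018skew}, and then uses the lemma as a black box in Corollaries~\ref{cor3}, \ref{cor4} and Theorem~\ref{thm7}. Your argument is essentially the classical Lam-style proof, made self-contained.

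The key step is the one-point extension rule: $\Lambda_{B\cup\{a\}}=(x-a')\Lambda_B$ with $a'=D_a(\Lambda_B(a))\Lambda_B(a)^{-1}$ when $\Lambda_B(a)\neq0$, and $\Lambda_{B\cup\{a\}}=\Lambda_B$ otherwise. You justify it correctly from three facts: Lemma~\ref{lem6}; left-linearity of evaluation, $(c\Lambda_B)(a)=c\,\Lambda_B(a)$; and $\cI(B)=\F[x;\sigma,\delta]\Lambda_B$, which follows from right division plus minimality. The rest then goes through:
\begin{itemize}
\item $\deg\Lambda_B\le|B|$ follows by induction from $\Lambda_\emptyset=1$.
\item Necessity holds because monotonicity of P-closure gives $a_j\notin\overline{\{a_1,\dots,a_{j-1}\}}=\cZ(\Lambda_{\{a_1,\dots,a_{j-1}\}})$ at every step, so the degree rises by one each time.
\item Sufficiency holds because $\Lambda_{B\setminus\{a\}}$ would be a nonzero element of $\cI(B)$ of degree at most $|B|-1$.
\end{itemize}

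Compared with the citation, your route is longer, but it relies only on facts already stated in the paper, essentially Lemma~\ref{lem6} and right Euclidean division. It also gives, as by-products, the explicit recursion for minimal skew polynomials and the general bound $\deg\Lambda_B\le|B|$. The only implicit hypothesis is that $B$ is finite, which is automatic here since $\F$ is a finite field.
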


\begin{lemma}[{\cite[Cor. 27]{martinez-penas2018skew}}]\label{lem11}
Fix $\alpha\in\F$, let $\beta_i\in\F^*$ and define $\gamma_i=D_\alpha(\beta_i)\beta_i^{-1}$ for $i\in[n]$. Then, $\{\beta_1,\beta_2,\dots,\beta_n\}$ is a basis of the right vector space over $\cK_\alpha$ if and only if $\{\gamma_1,\gamma_2,\dots,\gamma_n\}$ is a P-basis of some P-closed set $B\subseteq\F$.
\end{lemma}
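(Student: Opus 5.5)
We read Lemma~\ref{lem11} as follows: the $\beta_i$ are right linearly independent over $\cK_\alpha$ if and only if $\{\gamma_1,\dots,\gamma_n\}$ is P-independent. In that case the P-closed set is $B=\overline{\{\gamma_1,\dots,\gamma_n\}}$, and the $\beta_i$ form a basis of the right $\cK_\alpha$-space they span. By Lemma~\ref{lem10}, it therefore suffices to prove
\begin{equation*}
\deg \Lambda_{\{\gamma_1,\dots,\gamma_n\}} = \dim_{\cK_\alpha}\langle \beta_1,\dots,\beta_n\rangle_{\cK_\alpha}.
\end{equation*}
The plan is to linearize skew evaluation at conjugates of $\alpha$ through the additive operator $D_\alpha$.

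\textbf{Step 1 (operator evaluation).} For $f=\sum_i a_ix^i$ set $f(D_\alpha)=\sum_i a_iD_\alpha^i$, an additive map $\F\to\F$. Induct on $i$ using Definition~\ref{def power}, the identity $D_\alpha(\beta)=\sigma(\beta)\alpha+\delta(\beta)$ and the rules for $\sigma$ and $\delta$. This should give
\begin{equation*}
\Phi_i\big(D_\alpha(\beta)\beta^{-1}\big)=D_\alpha^i(\beta)\beta^{-1}.
\end{equation*}
By Lemma~\ref{lem2} it follows that $f(\gamma)=f(D_\alpha)(\beta)\,\beta^{-1}$ whenever $\gamma=D_\alpha(\beta)\beta^{-1}$. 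In particular, $\gamma$ is a root of $f$ if and only if $\beta\in\ker f(D_\alpha)$.

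\textbf{Step 2 (right $\cK_\alpha$-linearity).} For $c\in\cK_\alpha$ we have $D_\alpha(c)=\alpha c$, i.e.\ $\sigma(c)\alpha+\delta(c)=\alpha c$. Expanding gives
\begin{equation*}
D_\alpha(\beta c)=\sigma(\beta)\big(\sigma(c)\alpha+\delta(c)\big)+\delta(\beta)c=D_\alpha(\beta)c .
\end{equation*}
So every $f(D_\alpha)$ is right $\cK_\alpha$-linear. Hence $f$ vanishes on all $\gamma_i$ if and only if $\ker f(D_\alpha)$ contains the right span $V=\langle\beta_1,\dots,\beta_n\rangle_{\cK_\alpha}$. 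Consequently $\Lambda_{\{\gamma_i\}}$ is exactly the monic $f$ of least degree with $V\subseteq\ker f(D_\alpha)$.

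\textbf{Step 3 (two inequalities).} For the upper bound $\deg\Lambda\le r:=\dim V$, I would build $\Lambda$ by induction on a $\cK_\alpha$-basis $v_1,\dots,v_r$ of $V$. Given $h$ of degree $j$ annihilating $v_1,\dots,v_j$ in the operator sense, set $w=h(D_\alpha)(v_{j+1})$. If $w\ne0$, replace $h$ by $(x-D_\alpha(w)w^{-1})h$. The product rule (Lemma~\ref{lem6}), translated through Step~1 as $(gh)(D_\alpha)=g(D_\alpha)\circ h(D_\alpha)$, shows that the new polynomial kills $v_1,\dots,v_{j+1}$.

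For the lower bound, I need $\dim_{\cK_\alpha}\ker f(D_\alpha)\le\deg f$ for every nonzero $f$. I would prove this by factoring off a linear right factor and using $\ker(g h)(D_\alpha)=h(D_\alpha)^{-1}(\ker g(D_\alpha))$. Dimensions then add, so the claim reduces to the degree-one case: $\ker(D_\alpha-\gamma)$ has dimension at most $1$. Indeed, if $D_\alpha(\beta)\beta^{-1}=D_\alpha(\beta')\beta'^{-1}=\gamma$ with $\beta\ne0$, then $c=\beta^{-1}\beta'$ satisfies $D_\alpha(\beta c)=D_\alpha(\beta)c$ only if $c\in\cK_\alpha$. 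This converse direction needs the identity $D_\alpha(\beta c)=\sigma(\beta)D_\alpha(c)+\delta(\beta)c$, which forces $\sigma(\beta)(D_\alpha(c)-\alpha c)=0$.

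The main obstacle is making this kernel-dimension bound rigorous. Over a field, a polynomial without linear right factors over $\F$ cannot simply be ignored. So I would rather argue on $\Lambda$ directly: minimal polynomials of P-closed sets factor completely into linear terms (Lam's theory). Alternatively, one can use the composition identity together with the Step~3 construction in reverse.
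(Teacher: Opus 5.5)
The paper gives no proof of this lemma; it imports it from \cite[Cor.~27]{martinez-penas2018skew}, which rests on the Lam--Leroy theorem on P-independence of conjugates. Your plan is essentially that classical argument, and most of it is right:
\begin{itemize}
\item Step~1 is exactly Lemma~\ref{lem equiv}, and your induction for it is correct.
\item Step~2 correctly shows that each $f(D_\alpha)$ is right $\cK_\alpha$-linear, so $\cI(\{\gamma_1,\dots,\gamma_n\})=\{f: V\subseteq\ker f(D_\alpha)\}$.
\item Your inductive construction correctly gives $\deg\Lambda_{\{\gamma_i\}}\le\dim_{\cK_\alpha}V$. This settles the direction ``P-independent $\Rightarrow$ $\beta_i$ right independent''.
\end{itemize}
The composition rule $(gh)(D_\alpha)=g(D_\alpha)\circ h(D_\alpha)$ is most directly seen from $D_\alpha(ay)=\sigma(a)D_\alpha(y)+\delta(a)y$, i.e.\ $x\mapsto D_\alpha$ respects the rule \eqref{rule}. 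Your derivation via Lemma~\ref{lem6} also works.

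\textbf{The gap.} The lower bound $\dim_{\cK_\alpha}\ker f(D_\alpha)\le\deg f$ is left as your ``main obstacle''. Yet it is the direction the paper actually needs: Corollary~\ref{cor3} goes from an $\F_q$-basis $\{\beta_i\}$ to a P-basis. Neither Lam's factorization theory nor a ``reverse'' Step~3 is required. The latter would be circular anyway, since showing the constructed polynomial is minimal is exactly the lower bound.

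Your worry about polynomials without right linear factors dissolves once you combine the bound with Step~1. Argue by induction on $\deg f$:
\begin{itemize}
\item If $\ker f(D_\alpha)=0$, there is nothing to prove.
\item Otherwise pick $\beta\ne0$ in the kernel and set $\gamma=D_\alpha(\beta)\beta^{-1}$. Step~1 gives $f(\gamma)=f(D_\alpha)(\beta)\beta^{-1}=0$. Since evaluation is a right remainder, $f=g\,(x-\gamma)$ with $\deg g=\deg f-1$. Hence $f(D_\alpha)=g(D_\alpha)\circ(D_\alpha-\gamma)$, and by induction plus your degree-one computation,
\begin{equation*}
\dim_{\cK_\alpha}\ker f(D_\alpha)\le\dim_{\cK_\alpha}\ker g(D_\alpha)+\dim_{\cK_\alpha}\ker(D_\alpha-\gamma)\le(\deg f-1)+1.
\end{equation*}
\end{itemize}
So a polynomial with no right root in $\F$ simply has trivial operator kernel and never obstructs the induction. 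Applying this to $f=\Lambda_{\{\gamma_i\}}$ gives $\dim V\le\deg\Lambda_{\{\gamma_i\}}$, which closes your reduction.

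\textbf{Minor corrections.}
\begin{itemize}
\item Kernel dimensions under composition are only subadditive, not additive: $h(D_\alpha)$ maps $\ker(gh)(D_\alpha)$ into $\ker g(D_\alpha)$ with kernel $\ker h(D_\alpha)$. Subadditivity is all you use.
\item In the upper-bound induction, $h$ has degree at most $j$, not exactly $j$.
\item Your degree-one computation also shows that $\gamma_i=\gamma_j$ with $i\ne j$ forces $\beta_j\in\beta_i\cK_\alpha$. So independent $\beta_i$ give $n$ distinct $\gamma_i$, and Lemma~\ref{lem10} applies with $|B|=n$.
\end{itemize}
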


We are ready to show the P-basis of the conjugacy class $C(\alpha)$.

\begin{corollary}\label{cor3}
Let $\F[x;\sigma,\delta]=\F_{q^m}[x;x^q,0]$. For any $\alpha\in\F_{q^m}^*$, let $\{\beta_1,\dots,\beta_m\}$ be a basis of $\F_{q^m}$ over $\F_q$, then $U=\{\gamma_i=\beta_i^{q-1}\alpha : i\in[m]\}$ is a P-basis of the conjugacy class $C(\alpha)$. Moreover, $\Lambda_U(x)= x^m-\mathrm{N}_{\F_{q^m}/\F_q}(\alpha)$.
\end{corollary}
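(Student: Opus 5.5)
The plan is to apply Lemma~\ref{lem11} with the fixed element $\alpha$. By Lemma~\ref{lem9}, the centralizer is $\cK_\alpha=\F_q$. A basis $\{\beta_1,\dots,\beta_m\}$ of $\F_{q^m}$ over $\F_q$ is therefore a basis of the right vector space $\F_{q^m}$ over $\cK_\alpha$; since the field is commutative, left and right coincide. Since $\sigma=x^q$ and $\delta=0$, we have $D_\alpha(\beta_i)\beta_i^{-1}=\beta_i^q\alpha\beta_i^{-1}=\beta_i^{q-1}\alpha=\gamma_i$. Lemma~\ref{lem11} then says that $U=\{\gamma_1,\dots,\gamma_m\}$ is a P-basis of some P-closed set $B$. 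By Lemma~\ref{lem10} (via P-independence), $\deg(\Lambda_U)=|U|$.

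Next, note that $|U|=m$. If $\gamma_i=\gamma_j$ for some $i\ne j$, then $\beta_i^{q-1}=\beta_j^{q-1}$, so $\beta_i\beta_j^{-1}\in\F_q^*$. This contradicts the linear independence of the basis. Hence $\deg(\Lambda_U)=m$.

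We then identify $\Lambda_U$. Each $\gamma_i$ lies in $C(\alpha)$ by definition, so by Lemma~\ref{lem4}-1) the monic skew polynomial $f(x)=x^m-\mathrm{N}_{\F_{q^m}/\F_q}(\alpha)$ lies in $\cI(U)$. The set $\cI(U)$ is a left ideal of the left Euclidean domain, generated by $\Lambda_U$. Right division of $f$ by $\Lambda_U$ therefore gives $f=h\Lambda_U$. Both are monic of degree $m$, so $h=1$ and $\Lambda_U=f$. Alternatively, uniqueness of the monic minimal-degree element gives the same conclusion directly.

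Finally, we show that $U$ generates $C(\alpha)$, i.e., $\overline{U}=C(\alpha)$. We have $\overline{U}=\cZ(\Lambda_U)=\cZ(f)$, and $\cZ(f)=C(\alpha)$ by Corollary~\ref{cor2}, as already used in the proof of Lemma~\ref{lem8}. This also shows $B=C(\alpha)$, which is P-closed by Lemma~\ref{lem8}. Together with P-independence, $U$ is a P-basis of $C(\alpha)$.

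The main obstacle is only the bookkeeping in Lemma~\ref{lem11}: it guarantees a P-basis of \emph{some} P-closed set, and we must pin down that this set is $C(\alpha)$. The degree count plus the explicit annihilator $f$ resolves this cleanly. The hypothesis $\alpha\ne0$ is needed for Lemma~\ref{lem9} and to ensure the $\gamma_i$ are nonzero and distinct.
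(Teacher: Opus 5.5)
Your proposal is correct and follows essentially the same route as the paper: Lemma~\ref{lem11} (with $\cK_\alpha=\F_q$) makes $U$ a P-basis of some P-closed set, and Lemma~\ref{lem10} gives $\deg(\Lambda_U)=m$. The monic annihilator $x^m-\mathrm{N}_{\F_{q^m}/\F_q}(\alpha)$ of the same degree then forces equality with $\Lambda_U$, and Corollary~\ref{cor2} gives $\overline{U}=\cZ(\Lambda_U)=C(\alpha)$; your explicit checks that $D_\alpha(\beta_i)\beta_i^{-1}=\beta_i^{q-1}\alpha$ and that the $\gamma_i$ are distinct (so $|U|=m$) are details the paper leaves implicit.
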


\begin{IEEEproof}
According to Lemma \ref{lem11}, $U$ is a P-basis of some P-closed set. It is sufficient to prove $\overline{U}=C(\alpha)$. By Lemma \ref{lem10} and each P-basis is P-independent from Definition \ref{def p base}, we have
\begin{equation*}
    \deg(\Lambda_U)=m.
\end{equation*}
Note that $f(x)=x^m-\mathrm{N}_{\F_{q^m}/\F_q}(\alpha)$ satisfies
\begin{equation*}
    f(\gamma_i)=0
\end{equation*}
for $i\in[m]$ since $U\subseteq C(\alpha)$ and Lemma \ref{lem4}. Thus, by $\deg(f)=\deg(\Lambda_U)$ and $f(x)$ is a monic skew polynomial, we have
\begin{equation*}
    \Lambda_U(x)=f(x).
\end{equation*}
According to the definition of P-closed in Definition \ref{def p base},
\begin{equation*}
\begin{aligned}
    \overline{U} &= \cZ(\Lambda_U) \\
    &= \{ a\in\F : \Lambda_U(a)=0 \} \\
    &= \{ a\in\F : f(a)=0 \} \\
    &= \{ a\in\F : \mathrm{N}_{\F_{q^m}/\F_q}(a) = \mathrm{N}_{\F_{q^m}/\F_q}(\alpha) \}.
\end{aligned}
\end{equation*}
By Corollary \ref{cor2},
\begin{equation*}
    \overline{U} = C(\alpha).
\end{equation*}
This is to say $U$ is a P-basis of $C(\alpha)$.
\end{IEEEproof}

Moreover, the minimal skew polynomial $\Lambda_U$ of the P-basis $U$ in Corollary \ref{cor3} is also the  minimal skew polynomial $\Lambda_{C(\alpha)}$ of the conjugacy class $C(\alpha)$, i.e., $\Lambda_U=\Lambda_{C(\alpha)}$.

\begin{corollary}\label{cor4}
Let $\F[x;\sigma,\delta]=\F_{q^m}[x;x^q,0]$. For any $\alpha\in\F_{q^m}^*$, $\Lambda_{C(\alpha)}(x)= x^m-\mathrm{N}_{\F_{q^m}/\F_q}(\alpha)$.
\end{corollary}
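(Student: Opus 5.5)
The plan is to show that $f(x)=x^m-\mathrm{N}_{\F_{q^m}/\F_q}(\alpha)$ lies in $\cI(C(\alpha))$ and has the smallest possible degree there. Uniqueness of the monic minimal element of the left ideal $\cI(C(\alpha))$ then forces $\Lambda_{C(\alpha)}=f$.

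First, Lemma \ref{lem4}-1) gives $f(a)=0$ for every $a\in C(\alpha)$, so $f\in\cI(C(\alpha))$. Hence $\deg(\Lambda_{C(\alpha)})\le m$.

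Second, fix a basis $\{\beta_1,\dots,\beta_m\}$ of $\F_{q^m}$ over $\F_q$ and take the set $U=\{\beta_i^{q-1}\alpha\}$ of Corollary \ref{cor3}. Then $U\subseteq C(\alpha)$, so every polynomial vanishing on $C(\alpha)$ also vanishes on $U$, i.e., $\cI(C(\alpha))\subseteq\cI(U)$. In particular $\Lambda_{C(\alpha)}\in\cI(U)$. By Corollary \ref{cor3} (equivalently, Lemma \ref{lem10} together with the P-independence of $U$), $\Lambda_U$ has degree $m=|U|$. Since $\Lambda_U$ is the nonzero element of minimal degree in $\cI(U)$, this gives $\deg(\Lambda_{C(\alpha)})\ge m$.

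Combining the two bounds, $\Lambda_{C(\alpha)}$ is a monic polynomial of degree $m$ in $\cI(C(\alpha))$. So is $f$, and therefore $f-\Lambda_{C(\alpha)}\in\cI(C(\alpha))$ has degree less than $m$. By the minimality just established it must be $0$, which gives $\Lambda_{C(\alpha)}=f$.

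An alternative route is to note from Corollary \ref{cor3} that $\overline{U}=C(\alpha)$, so $\cI(U)=\cI(\overline{U})=\cI(C(\alpha))$, and the two minimal polynomials coincide directly. I do not expect a real obstacle here. The only point needing care is the hypothesis $\alpha\neq0$: it is required for Corollary \ref{cor3} and Lemma \ref{lem9} to apply, and it matches the statement's restriction to $\alpha\in\F_{q^m}^*$.
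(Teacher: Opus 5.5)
Your proposal is correct and follows essentially the paper's argument. The paper also gets the lower bound by noting that $\Lambda_{C(\alpha)}$ vanishes on the P-basis $U\subseteq C(\alpha)$, so $\deg(\Lambda_{C(\alpha)})\ge\deg(\Lambda_U)=m$ by Lemma~\ref{lem10} (phrased there as a contradiction), with $x^m-\mathrm{N}_{\F_{q^m}/\F_q}(\alpha)\in\cI(C(\alpha))$ coming from Lemma~\ref{lem4}; you additionally make explicit the monic-uniqueness step that the paper leaves implicit.
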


\begin{IEEEproof}
Assume that $\Lambda_{C(\alpha)} = f(x)\in\F[x;\sigma,0]$ with $\deg(f)<m$. Then, $f(\gamma)=0$ for all $\gamma$ in the P-base $U\subseteq C(\alpha)$. Therefore, $\deg(\Lambda_U)\le\deg(f)<m$ which contradicts that $\deg(\Lambda_U)=m$ by Lemma \ref{lem10}.
\end{IEEEproof}

Finally, we generalize Corollaries \ref{cor3} and \ref{cor4} to the union of multiple conjugacy classes in Theorem \ref{thm7}, which is crucial for sCRT shown in next Subsection \ref{subsec-CRT}.

\begin{theorem}\label{thm7}
Let $\F[x;\sigma,\delta]=\F_{q^m}[x;x^q,0]$. For $\ell$ distinct conjugacy classes $C(\alpha_1),C(\alpha_2),\dots,C(\alpha_\ell)$, let $\{\beta_1,\beta_2,\dots,\beta_m\}$ be a basis of $\F_{q^m}$ over $\F_q$, then $U=\cup_{i\in[\ell]}U_i$ is a P-basis of $C=\cup_{i\in[\ell]}C(\alpha_i)$, where $U_i=\{\gamma_{i,j}=\beta_j^{q-1}\alpha_i : j\in[m]\}$ is a P-basis of $C(\alpha_i)$ for $i\in[\ell]$. Moreover,
\begin{equation*}
    \Lambda_{C}(x)
    = \Lambda_{U}(x)
    = \prod_{i\in[\ell]} \left( x^m-\mathrm{N}_{\F_{q^m}/\F_q}(\alpha_i) \right),
\end{equation*}
where the terms can be arbitrarily reordered.
\end{theorem}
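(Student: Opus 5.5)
Set $f_i(x)=x^m-\mathrm{N}_{\F_{q^m}/\F_q}(\alpha_i)$ and $F(x)=\prod_{i\in[\ell]}f_i(x)$. Each $f_i$ commutes with every skew polynomial by Lemma~\ref{lem4}-3), so the factors of $F$ may be reordered freely; this gives the last claim. The plan is to show that $F$ vanishes on all of $C$, that $\deg(\Lambda_U)=m\ell=\deg(F)$, and then to use minimality and the zero set of $F$.

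\textbf{Step 1: $F$ annihilates $C$ and $\cZ(F)=C$.} Fix $a\in C(\alpha_i)$. Move $f_i$ to the rightmost position, writing $F=g\,f_i$ with $g=\prod_{j\ne i}f_j$. Lemma~\ref{lem7} gives $F(a)=g(a)f_i(a)=0$. For the converse, take $a\notin C$ and peel off factors one at a time with Lemma~\ref{lem7}. This gives
\begin{equation*}
F(a)=\prod_{i\in[\ell]}f_i(a)=\prod_{i\in[\ell]}\left(\mathrm{N}_{\F_{q^m}/\F_q}(a)-\mathrm{N}_{\F_{q^m}/\F_q}(\alpha_i)\right),
\end{equation*}
and each factor is nonzero by Corollary~\ref{cor2}. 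Hence $\cZ(F)=C$. In particular $C$ is P-closed, since $\cZ(\cI(C))\subseteq\cZ(F)=C$.

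\textbf{Step 2: $\deg(\Lambda_U)=m\ell$.} This is the main obstacle, because it requires $U$ to be P-independent. I would argue by induction on $\ell$ using left divisibility. Since $\cI(U)$ is a left ideal containing $F$, we have $F=h\Lambda_U$ for some $h$, so $\deg\Lambda_U\le m\ell$. For the lower bound, suppose $\deg\Lambda_U<m\ell$. Put $V=\cup_{i<\ell}U_i$; by induction $\Lambda_V=\prod_{i<\ell}f_i$, of degree $m(\ell-1)$.

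Since $\Lambda_V$ divides $\Lambda_U$ on the right, write $\Lambda_U=p\,\Lambda_V$ with $\deg p<m$. For $\gamma\in U_\ell$ we have $\Lambda_V(\gamma)\neq0$, because $\gamma\notin\cZ(\Lambda_V)=\cup_{i<\ell}C(\alpha_i)$ by Step 1. Moreover $\Lambda_V$ is a product of the central factors $f_i$, so Lemma~\ref{lem7} applies to it as well: $0=\Lambda_U(\gamma)=p(\gamma)\Lambda_V(\gamma)$. Hence $p(\gamma)=0$ for all $\gamma\in U_\ell$. Then $p\in\cI(U_\ell)$ with $\deg p<m=\deg\Lambda_{U_\ell}$ (Corollary~\ref{cor3}), a contradiction unless $p=0$, which is impossible since $\Lambda_U$ is monic.

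The one point needing care is that Lemma~\ref{lem7} applies to a product of several $f_i$. Iterating the lemma shows that $(g\prod f_i)(a)=g(a)\prod f_i(a)$, so this is fine. Given $\deg\Lambda_U=m\ell$, Lemma~\ref{lem10} shows that $U$ is P-independent, and monicity together with equal degrees gives $\Lambda_U=F$.

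\textbf{Step 3: conclusion.} We have $\overline U=\cZ(\Lambda_U)=\cZ(F)=C$, so $U$ is a P-basis of $C$. For $\Lambda_C$, note that $U\subseteq C$, so $\cI(C)\subseteq\cI(U)$ and $\deg\Lambda_C\ge\deg\Lambda_U=m\ell$. On the other hand $F\in\cI(C)$ is monic of degree $m\ell$. Therefore $\Lambda_C=F=\Lambda_U$.
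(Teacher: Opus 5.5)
Your proof is correct, but the key step, the lower bound $\deg\Lambda_U\ge m\ell$ (equivalently, P-independence of $U$), goes by a different route than the paper's.

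The paper handles $\ell=2$ ring-theoretically. For any $g\in\mathcal{I}(U)$ it writes $g=g_1\Lambda_{U_1}=g_2\Lambda_{U_2}$. It then combines the scalar B\'ezout identity $a\Lambda_{U_1}+b\Lambda_{U_2}=1$, with $a=-b=(\mathrm{N}(\alpha_2)-\mathrm{N}(\alpha_1))^{-1}\in\mathbb{F}_q^*$, and the centrality of the $\Lambda_{U_i}$ (part 3 of Lemma~\ref{lem4}) to get $g=(ag_2+bg_1)\Lambda_{U_1}\Lambda_{U_2}$. For $\ell>2$ it simply appeals to induction.

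You instead work through evaluations. You factor $\Lambda_U=p\,\Lambda_V$, and the iterated product rule (Lemma~\ref{lem7}) together with $\Lambda_V(\gamma)\neq 0$ for $\gamma\in U_\ell$ puts $p$ in $\mathcal{I}(U_\ell)$. Corollary~\ref{cor3} then finishes.

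The paper's computation has the merit of being exactly what is reused for uniqueness in the sCRT (Theorem~\ref{thm8}). However, its inductive step is not spelled out. It would need a B\'ezout relation between $\prod_{i<\ell}f_i$ and $f_\ell$ whose coefficients are no longer both scalars; suitable ones can be taken in the central subring $\mathbb{F}_q[x^m]$, which the paper leaves implicit. Your induction runs uniformly for all $\ell$ using only Lemma~\ref{lem7} and Corollary~\ref{cor3}.

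Proving $\mathcal{Z}(F)=C$ first is also economical. It gives P-closedness of $C$ and $\overline{U}=C$ at once, whereas the paper derives $\overline{U}\subseteq C$ at the end from the same product-rule observation.

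Like the paper, you tacitly use two facts. First, $|U|=m\ell$ when invoking Lemma~\ref{lem10}; this holds because the classes are disjoint and $\beta_j^{q-1}\neq\beta_{j'}^{q-1}$ for $j\neq j'$. Second, $\alpha_i\neq 0$, which Corollary~\ref{cor3} requires.
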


\begin{IEEEproof}
It is sufficient to prove the conclusion on the case $\ell=2$. The conclusion on the case $\ell>2$ can be obtained by induction. For ease of notation, denote $\mathrm{N}(x)=\mathrm{N}_{\F_{q^m}/\F_q}(x)$. According to Lemma \ref{lem4},
\begin{equation*}
    (x^m-\mathrm{N}(\alpha_1))(x^m-\mathrm{N}(\alpha_2))
    = (x^m-\mathrm{N}(\alpha_2))(x^m-\mathrm{N}(\alpha_1)).
\end{equation*}
Therefore, in what follows we only need to prove one order of the product.

We first prove that $\Lambda_{U}=(x^m-\mathrm{N}(\alpha_1))(x^m-\mathrm{N}(\alpha_2))$. Assume that
\begin{equation*}
    g(x)\in \cI(U)
\end{equation*}
with $\deg(g)<2m$. Then, $g(x)\in\cI(U_1)$ since $U_1\subseteq U$. By Corollary \ref{cor3}, $\Lambda_{U_1}(x)=x^m-\mathrm{N}(\alpha_1)$. Thus, there exists a non-zero skew polynomial $g_1(x)\in\F[x;\sigma,0]$ such that
\begin{equation}\label{eqn27}
    g(x) = g_1(x) \Lambda_{U_1}(x).
\end{equation}
Similarly, for $U_2$, there exists a non-zero skew polynomial $g_2(x)\in\F[x;\sigma,0]$ such that
\begin{equation}\label{eqn28}
    g(x) = g_2(x) \Lambda_{U_2}(x).
\end{equation}
Let $a=(\mathrm{N}(\alpha_2)-\mathrm{N}(\alpha_1))^{-1}\in\F_q^*$ and $b=-(\mathrm{N}(\alpha_2)-\mathrm{N}(\alpha_1))^{-1}\in\F_q^*$. Then,
\begin{equation}\label{eqn29}
    a \Lambda_{U_1}(x) + b \Lambda_{U_2}(x) = 1.
\end{equation}
Combining \eqref{eqn27}, \eqref{eqn28} and \eqref{eqn29}, we have
\begin{equation*}
    a \Lambda_{U_1}(x) \cdot g_2(x) \Lambda_{U_2}(x)
    + b \Lambda_{U_2}(x) \cdot g_1(x) \Lambda_{U_1}(x) = g(x).
\end{equation*}
According to Lemma \ref{lem4},
\begin{equation}\label{eqn commut}
\begin{aligned}
    \Lambda_{U_1}(x) \cdot g_2(x) &= g_2(x) \cdot \Lambda_{U_1}(x), \\
    \Lambda_{U_2}(x) \cdot g_1(x) &= g_1(x) \cdot \Lambda_{U_2}(x).
\end{aligned}
\end{equation}
Thus,
\begin{equation*}
\begin{aligned}
    g(x) &= a g_2(x) \Lambda_{U_1}(x) \Lambda_{U_2}(x)
    + b g_1(x) \Lambda_{U_2}(x) \Lambda_{U_1}(x) \\
    &= (ag_2(x)+bg_1(x))(\Lambda_{U_1}(x) \Lambda_{U_2}(x)).
\end{aligned}
\end{equation*}
We claim that $ag_2(x)+bg_1(x)\not \equiv0$. Otherwise, $ag_2(x)+bg_1(x)\equiv0$. Then,
\begin{equation*}
    g_2(x)= -a^{-1}bg_1(x) = g_1(x).
\end{equation*}
Together with \eqref{eqn27} and \eqref{eqn28}, by Lemma \ref{lem4}, we have
\begin{equation*}
    (\Lambda_{U_1}(x)- \Lambda_{U_2}(x))g_2(x) \equiv 0.
\end{equation*}
Since $g_2(x)$ is a non-zero skew polynomial, $\Lambda_{U_1}(x)- \Lambda_{U_2}(x)\equiv0$, i.e.,
\begin{equation*}
    x^m-\mathrm{N}(\alpha_1) = x^m-\mathrm{N}(\alpha_2).
\end{equation*}
This is to say
\begin{equation*}
    \mathrm{N}(\alpha_1) = \mathrm{N}(\alpha_2),
\end{equation*}
which implies a contradiction $C(\alpha_1)=C(\alpha_2)$ by Corollary \ref{cor2}. Hence, $ag_2(x)+bg_1(x)\not \equiv0$ and
\begin{equation*}
    \deg(g)\ge \deg(\Lambda_{U_1})+\deg(\Lambda_{U_2})= 2m,
\end{equation*}
which contradicts to assumption $\deg(g)<2m$. Therefore, $\Lambda_{U}=\Lambda_{U_1}(x) \Lambda_{U_2}(x)$. This is to say $U$ is P-independent by Lemma \ref{lem10}.

Assume that $h(x)\in\cI(C)$ with $\deg(h)<2m$. Then, $h(x)\in\cI(U)$ since $U_1\subseteq C(\alpha_1)$ and $U_2\subseteq C(\alpha_2)$, which contradicts to $\deg(\Lambda_{U})=2m$. Thus, $\deg(\Lambda_{C})\ge2m$. In fact $\Lambda_{C}=\Lambda_{U}$ since $\Lambda_{U}\in\cI(C)$ by Lemmas \ref{lem4} and \ref{lem7}. Therefore,
\begin{equation*}
    \Lambda_{C}= \Lambda_{U}= (x^m-\mathrm{N}(\alpha_1))(x^m-\mathrm{N}(\alpha_2)).
\end{equation*}

Finally, we need to prove $U$ is a P-basis of $C$. According to Definition \ref{def p base} and $U$ is P-independent, we need to prove
\begin{enumerate}
    \item $\overline{C}=C$, i.e., $C$ is P-closed and
    \item $\overline{U}=C$, i.e., $U$ is a P-generator of $C$.
\end{enumerate}
It is sufficient to prove $\overline{U}=C$, since $\overline{U}=C$ implies $C$ is P-closed.  Note that
\begin{equation*}
    \overline{U} = \cZ(\Lambda_U) =  \{ a\in\F : \Lambda_{U}(a)=0 \}.
\end{equation*}
By $\Lambda_{C}=\Lambda_{U}$, we have
\begin{equation*}
    C\subseteq \overline{U}.
\end{equation*}
We claim that $\overline{U}\subseteq C$. Assume that $\Lambda_{U}(a)=0$ for $a\in\F_{q^m}$. Then,
\begin{equation*}
    \Lambda_{U}(a)
    = \Lambda_{U_1}(a) \Lambda_{U_2}(a) = 0.
\end{equation*}
This implies $\Lambda_{U_1}(a)=0$ or $\Lambda_{U_2}(a)=0$ since $\Lambda_{U_1}(a), \Lambda_{U_2}(a)\in \F_q$ by Lemma \ref{lem4}-2). According to Corollary \ref{cor2}, $a\in C(\alpha_1)$ or $a\in C(\alpha_2)$. Thus, $a\in C(\alpha_1)\cup C(\alpha_2)$. This is to say
\begin{equation*}
    \overline{U}\subseteq C.
\end{equation*}
Therefore, $\overline{U}=C$. We complete the proof.
\end{IEEEproof}

\subsection{sCRT for skew polynomials}\label{subsec-CRT}

We now combine the preceding minimal-polynomial characterization and evaluation product rule with skew Lagrange interpolation. The resulting theorem should be viewed as an evaluation-compatible CRT specialization for the moduli $\Lambda_{C(\alpha_i)}(x)$: besides solving the congruence system, its construction is aligned with the P-bases on which the code symbols are evaluated. Lemma~\ref{lem Lag} supplies the interpolation step used in the existence proof.

\begin{lemma}[Lagrange interpolation \cite{lam1985general}]\label{lem Lag}
    Let $B\subseteq\mathbb{F}$ be a P-closed set with P-basis $U=\{\gamma_1,\gamma_2,\dots,\gamma_n\}$. For every $a_1,a_2,\dots,a_n\in\mathbb{F}$, there exists a unique $f(x)\in\mathbb{F}[x;\sigma,\delta]$ such that $\deg(f)<n$ and $f(\gamma_i)=\alpha_i$ for each $i\in[n]$.
\end{lemma}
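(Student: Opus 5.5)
The plan is a dimension count. Evaluation at the points of $U$ is a left $\F$-linear map on the skew polynomials of degree less than $n$. I will show this map is injective, and then surjectivity follows by comparing dimensions. Uniqueness and existence both come from this one fact. (In the statement the prescribed values are written $\alpha_i$; I read them as $a_i$.)

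\textbf{Step 1 (linearity).} Let $V_n=\{f\in\F[x;\sigma,\delta]:\deg(f)<n\}$. It is a left $\F$-vector space of dimension $n$, with basis $1,x,\dots,x^{n-1}$, since coefficients are written on the left. Consider
\begin{equation*}
E:V_n\to\F^n,\qquad E(f)=\bigl(f(\gamma_1),\dots,f(\gamma_n)\bigr).
\end{equation*}
By Lemma~\ref{lem2}, $f(\gamma)=\sum_i a_i\Phi_i(\gamma)$. Hence $(cf+g)(\gamma)=c\,f(\gamma)+g(\gamma)$ for $c\in\F$, and $E$ is left $\F$-linear. This holds because left scalars pass through the formula of Lemma~\ref{lem2}, with no use of multiplicativity of evaluation.

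\textbf{Step 2 (injectivity, i.e.\ uniqueness).} Suppose $E(f)=0$ for some $f\in V_n$. Then $f\in\cI(U)$. Since $U$ is a P-basis, it is P-independent, so Lemma~\ref{lem10} gives $\deg(\Lambda_U)=|U|=n$.

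Next I need that every nonzero element of the left ideal $\cI(U)$ has degree at least $\deg(\Lambda_U)$. I would show this by right division: write $f=h\Lambda_U+r$ with $\deg(r)<\deg(\Lambda_U)$. By Lemma~\ref{lem6}, $(h\Lambda_U)(\gamma)=0$ for every $\gamma\in U$, because $\Lambda_U(\gamma)=0$. So $r\in\cI(U)$. If $r\neq0$, then after making $r$ monic we contradict the minimality of $\Lambda_U$; hence $r=0$. It follows that $f=h\Lambda_U$ with $\deg(f)<n=\deg(\Lambda_U)$. Since $\deg(h\Lambda_U)=\deg(h)+\deg(\Lambda_U)$, this forces $h=0$, so $f=0$.

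Uniqueness follows at once: if two interpolants exist, their difference lies in $\ker E=\{0\}$.

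\textbf{Step 3 (existence).} $E$ is an injective left-linear map between left $\F$-vector spaces of the same dimension $n$, so it is bijective. Therefore every $(a_1,\dots,a_n)\in\F^n$ is attained.

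A constructive alternative uses Lagrange-type basis polynomials. Put $L_i=\Lambda_{U\setminus\{\gamma_i\}}$. P-independence gives $\gamma_i\notin\overline{U\setminus\{\gamma_i\}}$, so $L_i(\gamma_i)\neq0$, while $L_i(\gamma_j)=0$ for $j\neq i$. Also $\deg(L_i)=n-1$ by Lemma~\ref{lem10}, since subsets of P-independent sets are P-independent. Then
\begin{equation*}
f=\sum_i a_iL_i(\gamma_i)^{-1}L_i
\end{equation*}
works, again by left linearity.

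\textbf{Main obstacle.} The delicate point is Step 2: it relies on $\deg(\Lambda_U)=n$ and on $\Lambda_U$ being a right divisor of every element of $\cI(U)$. Both rest on Lemma~\ref{lem10} and on the product rule of Lemma~\ref{lem6}, which ensures that left multiples of $\Lambda_U$ still vanish on $U$. I would state these two facts explicitly before running the dimension count.
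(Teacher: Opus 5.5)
The paper gives no proof of this lemma; it is quoted from Lam, so there is no in-paper argument to match. Your proof is correct, and it is self-contained relative to the paper's own lemmas.

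By Lemma~\ref{lem2}, the matrix of your map $E$ in the basis $1,x,\dots,x^{n-1}$ is the skew Vandermonde matrix $(\Phi_j(\gamma_i))_{0\le j<n,\,i\in[n]}$. Your injectivity claim is therefore exactly the nonsingularity of this matrix for a P-independent set, which is the form the result takes in Lam's theory of Vandermonde matrices. Your alternative with $L_i=\Lambda_{U\setminus\{\gamma_i\}}$ is the classical Lagrange-basis construction, and it additionally gives the interpolant explicitly. The fact you use there, that subsets of P-independent sets are P-independent, is true because P-closure is monotone.

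Step~2 can be shortened. Take a nonzero $f\in\cI(U)$ with $\deg(f)<n=\deg(\Lambda_U)$ and left-multiply it by the inverse of its leading coefficient; this keeps it in $\cI(U)$ by your Step~1. The result already contradicts the definition of $\Lambda_U$ as the monic element of least degree in $\cI(U)$. So neither right division nor Lemma~\ref{lem6} is needed, and the only substantive input is $\deg(\Lambda_U)=n$ from Lemma~\ref{lem10}.

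Finally, your $V_n$ tacitly counts $0$ as having degree less than $n$. That is the right reading: under the paper's stated convention $\deg(0)=\infty$, the existence claim would fail for $a_1=\cdots=a_n=0$.
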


\begin{theorem}[sCRT]\label{thm8}
Let $\F[x;\sigma,\delta]=\F_{q^m}[x;x^q,0]$. Suppose $C(\alpha_i), i\in[t]$ are $t$ distinct conjugacy classes. Then, for any $t$ skew polynomials $m_1(x),\dots,m_t(x)\in \F_{q^m}[x;x^q,0]$, there exists a unique polynomial $f(x)\in \F_{q^m}[x;x^q,0]$ with respect to the right modulo $\Lambda_{\cup_{i\in[t]}C(\alpha_i)}(x)$ such that
\begin{equation*}
    f(x) \equiv m_i(x) \pmod{\Lambda_{C(\alpha_i)}(x)} \mbox{ for all } i\in[t],
\end{equation*}
where ``mod'' means right modulo since the minimal skew polynomial of $C(\alpha)$ is the generator of left ideal $\cI(C(\alpha))$.
\end{theorem}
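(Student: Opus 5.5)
Throughout, write $\Lambda_i(x)=\Lambda_{C(\alpha_i)}(x)=x^m-\mathrm{N}_{\F_{q^m}/\F_q}(\alpha_i)$, as given by Corollary~\ref{cor4}, and $\Lambda(x)=\Lambda_{\cup_{i\in[t]}C(\alpha_i)}(x)=\prod_{i\in[t]}\Lambda_i(x)$, as given by Theorem~\ref{thm7}. The plan is to mimic the classical CRT proof, using two facts: each $\Lambda_i$ is central (Lemma~\ref{lem4}-3)), and any two of them differ by a nonzero element of $\F_q$ (Lemma~\ref{lem4}-2) with Corollary~\ref{cor2}).

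\emph{Existence.} For each $i$ set $P_i(x)=\prod_{j\ne i}\Lambda_j(x)$; this is central and its factors may be ordered arbitrarily. For $j\ne i$ we have $\Lambda_j=\Lambda_i+(\mathrm{N}(\alpha_i)-\mathrm{N}(\alpha_j))$, so
\[
P_i\equiv \prod_{j\ne i}(\mathrm{N}(\alpha_i)-\mathrm{N}(\alpha_j))\pmod{\Lambda_i}.
\]
Here $c_i:=\prod_{j\ne i}(\mathrm{N}(\alpha_i)-\mathrm{N}(\alpha_j))\in\F_q^*$, because the classes are distinct. Expanding $P_i=(\Lambda_i+\text{const})\cdots$ and using centrality, every term except the constant product has a right factor $\Lambda_i$, which justifies this congruence. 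Also $P_i\equiv 0\pmod{\Lambda_j}$ for $j\ne i$, since $\Lambda_j$ can be moved to the right end of $P_i$. Define
\[
f(x)=\sum_{i\in[t]} c_i^{-1}\, m_i(x)\,P_i(x).
\]
Right congruence modulo $\Lambda_k$ is compatible with left multiplication: if $g\equiv g'$ then $hg\equiv hg'$, because the left ideal $\F[x;\sigma,0]\Lambda_k$ is closed under left multiplication. Hence $m_iP_i\equiv c_i m_i\pmod{\Lambda_i}$, using that $c_i$ lies in $\F_q$ and therefore commutes with $x$ by Lemma~\ref{lem1}. Likewise $m_iP_i\equiv 0\pmod{\Lambda_j}$ for $j\ne i$. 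Therefore $f\equiv m_k\pmod{\Lambda_k}$ for every $k$. Finally, replace $f$ by its right remainder modulo $\Lambda$. This is allowed because $\Lambda$ has each $\Lambda_k$ as a right factor, by reordering, so reducing modulo $\Lambda$ preserves every congruence.

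Alternatively, existence could go through Lemma~\ref{lem Lag}: interpolate on the P-basis $U$ of Theorem~\ref{thm7} with the values $m_i(\gamma_{i,j})$. The algebraic construction above avoids having to translate evaluations back into congruences, so I would use it as the main argument and mention interpolation only as the evaluation-compatible interpretation.

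\emph{Uniqueness.} Suppose $f$ and $f'$ both satisfy the system. Then $g=f-f'$ is right-divisible by every $\Lambda_k$, so $g(a)=0$ for all $a\in C(\alpha_k)$ and all $k$, by Lemma~\ref{lem6} or Lemma~\ref{lem7}. Hence $g\in\cI(\cup_k C(\alpha_k))$, the left ideal generated by $\Lambda$, so $g\equiv 0\pmod{\Lambda}$. Thus $f$ is unique modulo $\Lambda$, and it is unique outright among polynomials of degree less than $tm$.

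\emph{Main obstacle.} The delicate step is the claim that right congruence behaves well under the products used above, namely that $P_i\equiv c_i\pmod{\Lambda_i}$ and that left multiplication preserves congruences. In a noncommutative ring this needs care. Centrality of the $x^m-\mathrm{N}(\cdot)$ factors makes the expansion of $P_i$ legitimate, and left-ideal closure handles the left multiplication by $m_i$. I would state this explicitly as a short auxiliary observation before assembling the proof.
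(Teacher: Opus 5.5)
Your proof is correct. It takes a genuinely different route from the paper in both halves, although the polynomial it produces coincides with the paper's.

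\emph{Existence.} You and the paper start from the same $P_i=M_i=\prod_{j\ne i}\Lambda_{C(\alpha_j)}$. The paper then obtains its multiplier $N_i$ by skew Lagrange interpolation on the P-basis $U_i$ (Lemma~\ref{lem Lag}), prescribing $N_i(\gamma_{i,j})=M_i(\gamma_{i,j})^{-1}$. It uses Lemma~\ref{lem7} to see that $N_iM_i-1$ vanishes on $U_i$. It then converts this back into $N_iM_i\equiv 1\pmod{\Lambda_{C(\alpha_i)}}$, because $\Lambda_{U_i}=\Lambda_{C(\alpha_i)}$ generates $\cI(U_i)$.

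You avoid evaluation entirely. Centrality together with $\Lambda_j=\Lambda_i+(\mathrm{N}(\alpha_i)-\mathrm{N}(\alpha_j))$ gives $P_i\equiv c_i\pmod{\Lambda_i}$ with $c_i\in\F_q^*$ directly. Your computation in fact identifies the paper's interpolant. By Lemmas~\ref{lem4} and~\ref{lem7}, $M_i(\gamma_{i,j})=c_i$ for every $j$, so uniqueness in Lemma~\ref{lem Lag} forces $N_i$ to be the constant $c_i^{-1}$. Hence the paper's $\sum_i m_iN_iM_i$ is exactly your $f$.

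\emph{Uniqueness.} Here the roles are swapped. The paper argues algebraically: it reuses the B\'ezout identity $a\Lambda_{U_1}+b\Lambda_{U_2}=1$ and commutativity from the proof of Theorem~\ref{thm7} to put $f-f'$ in the left ideal generated by $\prod_i\Lambda_{C(\alpha_i)}$. It then needs Theorem~\ref{thm7} to identify that product with $\Lambda_{\cup_iC(\alpha_i)}$. You argue through zero sets: $f-f'$ vanishes on every class (Lemma~\ref{lem6}), and $\cI(B)$ is the left ideal generated by $\Lambda_B$. This gives divisibility by $\Lambda_{\cup_iC(\alpha_i)}$ directly rather than via the product formula.

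\emph{Comparison.} Your version is more transparent. It exhibits this sCRT as the classical CRT for the two-sided ideals generated by the central, pairwise comaximal elements $x^m-\mathrm{N}(\alpha_i)\in\F_q[x^m]$, with explicit coefficients in $\F_q^*$ and no interpolation. The paper's phrasing serves its evaluation-on-P-bases narrative, but in this setting its interpolation step only recomputes a constant.

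\emph{Shared caveat.} Both arguments tacitly need every $\alpha_i\neq 0$. Corollary~\ref{cor4} is stated only for $\alpha\in\F_{q^m}^*$, and $C(0)=\{0\}$ has minimal skew polynomial $x$, which is not central.
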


\begin{IEEEproof}
First, we prove existence. Let $M_i(x)=\prod_{j\in[t]\setminus\{i\}} \Lambda_{C(\alpha_j)}(x)$ for $i\in[t]$. According to Lemma \ref{lem Lag}, there exists a unique polynomial $N_i(x)\in\F_{q^m}[x;x^q,0]$ with degree less than $m$, such that 
\begin{equation*}
    N_i(\gamma_{i,j}) = (M_i(\gamma_{i,j}))^{-1},
\end{equation*}
where $U_i=\{\gamma_{i,1},\gamma_{i,2},\dots,\gamma_{i,m}\}$ is P-basis of $C(\alpha_i)$ for $i\in[t]$. Then, the polynomial $(N_iM_i)(x)-1$ satisfies
\begin{equation*}
    (N_iM_i)(\gamma_{i,j})-1 = N_i(\gamma_{i,j})M_i(\gamma_{i,j})-1 = 0
\end{equation*}
for all $j\in[m]$ by Lemma \ref{lem7}. Thus, 
\begin{equation*}
    (N_iM_i)(x)-1 \in \mathcal{I}(U_i).
\end{equation*}
Since the minimal skew polynomial $\Lambda_{U_i}(x)$ of $\mathcal{I}(U_i)$ and $\Lambda_{C(\alpha_i)}(x)=\Lambda_{U_i}(x)$ by Corollaries \ref{cor3} and \ref{cor4}, we have
\begin{equation*}
    (N_iM_i)(x)-1 = h(x)\Lambda_{C(\alpha_i)}(x)
\end{equation*}
for some polynomial $h(x)\in\F_{q^m}[x;x^q,0]$. This is to say
\begin{equation*}
    (N_iM_i)(x) \equiv 1 \pmod{\Lambda_{C(\alpha_i)}(x)}
\end{equation*}
and
\begin{equation*}
    (N_iM_i)(x) \equiv 0 \pmod{\Lambda_{C(\alpha_j)}(x)} \text{ for }j\ne i.
\end{equation*}
Let
\begin{equation}\label{eqn41}
    f(x)=\sum_{i\in[t]} (m_iN_iM_i)(x)
\end{equation}
Therefore, $f(x)$ is a solution of system of congruence equations.

Second, we show uniqueness. Assume that $f(x)$ and $g(x)$ satisfy the required system of congruence equations. Then,
\begin{equation*}
    f(x)-g(x) \equiv 0 \pmod{\Lambda_{C(\alpha_i})(x)} \mbox{ for all } i\in[t].
\end{equation*}
Thus, there exists $v_i(x)\in\F[x;\sigma,0]$ for each $i\in[t]$ such that
\begin{equation*}
    f(x)-g(x) = v_i(x)\Lambda_{C(\alpha_i)}(x).
\end{equation*}
Same as the proof of Theorem \ref{thm7}, i.e., \eqref{eqn27} and \eqref{eqn28}, we can derive that
\begin{equation*}
    f(x)-g(x) \equiv 0 \pmod{\prod_{i=1}^{t}\Lambda_{C(\alpha_i})(x)}.
\end{equation*}
Note that $\prod_{i=1}^{t}\Lambda_{C(\alpha_i})(x)=\Lambda_{\cup_{i\in[t]}C(\alpha_i)}$ by Theorem \ref{thm7}. Therefore, we complete the proof.
\end{IEEEproof}

\begin{remark}\label{rem2}
The skew CRT of \cite{gao2016chinese} gives a quotient-ring decomposition based on a factorization of $x^n-1$ into pairwise coprime two-sided maximal elements. On the one hand, Theorem~\ref{thm8} has a different role in the present paper. We restrict to $\F_{q^m}[x;x^q,0]$ and choose the moduli $\Lambda_{C(\alpha_i)}(x)$ from prescribed evaluation sets: by Theorem~\ref{thm7}, their product is the minimal skew polynomial of the union of the corresponding conjugacy classes. In addition, these moduli commute as in Lemma~\ref{lem4}, and Lemma~\ref{lem7} restores the product-evaluation identity required by the later conversion formulas. Thus, the feature used in this paper is not merely a quotient decomposition, but the compatibility between the CRT congruences and evaluations on the selected P-bases. 
On the other hand, the following toy Example \ref{exam} shows that the family of moduli arising in Theorem~\ref{thm8} is not contained in the specific $x^n-1$ setting considered in \cite{gao2016chinese}. Specifically, we show there exists $\Lambda_{\cup_{i\in[t]}C(\alpha_i)}(x)\ne x^n-1$ for any positive integer $n$.
\end{remark}

\begin{example}\label{exam}
	Let $q=5$ and $m=2$. Then $\sigma(x)=x^5$ and $\mathrm{N}_{\mathbb{F}_{q^m}/\mathbb{F}_q}=\mathrm{N}_{\mathbb{F}_{5^2}/\mathbb{F}_{5}}$. Choose two distinct conjugacy classes whose representatives have norms $2$ and $3$ in $\mathbb{F}_5^{*}$. By Corollary \ref{cor2}, their minimal skew polynomials are $\Lambda_1(x)=x^2-2$ and $\Lambda_2(x)=x^2-3$. Then, 
	\begin{equation*}
		\Lambda_1(x)\Lambda_2(x) 
		= x^4 - 5x^2 + 6
		= x^4+1 \pmod{5}.
	\end{equation*}
	If there exists $n$ such that $x^4+1=x^n-1$, then $n=4$. This implies a contradiction $1=-1$ over $\mathbb{F}_{5^2}$. Therefore, no such positive integer $n$ exists.
\end{example}

\section{Conversion on linearized Reed-Solomon Codes}\label{sec-conversion LRS}

We now use the evaluation-compatible sCRT of Theorem~\ref{thm8} to build merge conversions at the level of skew evaluation polynomials. The construction is designed so that each initial message polynomial determines one congruence component, while the resulting final polynomial can be evaluated to recover unchanged symbols and generate written symbols from a controlled set of reads. We first review PECs and the skew-polynomial representation of linearized Reed--Solomon codes in Subsection~\ref{sub-pec lrs}, and then treat distinct and identical initial codes separately.

\subsection{polynomial evaluation codes and linearized Reed-Solomon codes}\label{sub-pec lrs}

\begin{definition}[Polynomial evaluation codes (PECs)]\label{def pec}
Let $\mb{f}=(f_i)_{i\in[k]} \in (\F[x;\sigma,\delta])^k$ be a $k$-dimensional vector of polynomials such that $f_1,\dots,f_k$ are linearly independent over $\F$. Let $A=\{\alpha_1,\alpha_2,\dots,\alpha_n\} \subseteq \F$ contain $n$ distinct elements. In general, the polynomial evaluation code $\cC_{\mb{f},A}$ is defined as the linear code with a generator matrix $\mb{G}_{\mb{f},A}$, where
\begin{equation*}
\mb{G}_{\mb{f},A} = \begin{pmatrix}
      f_1(\alpha_1) & f_1(\alpha_2) & \cdots & f_1(\alpha_n) \\
      f_2(\alpha_1) & f_2(\alpha_2) & \cdots & f_2(\alpha_n) \\
      \vdots & \vdots & \ddots & \vdots \\
      f_k(\alpha_1) & f_k(\alpha_2) & \cdots & f_k(\alpha_n)
    \end{pmatrix}.
\end{equation*}
\end{definition}

Before defining the linearized Reed-Solomon codes, we introduce a left vector space isomorphism shown in \cite{martinez-penas2018skew}:
\begin{equation*}
\begin{array}{cccc}
 \F[x;\sigma,\delta] & \to & \F[D_\alpha^{\sigma,\delta}] \\
  f(x)=\sum_{i=0}^{d}a_ix^i & \mapsto & f^{D_\alpha}(x)=\sum_{i=0}^{d}a_iD_\alpha^i(x),
\end{array}
\end{equation*}
where $\F[D_\alpha^{\sigma,\delta}]$ is the left vector space of operator polynomials over $\F$ as that with basis $\{D_\alpha^i:i=0,1,\dots,d\}$.

We will now review the definition of linearized Reed-Solomon codes as follows.

\begin{definition}[Linearized Reed-Solomon codes~\cite{martinez-penas2018skew}]\label{def lrsc}
For each $k=0,1,\dots,n$, define the $k$-dimensional linearized Reed-Solomon code with conjugacy representatives $\alpha_1,\dots,\alpha_\ell\in\F$ and basis vectors $\beta_i=(\beta_{i,1},\dots,\beta_{i,n_i})\in\F^{n_i}$ for $i\in[\ell]$, as the left linear code $\mathcal{C}_{\ell,k}^{\sigma,\delta}\subseteq\F^n$ formed by the vectors $\mathbf{c}=(\mathbf{c}_1,\dots,\mathbf{c}_\ell)\in\F^n$ given by $\mathbf{c}_i=(c_{i,1},\dots,c_{i,n_i})\subseteq\F^{n_i}$ and
\begin{equation*}
    c_{i,j} = \sum_{u=0}^{k-1}a_{u} D_{\alpha_i}^u (\beta_{i,j}),
\end{equation*}
where $a_u\in\F$ for $u\in[k-1]\cup\{0\}, j\in[n_i], i\in[\ell]$.
\end{definition}

As shown in \cite[Thm. 4]{martinez-penas2018skew}, the code $\mathcal{C}_{\ell,k}^{\sigma,\delta}$ defined in Definition \ref{def lrsc} is a maximum sum-rank distance code. It is well known that maximum sum-rank distance codes are MDS codes. Thus, the access cost of conversion on maximum sum-rank distance codes is bounded by Theorem \ref{thm bound}. The following result connects the evaluation of skew polynomials and evaluation of operator polynomials.

\begin{lemma}[\cite{leroy1995pseudo}]\label{lem equiv}
Let $\alpha\in\F,\beta\in\F^*$ and $f(x)\in\F[x;\sigma,\delta]$. Then $f(D_\alpha(\beta)\beta^{-1})=f^{D_\alpha}(\beta)\beta^{-1}$.
\end{lemma}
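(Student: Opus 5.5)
The plan is to reduce the statement to monomials and then argue by induction on the exponent, using the power functions of Definition~\ref{def power}. Fix $\alpha\in\F$ and $\beta\in\F^*$, and put $\gamma = D_\alpha(\beta)\beta^{-1}$. By Lemma~\ref{lem2}, for $f(x)=\sum_{i=0}^d a_i x^i$ we have $f(\gamma)=\sum_i a_i\Phi_i(\gamma)$. By definition of the isomorphism, $f^{D_\alpha}(\beta)=\sum_i a_i D_\alpha^i(\beta)$. Both sides of the claimed identity are therefore left $\F$-linear in the coefficients $a_i$. It thus suffices to prove the monomial identity
\begin{equation*}
    \Phi_i(\gamma)\,\beta = D_\alpha^i(\beta) \quad \text{for all } i\ge 0 .
\end{equation*}
Multiplying on the right by $\beta^{-1}$ and summing with coefficients $a_i$ then gives $f(\gamma)=f^{D_\alpha}(\beta)\beta^{-1}$.

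The monomial identity is proved by induction on $i$. The base case $i=0$ holds because $\Phi_0(\gamma)=1$ and $D_\alpha^0(\beta)=\beta$. For the inductive step, assume $\Phi_i(\gamma)\beta=D_\alpha^i(\beta)$ and write $\phi=\Phi_i(\gamma)$. By Definition~\ref{def power},
\begin{equation*}
    \Phi_{i+1}(\gamma)\beta=\sigma(\phi)\gamma\beta+\delta(\phi)\beta=\sigma(\phi)D_\alpha(\beta)+\delta(\phi)\beta ,
\end{equation*}
where the second equality uses $\gamma\beta=D_\alpha(\beta)$. On the other side, $D_\alpha^{i+1}(\beta)=D_\alpha(\phi\beta)$ by the induction hypothesis. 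Expanding with multiplicativity of $\sigma$ and the Leibniz rule $\delta(\phi\beta)=\sigma(\phi)\delta(\beta)+\delta(\phi)\beta$ gives
\begin{equation*}
    D_\alpha(\phi\beta)=\sigma(\phi)\sigma(\beta)\alpha+\sigma(\phi)\delta(\beta)+\delta(\phi)\beta=\sigma(\phi)\bigl(\sigma(\beta)\alpha+\delta(\beta)\bigr)+\delta(\phi)\beta .
\end{equation*}
The bracket is exactly $D_\alpha(\beta)$, so this equals $\Phi_{i+1}(\gamma)\beta$, which closes the induction.

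The only step needing care is the inductive computation. The point is the pseudo-linearity $D_\alpha(\phi\beta)=\sigma(\phi)D_\alpha(\beta)+\delta(\phi)\beta$, which follows directly from the Leibniz rule for the $\sigma$-derivation $\delta$. One must also track the order of factors, since the argument has to hold for general $(\sigma,\delta)$ and is not specific to $\F_{q^m}[x;x^q,0]$. Everything else is linearity.
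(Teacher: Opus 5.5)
Your proof is correct. The paper does not prove this lemma; it only cites Leroy~\cite{leroy1995pseudo}. Your argument is the standard one from that literature. You use Lemma~\ref{lem2} to reduce to the monomial identity $\Phi_i\bigl(D_\alpha(\beta)\beta^{-1}\bigr)\beta=D_\alpha^i(\beta)$. You then prove that identity by induction, using the pseudo-linearity $D_\alpha(\phi\beta)=\sigma(\phi)D_\alpha(\beta)+\delta(\phi)\beta$. This makes the lemma self-contained within the paper, relying only on Lemma~\ref{lem2} and Definition~\ref{def power}.
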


In fact, Theorem \ref{thm8} can be generalized to the polynomials in $\F[D_\alpha^{\sigma,\delta}]$ since $\F[x;\sigma,\delta]$ and $\F[D_a^{\sigma,\delta}]$ are isomorphic. To directly apply Theorem \ref{thm8}, we consider following skew polynomial form codes which is equivalent to codes from Definition \ref{def lrsc} by Lemmas \ref{lem equiv} and \ref{lem11}, i.e.,
\begin{equation*}
	c_{i,j}^{\mathrm{Def 13}} \beta_{i,j}^{-1} 
	= f^{D_{\alpha_i}}(\beta_{i,j})\beta_{i,j}^{-1}
	= f(D_{\alpha_i}(\beta_{i,j})\beta_{i,j}^{-1})
	= f(\gamma_{i,j})
	= c_{i,j}^{\mathrm{Def 14}}.
\end{equation*}

\begin{definition}\label{def lrsc2}
For each $k=0,1,\dots,n$, define the $k$-dimensional linearized Reed-Solomon code with conjugacy representatives $\alpha_1,\alpha_2,\dots,\alpha_\ell\in\F$ and P-basis $U_i=\{\gamma_{i,1},\gamma_{i,2},\dots,\gamma_{i,n_i}\}$ of conjugacy class $C(\alpha_i)$ for $i\in[\ell]$, as the left linear code
$\mathcal{C}_{\ell,k}^{\sigma,\delta}\subseteq\F^n$ formed by the vectors $\mathbf{c}=(\mathbf{c}_1,\dots,\mathbf{c}_\ell)\in\F^n$ given by $\mathbf{c}_i=(c_{i,1},\dots,c_{i,n_i})\subseteq\F^{n_i}$ and
\begin{equation*}
    c_{i,j} = \sum_{u=0}^{k-1}a_{u} \Phi_u(\gamma_{i,j}),
\end{equation*}
where $a_u\in\F$ for $u\in[k-1]\cup\{0\}, j\in[n_i], i\in[\ell]$.
\end{definition}

In what follows, all $\mathcal{C}_{\ell,k}^{\sigma,\delta}$ are given according to Definition \ref{def lrsc2} instead of Definition \ref{def lrsc}. Observe that a generator matrix for $\mathcal{C}_{\ell,k}^{\sigma,\delta}$ is $\mb{G}_{\mathbf{f},U_1\cup\cdots\cup U_\ell}$, where $\mathbf{f} = (1, x, \dots, x^{k-1}) \in (\F[x;\sigma,\delta])^k$ which is exactly the form of Reed-Solomon codes in the traditional polynomial ring $\F[x]$. 

To ensure the valid application of Theorem \ref{thm8}, we consider the linear code $\mathcal{C}_{\ell,k}^{\sigma,\delta}$ satisfying $k=n_1+n_2+\cdots+n_j$ for some $1\le j\le \ell$ in what follows. Now, we propose constructions of irregular convertible codes on skew polynomials.

\subsection{Construction for distinct initial skew PECs}\label{sub-con distinct lrs}
In this subsection, we propose a construction of irregular convertible codes on distinct initial skew PECs in Construction \ref{con C}. Then, the conversion on linearized Reed-Solomon code is shown in Corollary \ref{cor-con C} derived by Construction \ref{con C}.

\begin{construction}\label{con C}
Let $t,\bar{k},r,m$ are positive integers. Let $\F[x;\sigma,\delta]=\F_{q^m}[x;x^q,0]$ with $q\ge t\bar{k}+r+1$.
\begin{enumerate}
	\item \textbf{Choose P-basis.} Suppose that $U_{i,j}=\{\gamma_{i,j,1}, \gamma_{i,j,2}, \dots, \gamma_{i,j,m}\}$ is a P-basis of conjugacy class $C(\alpha_{i,j})$ for $(i,j)\in[t]\times[\bar{k}]$ and $(i,j)\in\{0\}\times [r]$, where all $\alpha_{i,j}$ are conjugacy representatives.
	\item \textbf{Set Initial Codes.} Let $\mathbf{f} = (f_1, f_2, \dots, f_k) \in (\F_{q^m}[x;x^q,0])^k$ with $k=m\bar{k}$ such that $f_1,f_2,\dots,f_k$ are linearly independent over $\mathbb{F}_{q^m}$. Set initial codes $\mathcal{C}_{\mathbf{f},U_i\cup U_0}$ for $i\in[t]$ with
		\begin{equation*}
			U_i = \bigcup_{j\in[\bar{k}]}U_{i,j},\ U_0 = \bigcup_{j\in[r]}U_{0,j}.
		\end{equation*}
	\item \textbf{Components of sCRT.} Let $U_{-i} = \bigcup_{j\in[t]\setminus\{i\}} U_j$ for $i\in[t]$. Choose any P-basis $\hat{U}_i$ for $i\in[t]$ satisfying
		\begin{equation*}
			U_i\cap \hat{U}_i= U_i,\ U_{-i}\cap \hat{U}_i=\emptyset
		\end{equation*}
		such that
		\begin{equation*}
			\max_{i\in[k]}\{ \deg(f_i(x)) \} \le  \min_{i\in[t]}\{\deg(\Lambda_{\hat{U}_{i}}(x))\} - 1.
		\end{equation*}
	\item \textbf{Conversion.} Let $\hat{U}_{-i}=\bigcup_{j\in[t]\setminus\{i\}}\hat{U}_j$ for $i\in[t]$, and $U = \bigcup_{j\in[z]} U_{0,j} \subseteq U_0$ for some $z\le \min\{\bar{k},r\}$. Define a conversion map
		\begin{equation*}
    		\varphi: \prod_{i=1}^{t}\mathcal{C}_{\mathbf{f},U_i\cup U_0} \to \mathcal{C}_F
		\end{equation*}
		such that $\varphi(c_{\mb{m}_1},\dots,c_{\mb{m}_t}) = (\bar{c}_{\mb{m}_1},\dots,\bar{c}_{\mb{m}_{t}}, \bar{c})$, i.e.,
		\begin{equation*}
    		\cC_F \triangleq  \left\{ \varphi(c_{\mb{m}_1},\dots,c_{\mb{m}_t}) :  c_{\mb{m}_i} \in \cC_{\mb{f},U_i\cup U_0}, i\in[t] \right\},
		\end{equation*}
		where $\bar{c}_{\mb{m}_i}= ((\mb{m}_i \mb{f}^\top)(\gamma))_{\gamma \in U_i}$ are unchanged symbols, and $\bar{c} = (f(\gamma))_{\gamma\in U}$ are written symbols with
		\begin{equation}\label{eqn30}
    		f(x) = \sum_{i=1}^{t} (\mathbf{m}_i\mathbf{f}^\top)(x) \Lambda_{\hat{U}_{-i}}(x).
		\end{equation}
\end{enumerate}
\end{construction}

\begin{remark}
The requirement of $q\ge t\bar{k}+r+1$ in Construction \ref{con C} makes sure there exists $t\bar{k}+r$ distinct conjugacy classes. This comes from the result that, let $\gamma\in\F_{q^m}^*$ be a primitive element of $\F_{q^m}$, then $\gamma^0,\gamma^1,\dots,\gamma^{q-2}$ are representatives of distinct conjugacy classes~\cite{martinez-penas2019Universal}.
\end{remark}

\begin{theorem}\label{thm9}
The map $\varphi$ in Construction \ref{con C} implies a $(t,1)_{q^m}$ irregular convertible code $\sC$ consisting of $t$ initial skew PECs $\cC_{\mb{f},U_1\cup U_0},\cC_{\mb{f},U_2\cup U_0},\dots,\cC_{\mb{f},U_t\cup U_0}$, the final $[tk+|U|,tk]_{q^m}$ skew PEC $\cC_F$. Moreover, the read access cost is $t|U|$ and the write access cost is $|U|$.
\end{theorem}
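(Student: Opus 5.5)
The strategy is to show that $\cC_F$ is itself a skew PEC evaluated on $U_1\cup\cdots\cup U_t\cup U$. The unchanged symbols and the written symbols will then be evaluations of one polynomial, and each written symbol will depend on only one symbol per initial codeword.

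\textbf{Step 1: the final codeword is an evaluation of $f$.} Put $g_i=\mathbf{m}_i\mathbf{f}^\top$. By Theorem~\ref{thm7}, every $\Lambda_{\hat U_{-i}}$ is a product of commuting factors $x^m-\mathrm{N}(\alpha)$. Lemma~\ref{lem7}, applied factor by factor (the evaluations lie in $\F_q$ and commute with everything), gives
\[
f(\gamma)=\sum_{i=1}^t g_i(\gamma)\,\Lambda_{\hat U_{-i}}(\gamma)
\]
for every $\gamma\in\F_{q^m}$. For $\gamma\in U_i$ we have $\gamma\in\overline{\hat U_j}$ for $j\neq i$ only if $\gamma$ lies in the closure of $\hat U_j$. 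I will read the condition $U_{-i}\cap\hat U_i=\emptyset$ as meaning that $U_{-j}$ avoids the conjugacy classes generated by $\hat U_j$ (this is the intended meaning via P-closures). Under that reading, $\Lambda_{\hat U_{-i}}(\gamma)=0$ for $\gamma\in U_j$ with $j\ne i$, and $\Lambda_{\hat U_{-i}}(\gamma)=:\lambda_i(\gamma)\in\F_q^*$ for $\gamma\in U_i$, by Lemma~\ref{lem4}-2). Therefore $f(\gamma)=\lambda_i(\gamma)g_i(\gamma)$ on $U_i$.

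This shows that $\cC_F$ is the skew PEC defined by $\{f_s\Lambda_{\hat U_{-i}}\}_{s,i}$ on $U_1\cup\cdots\cup U_t\cup U$, up to the nonzero diagonal scaling $\lambda_i(\gamma)$ on the $U_i$ coordinates. In other words, the symbols $\bar c_{\mb m_i}$ equal $\lambda_i^{-1}$ times the evaluations of $f$, so the final code is equivalent to a PEC of length $tk+|U|=t m\bar k+mz$.

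\textbf{Step 2: dimension and injectivity.} I must show that $\varphi$ is injective, so that $\dim\cC_F=tk$. If $\varphi(c_{\mb m_1},\dots,c_{\mb m_t})=0$, then each $g_i$ vanishes on $U_i$, which is a P-basis of a union of $\bar k$ classes. Since $\deg g_i\le \max\deg f_s<\ldots$, I would rather argue directly: the unchanged symbols $\bar c_{\mb m_i}$ are $m\bar k=k$ coordinates of the initial code $\cC_{\mb f,U_i\cup U_0}$. These coordinates determine $\mb m_i$ provided that $\mb G_{\mb f,U_i}$ has rank $k$. That rank condition follows because a nonzero $g_i$ of degree $\le \deg\Lambda_{\hat U_i}-1$ cannot vanish on the P-basis $U_i$ when $\hat U_i=U_i$. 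In general, I would use the degree condition in step 3) of Construction~\ref{con C} together with Lemma~\ref{lem10} on $\hat U_i\supseteq U_i$. I expect this to be the main obstacle, namely reconciling the degree hypothesis with the vanishing set $U_i$ rather than $\hat U_i$. If needed, I would invoke the implicit requirement that the initial code is MDS, so that any $k$ coordinates form an information set.

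\textbf{Step 3: access cost.} For $\gamma\in U\subseteq U_0$, the identity in Step 1 gives
\[
\bar c_\gamma=f(\gamma)=\sum_i \Lambda_{\hat U_{-i}}(\gamma)\,g_i(\gamma),
\]
with $\Lambda_{\hat U_{-i}}(\gamma)\in\F_q$ a known constant, and $g_i(\gamma)$ is exactly the symbol of $c_{\mb m_i}$ at position $\gamma\in U_0$. Hence each of the $|U|$ written symbols is computed by reading one symbol from each of the $t$ initial codewords. This gives read cost $t|U|$ and write cost $|U|$. The symbols on $U_i$ are kept as they are, and the remaining symbols are retired. This completes the verification of the claimed parameters.
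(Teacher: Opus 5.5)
Your Steps 1 and 3 follow the paper's proof. These are the pointwise identity $f(\gamma)=\sum_i g_i(\gamma)\Lambda_{\hat U_{-i}}(\gamma)$ from Lemma~\ref{lem7}, the nonzero $\F_q$-scaling of the $U_i$-coordinates that makes $\cC_F$ equivalent to the evaluation code of $f$, and the count of one read per initial codeword for each $\gamma\in U$. You read $U_{-i}\cap\hat U_i=\emptyset$ as disjointness of the generated conjugacy classes. That is what the paper uses tacitly when it asserts $\Lambda_{\hat U_{-i}}(\gamma)\neq 0$ on $U_i$, and what Theorems~\ref{thm7} and~\ref{thm8} require of the moduli.

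The gap is Step 2, and the route you sketch (degree condition plus Lemma~\ref{lem10} on $\hat U_i$) cannot close it. Once $\hat U_i\supsetneq U_i$, that condition says nothing about $U_i$. The span of $f_1,\dots,f_k$ may then contain nonzero left multiples of $\Lambda_{U_i}$, which vanish on $U_i$ by Lemma~\ref{lem6}, so $\mb{G}_{\mb{f},U_i}$ is singular.

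Here is a concrete instance. Take $m=1$, $t=2$, $k=\bar k=r=2$, $z=1$, $U_1=\{a,b\}$, $\mb{f}=((x-a)(x-b),\,x(x-a)(x-b))$ and $|\hat U_1|=|\hat U_2|=4$, with all points distinct and nonzero in $\F_q$, $q\ge 11$. Every hypothesis of Construction~\ref{con C} holds. Yet the $U_1$-block of every final codeword is zero, so $\dim\cC_F\le 3<tk$ and $\varphi$ is not injective.

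So the theorem needs an extra hypothesis: $\mb{G}_{\mb{f},U_i}$ must be invertible, which is condition \textbf{C1} of Construction~\ref{con D}. It holds automatically when $\hat U_i=U_i$, as in Corollary~\ref{cor-con C} (your $\Lambda_{U_i}$ argument), or when the initial codes are MDS. State it as an assumption rather than invoking it ``if needed''; with it, your Step 2 is complete.

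The paper's own proof has the same hole. It obtains $\dim\cC_F=tk$ from the uniqueness in Theorem~\ref{thm8}. But the sCRT together with the degree bound only shows that distinct message tuples give distinct polynomials $f$. It does not show they give distinct evaluation vectors on $U_1\cup\cdots\cup U_t\cup U$. Your reduction to the rank of $\mb{G}_{\mb{f},U_i}$ is the correct criterion, and it makes the sCRT unnecessary for this theorem.
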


\begin{IEEEproof}
Denote $C = \bigcup_{i\in[t],j\in[\bar{k}]} C(\alpha_{i,j})$. Obviously, $f(x)$ in \eqref{eqn30} satisfies
\begin{equation}\label{eqn31}
    f(x) \equiv (\mathbf{m}_i\mathbf{f}^\top)(x) \Lambda_{\hat{U}_{-i}}(x) \pmod{\Lambda_{\hat{U}_i}(x)} \mbox{ for each } i\in[t],
\end{equation}
and, by Theorem \ref{thm7},
\begin{equation*}
    \Lambda_{\hat{U}_{-i}} = \prod_{j\in[t]\setminus\{i\}} \Lambda_{\hat{U}_j}.
\end{equation*}
Note that
\begin{equation}\label{eqn-degf}
\begin{aligned}
    \deg(f)
    &\le \max_{i\in[t]} \left\{ \max_{j\in[k]}\{\deg(f_j)\} + \deg(\Lambda_{\hat{U}_{-i}}) \right\} \\
    &\le \max_{i\in[t]} \{ \deg(\Lambda_{\hat{U}_i})-1 + \deg(\Lambda_{\hat{U}_{-i}}) \} \\
    &= \deg(\Lambda_{\cup_{i\in[t]}\hat{U}_i})-1.
\end{aligned}
\end{equation}
According to Theorem \ref{thm8}, $f(x)$ is the unique skew polynomial satisfying the system of congruence equations \eqref{eqn31} for given $\mb{m}_1,\mb{m}_2,\dots,\mb{m}_t$. Consider the skew PEC as
\begin{equation}\label{eqn-equive code}
\tilde{\cC} \triangleq \left\{ (f(\gamma))_{\gamma\in U_1\cup \cdots\cup U_t\cup U} : \mb{m}_i\in\F_{q^m}^k, i\in[t] \right\}.
\end{equation}
According to the uniqueness of $f(x)$, the code $\tilde{\mathcal{C}}$ is a PEC with dimension $\log_{q^m}|\mathbb{F}_{q^m}^k|^t=tk$ and length $tk+|U|$. By Lemma \ref{lem4},
\begin{equation}\label{eqn-scalar}
    f(\gamma) = (\mb{m}_i \mb{f}^\top)(\gamma) \Lambda_{\hat{U}_{-i}}(\gamma),
\end{equation}
where $\Lambda_{\hat{U}_{-i}}(\gamma) \ne 0$ for all $\gamma\in U_i\subseteq \hat{U}_i$. Hence, the final code $\cC_F$ is equivalent to the code $\tilde{\cC}$. Therefore, $\cC_F$ is a PEC with dimension $tk$ and length $tk+|U|$.

According to \eqref{eqn30}, we have that
\begin{equation*}
f(\gamma) = \sum_{i=1}^{t} (\mathbf{m}_i\mathbf{f}^\top)(\gamma) \Lambda_{\hat{U}_{-i}}(\gamma) \mbox{ for all } \gamma\in U.
\end{equation*}
This is to say that, we can generate the written symbols $\bar{c}$ by reading $(\mb{m}_i \mb{f}^\top)(\gamma)$ for all $\gamma\in U, i\in[t]$, i.e., the read access cost is $t|U|$ and write access cost is $|U|$, since $\Lambda_{\hat{U}_{-i}}(\gamma)$ can be regarded as constant independent with $\mathbf{m}_j$ for $j\in [t]$.
\end{IEEEproof}

\begin{remark}
The P-basis in Construction \ref{con C} can be derived from Theorem \ref{thm7}, i.e., let $\gamma_{i,j,\ell}=\beta_{\ell}^{q-1}\alpha_{i,j}$ for $(i,j)\in[t]\times[\bar{k}]\cup \{0\}\times[r]$ and $\ell\in[m]$, where $\{\beta_1,\beta_2,\dots,\beta_m\}$ is a basis of $\F_{q^m}$ over $\F_q$.
\end{remark}

The following corollary is direct from Theorem \ref{thm9} and Definition \ref{def lrsc2}.

\begin{corollary}\label{cor-con C}
	Let $\mathbf{f} = (1, x, \dots, x^{k-1}) \in (\F_{q^m}[x;x^q,0])^k$ and $\hat{U}_i=U_i$ for $i\in[t]$ in Construction \ref{con C}. Then, the irregular convertible code $\sC$ converts linearized Reed-Solomon codes $\mathcal{C}_{\mathbf{f},U_i\cup U_0} = \mathcal{C}_{\bar{k}+r,k}^{x^q,0}$ for $i\in[t]$ into a final code $\mathcal{C}_F$ equivalent to linearized Reed-Solomon code $\mathcal{C}_{\hat{\mathbf{f}}=(1,x,\dots,x^{tk-1}),U_1\cup \cdots\cup U_t\cup U} = \mathcal{C}_{t\bar{k}+z,tk}^{x^q,0}$ with per-symbol access-optimal cost.
\end{corollary}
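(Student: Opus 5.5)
The plan is to instantiate Theorem~\ref{thm9} with $\mathbf{f}=(1,x,\dots,x^{k-1})$ and $\hat{U}_i=U_i$, and then check three things: the initial codes are linearized Reed--Solomon codes, the final code is equivalent to a linearized Reed--Solomon code, and the access cost meets the lower bound of Theorem~\ref{thm bound} in the per-symbol sense of Definition~\ref{def per}.

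\textbf{Hypotheses of Construction~\ref{con C}.} Taking $\hat{U}_i=U_i$ trivially gives $U_i\cap\hat{U}_i=U_i$. Since the conjugacy classes $C(\alpha_{i,j})$ are distinct, $U_{-i}\cap\hat{U}_i=\emptyset$. By Theorem~\ref{thm7}, $\Lambda_{U_i}=\prod_{j\in[\bar{k}]}(x^m-\mathrm{N}(\alpha_{i,j}))$, so $\deg\Lambda_{U_i}=m\bar{k}=k$. The degree condition $\max_i\deg(x^{i-1})=k-1\le k-1$ therefore holds.

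\textbf{Initial codes.} $\mathcal{C}_{\mathbf{f},U_i\cup U_0}$ has generator matrix $\mathbf{G}_{\mathbf{f},U_i\cup U_0}$, which is exactly Definition~\ref{def lrsc2} with the $\bar{k}+r$ conjugacy classes $\{\alpha_{i,j}\}_{j}\cup\{\alpha_{0,j}\}_j$ and P-bases of size $m$. These codes are maximum sum-rank distance codes and hence MDS.

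\textbf{Final code.} This is the key step. By Theorem~\ref{thm9}, the polynomial $f$ in \eqref{eqn30} satisfies $\deg f\le \deg\Lambda_{\cup_i U_i}-1=tk-1$. By \eqref{eqn-scalar}, $\mathcal{C}_F$ is equivalent, via the nonzero diagonal scalars $\Lambda_{\hat{U}_{-i}}(\gamma)$, to the code $\tilde{\mathcal{C}}$ of \eqref{eqn-equive code}. I must show that $\tilde{\mathcal{C}}$ equals $\mathcal{C}_{(1,\dots,x^{tk-1}),U_1\cup\cdots\cup U_t\cup U}$.

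The inclusion $\subseteq$ holds because $\deg f<tk$. Equality then follows from dimensions. By the uniqueness in Theorem~\ref{thm8}, the map $(\mathbf{m}_1,\dots,\mathbf{m}_t)\mapsto f$ is injective. In addition, the evaluation map on polynomials of degree less than $tk$ at $\cup_i U_i$ is injective, because $\cup_i U_i$ is a P-basis of size $tk$ (Theorem~\ref{thm7}) and Lemma~\ref{lem Lag} applies. Hence $\tilde{\mathcal{C}}$ has dimension $tk$, matching the dimension of the target code. The target is itself a linearized Reed--Solomon code on $t\bar{k}+z$ classes, with $U$ consisting of full P-bases of $z$ further classes.

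The main obstacle is the left-linearity point: $\tilde{\mathcal{C}}$ must be a left $\F_{q^m}$-subspace. I would verify this by noting that $\varphi$ is left-linear in the messages, since \eqref{eqn30} is linear in each $\mathbf{m}_i$ with coefficients on the left.

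\textbf{Access cost.} Theorem~\ref{thm9} gives $\rho_r=t|U|=tzm$ and $\rho_w=zm=r_F$. Each written symbol $f(\gamma)$, $\gamma\in U$, uses exactly the $t$ symbols $(\mathbf{m}_i\mathbf{f}^\top)(\gamma)$, one from each initial codeword, multiplied by constants. Since $r_F=zm\le m\min\{\bar{k},r\}=\min\{k_{I_i},r_{I_i}\}$, this matches the bound of Theorem~\ref{thm bound} with equality, and so the conversion is per-symbol access-optimal in the sense of Definition~\ref{def per}.
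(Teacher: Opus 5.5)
Your proposal is correct and follows essentially the same route as the paper: instantiate Theorem~\ref{thm9} with $\hat{U}_i=U_i$, use \eqref{eqn-degf} to get $\deg f\le tk-1$ so that $\tilde{\mathcal{C}}$ in \eqref{eqn-equive code} lies inside the linearized Reed--Solomon code on $U_1\cup\cdots\cup U_t\cup U$, relate $\mathcal{C}_F$ to $\tilde{\mathcal{C}}$ by the nonzero scalars $\Lambda_{\hat{U}_{-i}}(\gamma)$, and conclude per-symbol optimality from the MDS property together with \eqref{eqn30}. Your check of the construction's hypotheses and your dimension count, which gives equality rather than mere containment, make explicit what the paper leaves implicit; the left-linearity step you flag is not actually needed, since containment plus equal cardinality already forces equality.
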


\begin{IEEEproof}
	In this case, $f(x)$ in \eqref{eqn31} satisfies $\deg(f)\le tk-1$ by \eqref{eqn-degf} and $\deg(\Lambda_{\cup_{i\in[t]}\hat{U}_i})=|\cup_{i\in[t]}\hat{U}_i|=|\cup_{i\in[t]}U_i|=tk$. Thus, $f(x)$ can be formed as $f(x)=\sum_{i=0}^{tk-1}a_ix^i\in\mathbb{F}_{q^m}[x;x^q,0]$. Therefore, the final code is equivalent to a linearized Reed-Solomon code by Definition \ref{def lrsc2} and $\tilde{\mathcal{C}}$ in \eqref{eqn-equive code}. Since any linearized Reed-Solomon code is an MDS code, together with \eqref{eqn30} and Theorem \ref{thm9}, we have $\sC$ is per-symbol access-optimal.
\end{IEEEproof}

\subsection{Construction for identical initial skew PECs}\label{sub-con identical lrs}

Construction~\ref{con C} assigns different evaluation sets to the distinct initial codes, and these sets naturally provide different minimal skew polynomials for the sCRT. The identical-code case is more delicate: all initial codewords come from the same evaluation code, so the original evaluation set alone does not provide $t$ separate CRT components. Auxiliary P-bases must therefore be introduced to create the required moduli, while the final code must still preserve the designated unchanged symbols and compute every written symbol from the allowed read symbols.

This is the purpose of the two compatibility conditions in Construction~\ref{con D}. Condition \textbf{C1} ensures that the evaluation map on each $U_i$ is invertible. Hence, the $k$ unchanged-symbol values originally associated with the common evaluation set $U_1$ can be re-expressed as evaluations of an auxiliary polynomial on the distinct set $U_i$. Condition \textbf{C2} is a read-symbol compatibility condition: after transporting the representation associated with the unchanged-symbol set, the evaluations required at $U_{0,0}$ must lie in the span of the evaluations available from $U_{0,i}$. These conditions make explicit the additional structure needed in the identical-code case, rather than treating it as a direct copy of Construction~\ref{con C}.

We first state Construction~\ref{con D}, then give a sufficient structural condition for \textbf{C2} in Lemma~\ref{thm11}, and finally specialize the construction to linearized Reed--Solomon codes in Corollary~\ref{cor5}.

\begin{construction}\label{con D}
Let $t,m,\ell,\ell'$ are positive integers. Let $\F[x;\sigma,\delta]=\F_{q^m}[x;x^q,0]$ with $q\ge t\ell+\ell'+1$.
\begin{enumerate}
	\item \textbf{Choose P-basis.}
		\begin{enumerate}
			\item \textbf{Unchanged Symbols.} Let $U_i=\{ \gamma_{i,1},\gamma_{i,2},\dots,\gamma_{i,k} \}$ be P-basis of $C_i = \bigcup_{j\in[\ell]} C(\alpha_{i,j})$ for $i\in[t]$, where $\alpha_{i,j}, i\in[t], j\in[\ell]$ are conjugacy representatives. Obviously, $k=m\ell$.
			\item \textbf{Read Symbols.} Let $U_0=\{ \gamma_{0,1},\gamma_{0,2},\dots,\gamma_{0,k'} \}$ be P-basis of $C_0 = \bigcup_{j\in[\ell']} C(\alpha_{0,j})$, where $\alpha_{0,j}, j\in[\ell']$ are conjugacy representatives. Obviously, $k'=m\ell'$. Let $U_{0,i}=\{ \gamma_{0,i,1},\gamma_{0,i,2},\dots,\gamma_{0,i,r} \} \subseteq U_0$  be P-basis of  $\bigcup_{j\in[z]}C(\zeta_{i,j})$ for $i\in[t]\cup\{0\}$ and some $z\le\min\{\ell,\ell'\}$, where $\{\zeta_{i,j}\}_{j\in[z]} \subseteq \{\alpha_{0,j}\}_{j\in[\ell']}$. Obviously, $r=mz$.
		\end{enumerate} 
	\item \textbf{Set Initial Codes.} Let $\mb{f}=(f_1,f_2,\dots,f_{k}) \in (\F_{q^m}[x;x^q,0])^{k}$ such that $f_1,f_2,\dots,f_{k}$ are linearly independent over $\mathbb{F}_{q^m}$. Set the initial code $\mathcal{C}_{\mathbf{f},U_1\cup U_0}$. 
	\item \textbf{Components of sCRT.} Let $U_{-i} = \bigcup_{j\in[t]\setminus\{i\}} U_j$ for $i\in[t]$. Choose any P-basis $\hat{U}_i$ for $i\in[t]$ satisfying
		\begin{equation*}
			U_i\cap \hat{U}_i= U_i,\ U_{-i}\cap \hat{U}_i=\emptyset
		\end{equation*}
		such that 
		\begin{equation*}
			\max_{i\in[k]}\{ \deg(f_i(x)) \} \le  \min_{i\in[t]}\{\deg(\Lambda_{\hat{U}_{i}}(x))\} - 1.
		\end{equation*}
	\item \textbf{Restricted Structures.} Suppose the following conditions hold:
		\begin{enumerate}
  			\item[\textbf{C1}:] $\mb{G}_{\mb{f},U_i}$ is an invertible matrix for $i\in[t]$ defined in Definition \ref{def pec}.
  			\item[\textbf{C2}:] For $i\in[t]$, there exists an invertible diagonal matrix $\mb{D}_i=\diag(d_{i,1},d_{i,2},\dots,d_{i,k}) \in \F_{q^m}^{k\times k}$ such that
      				\begin{equation*}
        			\anglenv{ \mb{G}_{\mb{f},U_1} \mb{D}_i \mb{G}_{\mb{f},U_i}^{-1} \mb{G}_{\mb{f},U_{0,0}} } \subseteq \anglenv{\mb{G}_{\mb{f},U_{0,i}}}
      				\end{equation*}
      			i.e.,
      				\begin{equation*}
      				\mb{G}_{\mb{f},U_1} \mb{D}_i \mb{G}_{\mb{f},U_i}^{-1} \mb{G}_{\mb{f},U_{0,0}}
      				= \mb{G}_{\mb{f},U_{0,i}} \mb{L}_i,
      				\end{equation*}
     			where $\anglenv{\mb{G}_{\mb{f},U_{0,i}}}$ denotes the linear space spanned by the columns of $\mb{G}_{\mb{f},U_{0,i}}$ and  $\mb{L}_i = (\ell_{i,z,j})_{z,j\in[r]}$ is an $r\times r$ matrix for $i\in[t]$.
            This condition ensures that the evaluations needed to form the written symbols can be reconstructed from the designated read-symbol evaluations on $U_{0,i}$.
		\end{enumerate}
	\item \textbf{Conversion.} Let $\hat{U}_{-i}=\bigcup_{j\in[t]\setminus\{i\}}\hat{U}_j$ for $i\in[t]$. Define a conversion map $\varphi : \prod_{i=1}^{t} \cC_{\mb{f},U_1\cup U_0} \to \cC_F$ such that $\varphi(c_{\mb{m}_1},\dots,c_{\mb{m}_t}) = (\bar{c}_{\mb{m}_1},\dots,\bar{c}_{\mb{m}_{t}}, \bar{c})$, i.e.,
		\begin{equation*}
			\cC_F \triangleq  \left\{ \varphi(c_{\mb{m}_1},\dots,c_{\mb{m}_t}) :  c_{\mb{m}_i} \in \cC_{\mb{f},U_1\cup U_0}, i\in[t] \right\},
		\end{equation*}
		where $\bar{c}_{\mb{m}_i} = ((\mb{m}_i \mb{f}^\top)(\gamma))_{\gamma\in U_1}$ are unchanged symbols for $i\in[t]$, and $\bar{c} = (y_j(\gamma_{0,0,j}))_{j\in[r]}$ are written symbols with
			\begin{equation*}
 				y_j(\gamma_{0,0,j}) = \sum_{i=1}^{t} \parenv{ \sum_{z=1}^{r} \ell_{i,z,j} (\mb{m}_i \mb{f}^\top)(\gamma_{0,i,z}) }  \Lambda_{\hat{U}_{-i}}(\gamma_{0,0,j}).
			\end{equation*}
\end{enumerate}
\end{construction}

\begin{theorem}\label{thm10}
The map $\varphi$ in Construction \ref{con D} implies a $(t,1)_{q^m}$ irregular convertible code $\sC$ consisting of the initial skew PEC $\cC_{\mb{f},U_1 \cup U_0}$, the final $[tk+r,tk]_{q^m}$ skew PEC $\cC_F$. Moreover, the read access cost is $tr$ and the write access cost is $r$.
\end{theorem}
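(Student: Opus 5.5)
Our plan is to reduce Construction~\ref{con D} to the mechanism of Theorem~\ref{thm9} by means of condition C1. For each $i\in[t]$, the unchanged symbols $\bar c_{\mb{m}_i}$ are the values of $\mb{m}_i\mb{f}^\top$ on $U_1$, that is, $\bar c_{\mb{m}_i}=\mb{m}_i\mb{G}_{\mb{f},U_1}$. Since $\mb{G}_{\mb{f},U_i}$ is invertible by C1, we define the auxiliary message $\tilde{\mb{m}}_i\triangleq\mb{m}_i\mb{G}_{\mb{f},U_1}\mb{D}_i\mb{G}_{\mb{f},U_i}^{-1}$. The auxiliary polynomial $g_i=\tilde{\mb{m}}_i\mb{f}^\top$ then satisfies
\[
(g_i(\gamma_{i,j}))_{j\in[k]}=\tilde{\mb{m}}_i\mb{G}_{\mb{f},U_i}=\bar c_{\mb{m}_i}\mb{D}_i .
\]
So, up to the diagonal scaling $\mb{D}_i$, the unchanged symbols of the $i$-th identical initial codeword are the evaluations of $g_i$ on the distinct set $U_i$. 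The map $\mb{m}_i\mapsto\tilde{\mb{m}}_i$ is a bijection on $\F_{q^m}^k$.

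Next we set $f(x)=\sum_{i=1}^t g_i(x)\Lambda_{\hat U_{-i}}(x)$, exactly as in \eqref{eqn30}, and repeat the proof of Theorem~\ref{thm9} with $g_i$ in place of $\mb{m}_i\mb{f}^\top$. The degree bound \eqref{eqn-degf} follows from the degree condition in step 3. Theorem~\ref{thm7}, combined with the sCRT uniqueness of Theorem~\ref{thm8}, shows that $(g_1,\dots,g_t)\mapsto f$ is injective. By Lemma~\ref{lem7} (or Lemma~\ref{lem4}), $f(\gamma)=g_i(\gamma)\Lambda_{\hat U_{-i}}(\gamma)$ for $\gamma\in U_i$, where $\Lambda_{\hat U_{-i}}(\gamma)\neq0$ because $\gamma\notin\hat U_{-i}$ and each factor $\Lambda_{\hat U_j}(\gamma)\in\F_q^*$ by Lemma~\ref{lem4}-2). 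Therefore the code $\{(f(\gamma))_{\gamma\in U_1\cup\dots\cup U_t\cup U_{0,0}}\}$ is a PEC of dimension $tk$ and length $tk+r$. Moreover, $\cC_F$ is obtained from it by the invertible diagonal scaling $\mb{D}_i^{-1}\diag(\Lambda_{\hat U_{-i}}(\gamma)^{-1})$ on the $U_i$ block, so $\cC_F$ is an $[tk+r,tk]$ code, provided the written symbols coincide with $f$ on $U_{0,0}$. This also gives injectivity of $\varphi$.

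The main obstacle is to verify that the written symbols equal $f(\gamma_{0,0,j})$, and this is where C2 is used. First, $f(\gamma_{0,0,j})=\sum_i g_i(\gamma_{0,0,j})\Lambda_{\hat U_{-i}}(\gamma_{0,0,j})$ by Lemma~\ref{lem7}. Next, the row vector $(g_i(\gamma_{0,0,j}))_j=\tilde{\mb{m}}_i\mb{G}_{\mb{f},U_{0,0}}=\mb{m}_i\mb{G}_{\mb{f},U_1}\mb{D}_i\mb{G}_{\mb{f},U_i}^{-1}\mb{G}_{\mb{f},U_{0,0}}$, and C2 rewrites this as $\mb{m}_i\mb{G}_{\mb{f},U_{0,i}}\mb{L}_i$. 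Its $j$-th entry is therefore $\sum_{z}\ell_{i,z,j}(\mb{m}_i\mb{f}^\top)(\gamma_{0,i,z})$, which matches $y_j(\gamma_{0,0,j})$. Hence each written symbol depends only on the $r$ read symbols of codeword $i$ located at $U_{0,i}\subseteq U_0$. We must also check that $\Lambda_{\hat U_{-i}}(\gamma_{0,0,j})$ is a constant independent of the messages, which holds. It follows that the read access cost is $tr$ ($r$ symbols from each of the $t$ codewords) and the write access cost is $r$.
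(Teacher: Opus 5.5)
Your proposal is correct and follows the paper's proof essentially step for step. Your $g_i=\tilde{\mb m}_i\mb f^\top$ with $\tilde{\mb m}_i=\mb m_i\mb G_{\mb f,U_1}\mb D_i\mb G_{\mb f,U_i}^{-1}$ is exactly the paper's $e_i$ (with $\coff(e_i)=\tilde{\mb m}_i$), and you use \textbf{C1}, \textbf{C2}, Lemma~\ref{lem7}, the degree bound, and the diagonal scaling by $d_{i,j}\Lambda_{\hat U_{-i}}(\gamma_{i,j})\neq 0$ in the same way to identify $\cC_F$ with an evaluation code of dimension $tk$ and length $tk+r$ with read cost $tr$ and write cost $r$; the only cosmetic remark is that the dimension count follows most cleanly from that scaling relation on $U_1\cup\cdots\cup U_t$ plus invertibility of $\mb G_{\mb f,U_1}$, rather than from the sCRT uniqueness, which both you and the paper invoke somewhat loosely.
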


\begin{IEEEproof}
Recall that $\cC_{\mb{f},U_1\cup U_0} = \{
((\mb{m} \mb{f}^\top)(\gamma))_{\gamma\in U_1\cup U_0} : \mb{m} \in \F_{q^m}^{k}\}.$ Choose any codewords $\mb{c} = (c_{\mb{m}_1},c_{\mb{m}_2},\dots,c_{\mb{m}_t}) \in (\cC_{\mb{f},U_1\cup U_0})^t$ such that $\mb{m}_i \in \F_{q^m}^{k}$ for $i\in[t]$. We consider a skew polynomial
\begin{equation*}
e_i(x) = (a_1,\dots,a_{k}) \cdot \mb{f}^\top \in \F_{q^m}[x;x^q,0]
\end{equation*}
such that
\begin{equation*}
\coff(e_i)
=
\coff(\mb{m}_i \mb{f}^\top) \mb{G}_{\mb{f},U_1} \mb{D}_i \mb{G}_{\mb{f},U_i}^{-1},
\end{equation*}
where $\coff(e_i) \triangleq (a_1,a_2,\dots,a_{k})$ for polynomial $e_i(x)$, the notation $\coff(\mb{m}_i \mb{f}^\top)$ is defined analogously. Therefore,
\begin{equation*}
    \coff(e_i) \mb{G}_{\mb{f},{ U_i}}
    =\coff(\mb{m}_i \mb{f}^\top) \mb{G}_{\mb{f},U_1} \mb{D}_i,
\end{equation*}
which implies $e_i(x)$ satisfies
\begin{equation}\label{eqn32}
e_i(\gamma_{i,j}) = d_{i,j} (\mb{m}_i \mb{f}^\top)(\gamma_{1,j}) \mbox{ for all } j\in[k].
\end{equation}
Based on the condition \textbf{C2}, we have that
\begin{equation*}
    \coff(e_i) \mb{G}_{\mb{f},U_{0,0}}
    = \coff(\mb{m}_i \mb{f}^\top) \mb{G}_{\mb{f},U_1} \mb{D}_i \mb{G}_{\mb{f},U_i}^{-1} \mb{G}_{\mb{f},U_{0,0}}
    = \coff(\mb{m}_i  \mb{f}^\top) \mb{G}_{\mb{f},U_{0,i}} \mb{L}_i.
\end{equation*}
This is to say
\begin{equation}\label{eqn33}
e_i(\gamma_{0,0,j}) = \sum_{z\in[r]} \ell_{i,z,j} (\mb{m}_i \mb{f}^\top)(\gamma_{0,i,z}) \mbox{ for all } j\in[r].
\end{equation}
Obviously,
\begin{equation}\label{eqn37}
f(x) = \sum_{i=1}^{t} e_i(x) \Lambda_{\hat{U}_{-i}}(x)
\end{equation}
satisfies
\begin{equation}\label{eqn34}
f(x) \equiv e_i(x) \Lambda_{\hat{U}_{-i}}(x) \pmod{\Lambda_{\hat{U}_i}(x)} \mbox{ for all } i\in[t],
\end{equation}
and
\begin{equation}\label{eqn-deg f2}
\begin{split}
\deg(f)
&\le \max_{i\in[t]} \left\{ \deg(e_i) + \deg(\Lambda_{\hat{U}_{-i}}) \right\} \\
&\le \max_{i\in[t]} \left\{ \max_{i\in[k]}\{ \deg(f_i(x)) \} + \deg(\Lambda_{\hat{U}_{-i}}) \right\} \\
&\le \max_{i\in[t]} \left\{ \deg(\Lambda_{\hat{U}_i})-1 + \deg(\Lambda_{\hat{U}_{-i}}) \right\} \\
&= \deg(\Lambda_{\cup_{i\in[t]}\hat{U}_i})-1.
\end{split}
\end{equation}
Based on Theorem \ref{thm8}, $f(x)$ is the unique skew polynomial satisfying the system of congruence equations \eqref{eqn34} for given $\mb{m}_1,\mb{m}_2,\dots,\mb{m}_t$. Consider the PEC as
\begin{equation}\label{eqn-equive code2}
\tilde{\cC} \triangleq \left\{ (f(\gamma))_{ \gamma\in U_1\cup \cdots\cup U_t\cup U_{0,0} } : \mb{m}_i\in\F_{q^m}^{k}, i\in[t] \right\}.
\end{equation}
By the uniqueness of $f(x)$, the dimension of $\tilde{\mathcal{C}}$ is $\log_{q^m}|\F_{q^m}^{k}|^t=tk$ and length $tk+r$.

By Lemma \ref{lem7} and equations \eqref{eqn32}, \eqref{eqn33}, we have that
\begin{equation}\label{eqn35}
    f(\gamma_{i,j})
    = e_i(\gamma_{i,j}) \Lambda_{\hat{U}_{-i}}(\gamma_{i,j})
    = d_{i,j} (\mb{m}_i \mb{f}^\top)(\gamma_{1,j}) \Lambda_{\hat{U}_{-i}}(\gamma_{i,j})
\end{equation}
for each $\gamma_{i,j} \in U_i\subseteq \hat{U}_i, i\in[t]$, and
\begin{equation}\label{eqn36}
\begin{aligned}
    f(\gamma_{0,0,j})
    &= \sum_{i=1}^{t} e_i(\gamma_{0,0,j}) \Lambda_{\hat{U}_{-i}}(\gamma_{0,0,j}) \\
    &= \sum_{i=1}^{t} \parenv{ \sum_{z=1}^{r} \ell_{i,z,j} (\mb{m}_i \mb{f}^\top)(\gamma_{0,i,z}) } \Lambda_{\hat{U}_{-i}}(\gamma_{0,0,j}) \\
    &= y_j(\gamma_{0,0,j})
\end{aligned}
\end{equation}
for each $\gamma_{0,0,j}\in U_{0,0}$. According to Lemma \ref{lem4}-2) and $\mb{D}_i$ is an invertible matrix for $i\in[t]$, we have $d_{i,j}\Lambda_{\hat{U}_{-i}}(\gamma_{i,j})\ne 0$. Therefore, \eqref{eqn35} and \eqref{eqn36} imply that the code $\tilde{\cC}$ is equivalent to the code $\cC_F$. Thus, $\cC_F$ is a skew PEC with dimension $tk$ and length $tk+r$.

According to \eqref{eqn36}, each $y_j(\gamma_{0,0,j})$ for $j\in[r]$ can be generated by reading $(\mb{m}_i \mb{f}^\top)(\gamma_{0,i,z})$ for $i\in[t]$ and $z\in[r]$. Thus, the read access cost is $tr$, the write access cost is $r$.
\end{IEEEproof}

\begin{remark}\label{rem-f}
    In fact, the polynomials $f(x)$ in \eqref{eqn30} and  \eqref{eqn37} can be chosen as the form in \eqref{eqn41}. However, the forms in \eqref{eqn30} and  \eqref{eqn37} are more concise since these avoid computing polynomials $N_i(x)$ by Lagrange interpolation. It is worth noting that, if form \eqref{eqn41} is chosen, then the final code $\mathcal{C}_F$ is exactly $\tilde{\mathcal{C}}$ shown in \eqref{eqn-equive code} and \eqref{eqn-equive code2} in Constructions \ref{con C} and \ref{con D}, respectively.
\end{remark}

In what follows, we propose a sufficient condition for \textbf{C2} in Construction \ref{con D} to hold in Lemma \ref{thm11} and show it is easy to satisfy in Corollary \ref{cor5} to construct convertible codes for linearized Reed-Solomon codes.

\begin{lemma}\label{thm11}
    Use the notation in Construction \ref{con D}. For $i\in[t]$, if the conjugacy representatives satisfy
    \begin{equation}\label{eqn_delta_i}
    \begin{split}
    	\{\alpha_{i,j}\}_{j\in[\ell]} &= \delta_i \{\alpha_{1,j}\}_{j\in[\ell]}, \\
        \{\zeta_{i,j}\}_{j\in[z]} &= \delta_i^{-1} \{\zeta_{0,j}\}_{j\in[z]},
    \end{split}
    \end{equation}
    and there exists invertible matrices $\mathbf{Q}_1,\mathbf{Q}_2,\dots,\mathbf{Q}_t$ such that 
    \begin{equation}\label{eqn-QG}
    \begin{split}
    	\mathbf{Q}_i  \mathbf{G}_{\mathbf{f},U_{1}} &= \mathbf{G}_{\mathbf{f},\delta_i U_{1}}, \\
    	\mathbf{Q}_i \mathbf{G}_{\mathbf{f},U_{0,i}} &= \mathbf{G}_{\mathbf{f},\delta_i U_{0,i}}.
    \end{split}
    \end{equation} 
    Then, $\mathbf{D}_i=\mathbf{I}_k$ and $\mathbf{L}_i=\mathbf{I}_r$ for $i\in[t]$ can be chosen in Construction \ref{con D}.
\end{lemma}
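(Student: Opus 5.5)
The plan is to verify condition \textbf{C2} directly with $\mathbf{D}_i=\mathbf{I}_k$ and $\mathbf{L}_i=\mathbf{I}_r$. That is, I will show that for every $i\in[t]$
\begin{equation*}
\mathbf{G}_{\mathbf{f},U_1}\mathbf{G}_{\mathbf{f},U_i}^{-1}\mathbf{G}_{\mathbf{f},U_{0,0}}=\mathbf{G}_{\mathbf{f},U_{0,i}}.
\end{equation*}
The whole argument is a matrix manipulation once the evaluation sets are identified as scalings of one another.

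First, I read the set identities in \eqref{eqn_delta_i} at the level of P-bases, with compatible orderings of the points. Following the remark after Theorem~\ref{thm9}, take a fixed $\F_q$-basis $\{\beta_1,\dots,\beta_m\}$ of $\F_{q^m}$ and put $\gamma_{i,j,\ell}=\beta_\ell^{q-1}\alpha_{i,j}$. Then $\alpha_{i,j}=\delta_i\alpha_{1,j}$ gives $\gamma_{i,j,\ell}=\delta_i\gamma_{1,j,\ell}$, so $U_i=\delta_iU_1$ pointwise and in order. Likewise $\zeta_{0,j}=\delta_i\zeta_{i,j}$ gives $U_{0,0}=\delta_iU_{0,i}$ pointwise and in order. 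Scaling by $\delta_i$ maps $C(\alpha)$ onto $C(\delta_i\alpha)$, since $\delta_i\beta^{q-1}\alpha=\beta^{q-1}(\delta_i\alpha)$. By Theorem~\ref{thm7}, these scaled sets are therefore genuinely the P-bases required in Construction~\ref{con D}. With this ordering, the evaluation matrices satisfy $\mathbf{G}_{\mathbf{f},U_i}=\mathbf{G}_{\mathbf{f},\delta_iU_1}$ and $\mathbf{G}_{\mathbf{f},U_{0,0}}=\mathbf{G}_{\mathbf{f},\delta_iU_{0,i}}$.

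Second, I substitute \eqref{eqn-QG}. This gives $\mathbf{G}_{\mathbf{f},U_i}=\mathbf{Q}_i\mathbf{G}_{\mathbf{f},U_1}$. By \textbf{C1} the matrix $\mathbf{G}_{\mathbf{f},U_1}$ is invertible; alternatively this follows from the invertibility of $\mathbf{G}_{\mathbf{f},U_i}$ and $\mathbf{Q}_i$. Hence $\mathbf{G}_{\mathbf{f},U_i}^{-1}=\mathbf{G}_{\mathbf{f},U_1}^{-1}\mathbf{Q}_i^{-1}$. Also $\mathbf{G}_{\mathbf{f},U_{0,0}}=\mathbf{Q}_i\mathbf{G}_{\mathbf{f},U_{0,i}}$. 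Multiplying,
\begin{equation*}
\mathbf{G}_{\mathbf{f},U_1}\mathbf{G}_{\mathbf{f},U_1}^{-1}\mathbf{Q}_i^{-1}\mathbf{Q}_i\mathbf{G}_{\mathbf{f},U_{0,i}}=\mathbf{G}_{\mathbf{f},U_{0,i}},
\end{equation*}
which is exactly \textbf{C2} with $\mathbf{D}_i=\mathbf{I}_k$ and $\mathbf{L}_i=\mathbf{I}_r$. The case $i=1$ is consistent with $\delta_1=1$ and $\mathbf{Q}_1=\mathbf{I}$.

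The main obstacle is the bookkeeping in the first step rather than any algebra. The hypotheses are stated as set equalities, but \textbf{C2} is an identity of matrices whose columns are indexed by ordered points. I must therefore fix the orderings of $U_i$ and $U_{0,i}$ so that column $j$ of $\mathbf{G}_{\mathbf{f},U_i}$ is the evaluation at $\delta_i\gamma_{1,j}$, and column $j$ of $\mathbf{G}_{\mathbf{f},U_{0,0}}$ is the evaluation at $\delta_i\gamma_{0,i,j}$. Different orderings would introduce permutation matrices, which would then have to be absorbed into $\mathbf{D}_i$ or $\mathbf{L}_i$. I will also check that the scaled sets are still P-bases of the stated unions of conjugacy classes, which Theorem~\ref{thm7} provides.
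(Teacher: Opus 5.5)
Your proposal is correct and takes essentially the same route as the paper. Both arguments identify $U_i=\delta_iU_1$ and $U_{0,0}=\delta_iU_{0,i}$ through the P-basis form $\gamma=\beta^{q-1}\alpha$, then substitute \eqref{eqn-QG} into $\mathbf{G}_{\mathbf{f},U_1}\mathbf{G}_{\mathbf{f},U_i}^{-1}\mathbf{G}_{\mathbf{f},U_{0,0}}$ and cancel $\mathbf{Q}_i$; your explicit matching of point orderings (so no permutation matrices appear) makes precise a step the paper leaves implicit.
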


\begin{IEEEproof}
	According to Corollary \ref{cor3}, each $\gamma_{i}$ in P-basis $U=\{\gamma_1,\dots,\gamma_m\}$ of $C(\alpha)$ is exactly $\gamma_{i}=\beta_i^{q-1}\alpha$, where $\{\beta_1,\dots,\beta_m\}$ is a basis of $\mathbb{F}_{q^m}$ over $\mathbb{F}_q$ and $\alpha$ is a conjugacy representative. By \eqref{eqn_delta_i}, $\mathbf{G}_{\mathbf{f},\delta_i U_{1}} = \mathbf{G}_{\mathbf{f},U_{i}}$ and $\mathbf{G}_{\mathbf{f},\delta_i U_{0,i}} = \mathbf{G}_{\mathbf{f},U_{0,0}}$ for $i\in[t]$. 
	Then, by \eqref{eqn-QG}, we have
	\begin{align*}
		\mb{G}_{\mb{f},U_1} \mb{G}_{\mb{f},U_i}^{-1} \mb{G}_{\mb{f},U_{0,0}}
		&= \mb{G}_{\mb{f},U_1} (\mathbf{Q}_{i}\mb{G}_{\mb{f},U_1})^{-1} (\mathbf{Q}_i \mathbf{G}_{\mathbf{f},U_{0,i}}) \\
		&= \mathbf{G}_{\mathbf{f},U_{0,i}}. 
	\end{align*}
	Therefore, the condition \textbf{C2} in Construction \ref{con D} holds. Moreover, $\mathbf{D}_i=\mathbf{I}_k$ and $\mathbf{L}_i=\mathbf{I}_r$ for $i\in[t]$.
\end{IEEEproof}

The following lemma ensures the condition \textbf{C1} holds in Construction \ref{con D} for linearized Reed-Solomon codes, i.e., set $\mathbf{f} = (1, x, \dots, x^{k-1}) \in (\F_{q^m}[x;x^q,0])^k$.

\begin{lemma}[{\cite[Lem. 9]{gopi2022improved}}]\label{lem13}
Let $U\subseteq\F$ of size $k$. Let $U=U_1\cup U_2\cup \cdots\cup U_r$ be the partition of $U$ into different conjugacy classes, where $U_i=\{ \gamma_{i,j}= D_{\alpha_i}(\beta_{i,j}) \beta_{i,j}^{-1} : j\in[|U_i|] \}$. Then,
\begin{equation*}
    \begin{pmatrix}
      \Phi_0(\gamma_{1,1}) & \Phi_0(\gamma_{1,2}) & \cdots & \Phi_0(\gamma_{r,|U_r|}) \\
      \Phi_1(\gamma_{1,1}) & \Phi_1(\gamma_{1,2}) & \cdots & \Phi_1(\gamma_{r,|U_r|}) \\
      \vdots & \vdots &  & \vdots \\
      \Phi_{k-1}(\gamma_{1,1}) & \Phi_{k-1}(\gamma_{1,2}) & \cdots & \Phi_{k-1}(\gamma_{r,|U_r|})
    \end{pmatrix}
\end{equation*}
is full rank if for each $i\in[r]$, $\{\beta_{i,j} : j\in[|U_i|]\}$ are linearly independent over the centralizer subfield $\cK_{\alpha_i}$.
\end{lemma}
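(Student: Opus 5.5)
The statement is Lemma~\ref{lem13}, a Vandermonde-type rank claim for skew polynomial evaluations. The plan is to reduce it to P-independence of $U$ and then use the minimal skew polynomial.

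First, observe that the displayed matrix is exactly $\mb{G}_{\mb{f},U}$ with $\mb{f}=(1,x,\dots,x^{k-1})$, by Lemma~\ref{lem2}. It is $k\times k$, so it is full rank if and only if its rows are independent. A nonzero vector $(a_0,\dots,a_{k-1})$ in the left kernel would give a nonzero skew polynomial $g(x)=\sum_u a_u x^u$ with $\deg(g)\le k-1$ and $g(\gamma)=0$ for all $\gamma\in U$. Then $g\in\cI(U)$, so $\deg(\Lambda_U)\le k-1<|U|$. By Lemma~\ref{lem10} this cannot happen if $U$ is P-independent. So the whole task is to show that $U$ is P-independent.

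Second, handle each single class. Fix $i$. Since $\{\beta_{i,j}\}$ are right linearly independent over $\cK_{\alpha_i}$, extend them to a basis of $\F$ as a right $\cK_{\alpha_i}$-space. By Lemma~\ref{lem11}, the corresponding $\gamma$'s form a P-basis of some P-closed set. A subset of a P-independent set is P-independent, since $\overline{U_i\setminus\{a\}}$ is contained in the closure of the larger set minus $a$. Hence each $U_i$ is P-independent and $\deg(\Lambda_{U_i})=|U_i|$.

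Third, glue the classes; this is the main obstacle. In general skew rings one needs the fact that P-independent subsets of distinct conjugacy classes have a P-independent union, as in Lam--Leroy. In the present setting $\F_{q^m}[x;x^q,0]$ I would mimic the proof of Theorem~\ref{thm7}.

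1. Each $\Lambda_{U_i}$ right-divides $\Lambda_{C(\alpha_i)}=x^m-\mathrm{N}(\alpha_i)$.
2. Induct on $r$. Suppose $g\in\cI(U)$, and write $g=g_1\Lambda_{U_1}$ and $g=g_2\Lambda_{U'}$, where $U'=U_2\cup\cdots\cup U_r$ and $\deg(\Lambda_{U'})=|U'|$ by induction.
3. Evaluate $g_1$ on the points of $U'$. For $\gamma\in U'$ we have $\Lambda_{U_1}(\gamma)\neq0$: it divides $x^m-\mathrm{N}(\alpha_1)$, which does not vanish on other classes by Lemma~\ref{lem4}-2) and Lemma~\ref{lem6}.
4. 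The product rule (Lemma~\ref{lem6}) gives $0=g(\gamma)=g_1(\gamma')\Lambda_{U_1}(\gamma)$, where $\gamma'$ is a conjugate of $\gamma$ in the same class. The map $\gamma\mapsto\gamma'$ corresponds to $\beta\mapsto\Lambda_{U_1}^{D_{\alpha}}(\beta)$, which is $\cK$-linear and injective on the class (Lemma~\ref{lem equiv}). So it preserves linear independence over $\cK$, and the images $\{\gamma'\}$ are again P-independent within each class.
5. Then $g_1$ vanishes on a set $U''$ of $|U'|$ points, which again splits into P-independent pieces in distinct classes. By induction $\deg(g_1)\ge|U'|$, hence $\deg(g)\ge|U|$.

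Ensuring that step 4 preserves independence is the delicate point. If it fails, I would instead restrict to the case used later, namely $\cK=\F_q$ with $U_i$ contained in conjugacy classes, and argue directly with Theorem~\ref{thm7}-style coprimality of the $x^m-\mathrm{N}(\alpha_i)$.

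Finally, conclude that $\deg(\Lambda_U)=k$, so no nonzero $g$ of degree less than $k$ vanishes on $U$, and the matrix is full rank.
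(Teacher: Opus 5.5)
\textbf{Verdict: correct; a self-contained version of the argument the paper only cites.} The paper gives no proof of Lemma~\ref{lem13}; it imports the result from Gopi--Guruswami. There it follows from the Lam--Leroy theory: points of a single conjugacy class are P-independent iff their $\beta$'s are $\cK_\alpha$-independent, and P-independence is additive over distinct classes. Your reduction (full rank iff $\deg\Lambda_U=k$) and your class-wise step via Lemma~\ref{lem11} are the first ingredient. The difference is that you prove the union step rather than cite it.

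That step is not actually delicate, so you do not need the fallback. For $c\in\cK_\alpha$ one has $D_\alpha(\beta c)=D_\alpha(\beta)c$, so $\beta\mapsto\Lambda_{U_1}^{D_\alpha}(\beta)$ is right $\cK_\alpha$-linear. By Lemma~\ref{lem equiv}, a nonzero element of its kernel would be a zero of $\Lambda_{U_1}$ in $C(\alpha)$, which your step 3 excludes. So the map is injective and preserves $\cK_\alpha$-independence. The factorization $g=g_2\Lambda_{U'}$ is never used and can be dropped.

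The only ring-specific input is step 3, which you obtain from the central polynomial $x^m-\mathrm{N}(\alpha_1)$. It holds for any $(\sigma,\delta)$ over a field. Build $\Lambda_{U_1}$ one point at a time via $\Lambda_{S\cup\{u\}}=(x-D_u(c)c^{-1})\Lambda_S$ with $c=\Lambda_S(u)\neq0$. This writes $\Lambda_{U_1}$ as a product of linear factors $x-a$ with $a\in C(\alpha_1)$, and Lemma~\ref{lem6} confines the zeros of such a product to $C(\alpha_1)$. Read Lemma~\ref{lem11} as the equivalence of $\cK_\alpha$-independence and P-independence; then no extension to a basis of $\F$ is needed. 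With these two changes your induction is a self-contained proof of the Lam--Leroy union theorem, and hence of the lemma in the generality stated. The citation buys only brevity.

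For the paper's only use of the lemma, condition \textbf{C1} in Corollary~\ref{cor5}, even less is needed. Each $U_i$ there is a P-basis, hence P-independent by definition, so your first step together with Lemma~\ref{lem10} already gives invertibility.
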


\begin{corollary}\label{cor5}
    Let $\mathbf{f} = (1, x, \dots, x^{k-1}) \in (\F_{q^m}[x;x^q,0])^k$ and $\hat{U}_i=U_i$ for $i\in[t]$ in Construction \ref{con D}. Set $\alpha_{i,j}=\theta^{j+(i-1)\ell}$ for $i\in[t],j\in[\ell]$, and $\zeta_{i,j}=\theta^{j+(i-1)\ell+t\ell}$ for $i\in[t], j\in[z]$, and $\{\zeta_{0,j}\}_{j\in[z]} = \{\zeta_{1,j}\}_{j\in[z]}$. Then, $\mathbf{D}_i=\mathbf{I}_k$ and $\mathbf{L}_i=\mathbf{I}_r$ for $i\in[t]$. Moreover, the irregular convertible code $\sC$ converts a linearized Reed-Solomon code $\mathcal{C}_{\mathbf{f},U_1\cup U_0} = \mathcal{C}_{\ell+\ell',k}^{x^q,0}$ into a final code $\mathcal{C}_F$ equivalent to a linearized Reed-Solomon code  $\mathcal{C}_{\hat{\mathbf{f}}=(1,x,\dots,x^{tk-1}),U_1\cup \cdots\cup U_t\cup U_{0,0}} = \mathcal{C}_{t\ell+z,tk}^{x^q,0}$ with per-symbol access-optimal.
\end{corollary}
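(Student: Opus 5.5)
The proof has three parts: verify the hypotheses of Lemma~\ref{thm11}, verify \textbf{C1} via Lemma~\ref{lem13}, and then reuse the degree argument of Corollary~\ref{cor-con C}.

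\textbf{Step 1: the scaling $\delta_i$.} Take $\delta_i=\theta^{(i-1)\ell}$. Then $\alpha_{i,j}=\theta^{j+(i-1)\ell}=\delta_i\alpha_{1,j}$, so $\{\alpha_{i,j}\}_{j\in[\ell]}=\delta_i\{\alpha_{1,j}\}_{j\in[\ell]}$. Similarly, $\zeta_{i,j}=\delta_i\zeta_{1,j}$, and with $\{\zeta_{0,j}\}=\{\zeta_{1,j}\}$ we get $\delta_i\{\zeta_{i,j}\}=\delta_i^2\{\zeta_{0,j}\}$.

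This does not literally match the second line of \eqref{eqn_delta_i}. What I will actually verify is the relation used in the proof of Lemma~\ref{thm11}, namely $\mathbf{G}_{\mathbf{f},\delta_iU_{0,i}}=\mathbf{G}_{\mathbf{f},U_{0,0}}$. So I must arrange the P-bases so that $U_{0,0}$ is the $\delta_i$-image of $U_{0,i}$, up to conjugacy representatives sharing a norm.

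I will check whether the intended reading is $\{\zeta_{0,j}\}=\delta_i\{\zeta_{i,j}\}$, i.e.\ $U_{0,0}$ realized in a class with the same norm. By Corollary~\ref{cor2}, the minimal polynomial and the P-basis depend only on the norms, and $\mathrm{N}(\delta_i\zeta_{i,j})$ is what matters. If the indexing does not close up, I would redefine $U_{0,0}$ as $\delta_iU_{0,i}$ within the same conjugacy classes, using the basis $\beta_\ell^{q-1}$ of Corollary~\ref{cor3}.

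\textbf{Step 2: the matrices $\mathbf{Q}_i$.} For $\mathbf{f}=(1,x,\dots,x^{k-1})$ we have $\Phi_u(\delta\gamma)=\mathrm{N}_u(\delta)\Phi_u(\gamma)$, where $\mathrm{N}_u(\delta)=\delta^{1+q+\cdots+q^{u-1}}$. This follows by induction from Definition~\ref{def power}, since $\sigma$ is multiplicative and $\delta=0$. Hence
\[
\mathbf{Q}_i=\diag\bigl(\mathrm{N}_0(\delta_i),\dots,\mathrm{N}_{k-1}(\delta_i)\bigr)
\]
satisfies both equations in \eqref{eqn-QG} simultaneously, and it is invertible because $\delta_i\neq0$. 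This is the key structural point, and it is routine.

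\textbf{Step 3: condition \textbf{C1}.} $\mathbf{G}_{\mathbf{f},U_i}$ is the $k\times k$ matrix of Lemma~\ref{lem13}, with $U_i=\bigcup_j U_{i,j}$ and each $U_{i,j}$ built from an $\F_q$-basis $\beta_1,\dots,\beta_m$. The centralizer is $\F_q$ by Lemma~\ref{lem9}, so Lemma~\ref{lem13} gives full rank. The classes must be distinct. The requirement $q\ge t\ell+\ell'+1$, together with the fact that $\theta^0,\dots,\theta^{q-2}$ represent distinct classes, guarantees this provided all exponents are at most $q-2$, which I will check. Lemma~\ref{thm11} then yields $\mathbf{D}_i=\mathbf{I}_k$ and $\mathbf{L}_i=\mathbf{I}_r$.

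\textbf{Step 4: final code and optimality.} With $\hat U_i=U_i$, bound \eqref{eqn-deg f2} gives $\deg f\le tk-1$, because $\deg\Lambda_{\cup U_i}=tk$ by Theorem~\ref{thm7}. So $\tilde{\mathcal{C}}$ is exactly $\mathcal{C}_{t\ell+z,tk}^{x^q,0}$ evaluated on $U_1\cup\cdots\cup U_t\cup U_{0,0}$. By Theorem~\ref{thm10}, $\mathcal{C}_F$ is equivalent to $\tilde{\mathcal{C}}$.

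Since $\mathbf{L}_i=\mathbf{I}_r$, each written symbol $y_j(\gamma_{0,0,j})$ reads exactly one symbol $(\mathbf{m}_i\mathbf{f}^\top)(\gamma_{0,i,j})$ from each initial codeword. Together with the MDS property and Theorem~\ref{thm bound}, this gives per-symbol access optimality.

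The main obstacle is Step 1: making the index bookkeeping of the $\zeta$'s consistent, so that $U_{0,0}=\delta_iU_{0,i}$ holds for every $i$ while all classes used remain distinct.
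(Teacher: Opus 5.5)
Your route is the paper's: $\delta_i=\theta^{(i-1)\ell}$, $\mathbf{Q}_i=\diag(\Phi_0(\delta_i),\dots,\Phi_{k-1}(\delta_i))$ via multiplicativity of $\Phi_u$, then Lemma~\ref{thm11}, then \eqref{eqn-deg f2} with $\hat U_i=U_i$, and finally $\mathbf{L}_i=\mathbf{I}_r$ for per-symbol reads. Your explicit check of \textbf{C1} through Lemma~\ref{lem13} is a useful addition. You are right to distrust Step~1, but you leave it open, and neither remedy you sketch can close it.

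With $U_i=\delta_iU_1$ built from a common $\F_q$-basis, we have $\mathbf{G}_{\mathbf{f},U_i}=\mathbf{Q}_i\mathbf{G}_{\mathbf{f},U_1}$. So \textbf{C2} with $\mathbf{D}_i=\mathbf{I}_k$ becomes $\mathbf{Q}_i^{-1}\mathbf{G}_{\mathbf{f},U_{0,0}}=\mathbf{G}_{\mathbf{f},\delta_i^{-1}U_{0,0}}=\mathbf{G}_{\mathbf{f},U_{0,i}}\mathbf{L}_i$. With $\mathbf{L}_i=\mathbf{I}_r$ and $k\ge2$, the $\Phi_1$-row forces $U_{0,i}=\delta_i^{-1}U_{0,0}$.

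The stated parameters give the opposite relation: $\zeta_{i,j}=\delta_i\zeta_{1,j}=\delta_i\zeta_{0,j}$. Hence the classes of $\delta_iU_{0,i}$ have norms $\mathrm{N}(\theta)^{j+t\ell+2(i-1)\ell}$, while those of $U_{0,0}$ have norms $\mathrm{N}(\theta)^{j+t\ell}$. These agree only if $(q-1)\mid 2(i-1)\ell$. So ``only the norm matters'' works against you. Likewise, ``redefine $U_{0,0}:=\delta_iU_{0,i}$'' gives a different set for each $i$, whereas $U_{0,0}$ (the written positions) must be one set shared by all $t$ summands.

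Concretely, take $m=1$, where P-bases are singletons so no other choice exists. Set $q=11$, $\theta=2$, $t=\ell=2$, $z=1$, so $U_1=\{\theta,\theta^2\}$, $U_2=\{\theta^3,\theta^4\}$, $U_{0,0}=\{\theta^5\}$ and $U_{0,2}=\{\theta^7\}$. Then $\mathbf{Q}_2^{-1}\mathbf{G}_{\mathbf{f},U_{0,0}}=(1,\theta^{3})^\top\neq(1,\theta^{7})^\top=\mathbf{G}_{\mathbf{f},U_{0,2}}$. So the claim $\mathbf{D}_i=\mathbf{I}_k$, $\mathbf{L}_i=\mathbf{I}_r$ is false as literally stated. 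The paper's proof has the same hole: it simply asserts $\mathbf{G}_{\mathbf{f},\delta_iU_{0,i}}=\mathbf{G}_{\mathbf{f},U_{0,0}}$.

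The missing idea is a re-indexing that makes $U_{0,i}=\delta_i^{-1}U_{0,0}$ while keeping every read class away from $C_1,\dots,C_t$.

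\begin{itemize}
\item The naive sign flip fails. Setting $\zeta_{0,j}=\theta^{j+t\ell}$ and $\zeta_{i,j}=\theta^{j+t\ell-(i-1)\ell}$ gives, for $i\ge2$, exactly $\alpha_{t-i+2,j}$, a class already used by $U_{t-i+2}$.
\item Anchor at the top instead. Take $\zeta_{0,j}=\theta^{j+(2t-1)\ell}$, which is the statement's $\zeta_{t,j}$. Read codeword $i$ on the P-basis, built from the same $\beta$'s, with representatives $\delta_i^{-1}\zeta_{0,j}=\theta^{j+(2t-i)\ell}$, which is the statement's $\zeta_{t+1-i,j}$.
\item All these exponents lie in $[t\ell+1,(2t-1)\ell+z]$, disjoint from $[1,t\ell]$. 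Since $\mathrm{N}(\theta)$ generates $\F_q^*$, Corollary~\ref{cor2} makes all classes distinct, provided $(2t-1)\ell+z\le q-1$ and $U_0$ contains them.
\end{itemize}

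With this choice, \eqref{eqn_delta_i} holds literally and Lemma~\ref{thm11} gives $\mathbf{D}_i=\mathbf{I}_k$ and $\mathbf{L}_i=\mathbf{I}_r$. Your Steps~2--4 then go through unchanged.
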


\begin{IEEEproof}
	Let $\delta_i=\theta^{(i-1)\ell}$ and $\mathbf{Q}_{i}=\diag(\Phi_0(\delta_i),\dots,\Phi_{k-1}(\delta_i))$ for $i\in[t]$. By Lemma \ref{lem13}, $\Phi_i(\alpha)=\alpha^{1+q+q^2+\cdots+q^{i-1}}$ and $\Phi_i(\alpha\beta)=\Phi_i(\alpha)\Phi_i(\beta)$ for $i\ge1$, we have
    \begin{equation*}
    \begin{split}
    	\diag(\Phi_0(\delta_i),\dots,\Phi_{k-1}(\delta_i)) \mathbf{G}_{\mathbf{f},U_{1}} &= \mathbf{G}_{\mathbf{f},\delta_i U_{1}} = \mathbf{G}_{\mathbf{f},U_{i}}, \\
        \diag(\Phi_0(\delta_i),\dots,\Phi_{k-1}(\delta_i)) \mathbf{G}_{\mathbf{f},U_{0,i}} &= \mathbf{G}_{\mathbf{f},\delta_i U_{0,i}} = \mathbf{G}_{\mathbf{f},U_{0,0}}.
    \end{split}
    \end{equation*}
    Therefore, Lemma \ref{thm11} holds. This leads to $\mathbf{D}_i=\mathbf{I}_k$ and $\mathbf{L}_i=\mathbf{I}_r$ for $i\in[t]$.
    
    In this case, $f(x)$ in \eqref{eqn37} satisfies $\deg(f)\le tk-1$ by \eqref{eqn-deg f2} and $\deg(\Lambda_{\cup_{i\in[t]}\hat{U}_i})=|\cup_{i\in[t]}\hat{U}_i|=|\cup_{i\in[t]}U_i|=tk$. Thus, $f(x)$ can be formed as $f(x)=\sum_{i=0}^{tk-1}a_ix^i\in\mathbb{F}_{q^m}[x;x^q,0]$. Therefore, the final code is equivalent to a linearized Reed-Solomon code by Definition \ref{def lrsc2} and $\tilde{\mathcal{C}}$ in \eqref{eqn-equive code2}. Since any linearized Reed-Solomon code is an MDS code, together with \eqref{eqn36}, $\mathbf{L}_i=\mathbf{I}_r$ and Theorem \ref{thm10}, we have $\sC$ is per-symbol access-optimal. 
\end{IEEEproof}


\begin{remark}
Constructions~\ref{con C} and~\ref{con D} currently require $m\mid k$. This restriction is not merely notational: the construction groups evaluation points by complete P-bases of conjugacy classes so that the associated minimal skew polynomials have the commutativity property used in Lemmas~\ref{lem4} and~\ref{lem7}. If $k=|U_1|+\cdots+|U_{j-1}|+\theta$ with $0<\theta<|U_j|=m$, the minimal skew polynomial of the partial P-basis $U_j'=\{\gamma_{j,1},\ldots,\gamma_{j,\theta}\}$ has degree $\theta<m$ by Lemma~\ref{lem10}, and the commutativity argument used for complete conjugacy classes is no longer available in the present proof. Relaxing this divisibility restriction would therefore require additional algebraic ideas rather than a direct parameter change.
\end{remark}

\section{Commutative Specialization to $\mathbb{F}_q[x]$}\label{sec-degraded}

The skew framework has a useful commutative specialization. Over $\mathbb{F}_q$, the Frobenius map satisfies $\alpha^q=\alpha$, so $\mathbb{F}_q[x;x^q,0]=\mathbb{F}_q[x]$ and the conjugacy classes reduce to singletons. Consequently, the skew minimal-polynomial and sCRT machinery reduces to the ordinary annihilating-polynomial/CRT picture. This specialization is important for two reasons: it shows that the skew construction is consistent with familiar polynomial evaluation codes, and it allows the same conversion principle to be compared directly with known Reed--Solomon and Tamo--Barg constructions and then applied to the Gabidulin evaluation basis. We therefore derive the commutative counterparts of Constructions~\ref{con C} and~\ref{con D} in Subsections~\ref{sub-de-conc} and~\ref{sub-de-cond}.

\subsection{Degraded conclusions in $\mathbb{F}_q[x]=\mathbb{F}_q[x;x^q,0]$}

In this subsection, we only briefly show a degraded version of the conclusions in Section \ref{sec-CRT}.

\begin{enumerate}
	\item The commutativity is trivial.
	\item The power function satisfies $\Phi_m(\alpha)=\alpha^{1+q+\cdots+q^{m-1}}=\alpha^m$, where $f(x)=x^m\in\mathbb{F}_q[x;x^q,0]$.
	\item The conjugacy class of $\alpha$ is $C(\alpha)=\{D_{\alpha}^{x^q,0}(\beta)\beta^{-1}:\beta\in\mathbb{F}_q^*\} = \{\beta^{q-1}\alpha:\beta\in\mathbb{F}_q^*\} = \{\alpha\}$.
	\item The annihilating skew polynomial for the conjugacy class $C(\alpha)$ is exactly monomial $f(x)=x-\alpha\in\mathbb{F}_q[x]$ which also is the minimal skew polynomial since the P-basis of $C(\alpha)$ is exactly $U=\{\alpha\}$ by Corollary \ref{cor3}.
	\item Theorem \ref{thm7} is degraded to traditional annihilating polynomial $\Lambda_A(x)$ over point set $A$ as $\Lambda_A(x)=\prod_{a\in A}(x-a)\in\mathbb{F}_q[x]$.
\end{enumerate}

Therefore, the evaluation-compatible sCRT in Theorem~\ref{thm8} specializes to the standard CRT in $\mathbb{F}_q[x]$ as follows.

\begin{theorem}[The Chinese Remainder Theorem (CRT)~\cite{ding1996Chinese}]\label{thm CRT}
Let $h_1(x),\dots,h_t(x) \in \F_q[x]$ be pairwise co-prime polynomials. Then for any $t$ polynomials $m_1(x), m_2(x), \dots, m_t(x) \in \F_q[x]$, there exists a unique polynomial $f(x) \in \F_q[x]$ of degree less than $\sum_{i=1}^{t} \deg(h_i(x))$ such that
\begin{equation*}
f(x) \equiv m_i(x) \pmod{h_i(x)}\quad \text{for all } i\in[t].
\end{equation*}
\end{theorem}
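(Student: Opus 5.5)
The plan is to give the classical constructive proof, working entirely in the commutative Euclidean domain $\F_q[x]$. The proof has two parts: existence via Bézout identities, and uniqueness via a divisibility argument. In effect this is the commutative shadow of the proof of Theorem~\ref{thm8}, with ordinary Bézout coefficients playing the role of the Lagrange-interpolated $N_i(x)$.

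\textbf{Existence.} For each $i\in[t]$, set
\begin{equation*}
M_i(x)=\prod_{j\in[t]\setminus\{i\}} h_j(x).
\end{equation*}
The first step is to show that $\gcd(M_i,h_i)=1$. Since $\F_q[x]$ is a UFD, any irreducible factor common to $M_i$ and $h_i$ would divide some $h_j$ with $j\ne i$. That contradicts pairwise coprimality.

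Next, the extended Euclidean algorithm gives $N_i(x),P_i(x)\in\F_q[x]$ with
\begin{equation*}
N_iM_i+P_ih_i=1.
\end{equation*}
Hence $N_iM_i\equiv 1 \pmod{h_i}$. Also $N_iM_i\equiv 0\pmod{h_j}$ for every $j\ne i$, because $h_j\mid M_i$.

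Now set
\begin{equation*}
g(x)=\sum_{i\in[t]} m_i(x)N_i(x)M_i(x).
\end{equation*}
This $g$ satisfies every congruence $g\equiv m_i \pmod{h_i}$. Finally, let $H=\prod_{i\in[t]}h_i$ and take $f$ to be the remainder of $g$ on division by $H$. Then $\deg f<\sum_i\deg h_i$. The congruences are preserved because each $h_i$ divides $H$.

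\textbf{Uniqueness.} Suppose $f$ and $g$ both satisfy the congruences and both have degree less than $\deg H$. Then $h_i\mid f-g$ for every $i\in[t]$.

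The key step is to show that pairwise coprime divisors give $H\mid f-g$. I would prove this by induction on $t$. If $h_1\mid u$ and $h_2\mid u$, write $u=h_1v$. From $h_2\mid h_1v$ and $\gcd(h_1,h_2)=1$, Bézout gives $h_2\mid v$, so $h_1h_2\mid u$. For the induction step, note that $h_1\cdots h_{s}$ is coprime to $h_{s+1}$, by the same irreducible-factor argument used in the existence part.

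Therefore $H\mid f-g$. Since $\deg(f-g)<\deg H$, this forces $f=g$.

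\textbf{Main obstacle.} The only step needing care is the lemma that pairwise coprimality implies $h_i$ is coprime to the product of the others, together with its use in the uniqueness step. In the commutative setting this follows directly from unique factorization, or equivalently from repeated Bézout identities, so I expect no real difficulty.
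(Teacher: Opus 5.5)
Your proof is correct. The irreducible-factor argument for $\gcd(M_i,h_i)=1$, the final reduction modulo $H$, and the inductive Euclid-lemma step for uniqueness are all sound.

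The paper gives no proof of its own here. It cites \cite{ding1996Chinese} and presents the result as the commutative degeneration of Theorem~\ref{thm8}.

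That route differs from yours in the existence step. Instead of B\'ezout coefficients, $N_i$ comes from Lagrange interpolation (Lemma~\ref{lem Lag}), chosen so that $N_i(a)=M_i(a)^{-1}$ at every evaluation point $a$ attached to $h_i$. Then $h_i\mid N_iM_i-1$ because $N_iM_i-1$ vanishes at all roots of $h_i$. For uniqueness the paper uses the same Euclid-lemma idea as you, but one linear factor (one conjugacy class) at a time, via the constant B\'ezout identity $a\Lambda_{U_1}+b\Lambda_{U_2}=1$ from the proof of Theorem~\ref{thm7}.

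Consequently the paper's route only covers moduli that split into distinct linear factors over the base field, $h_i=\prod_{a\in A_i}(x-a)$ with pairwise disjoint $A_i$. This is all that Constructions~\ref{con A}, \ref{con B} and Corollary~\ref{thm6} need. It does not reach, say, an irreducible quadratic modulus or a repeated factor $(x-a)^2$. Your extended-Euclid argument proves the theorem in the full generality in which it is stated.

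What the interpolation route buys in exchange is twofold. First, $N_i$ is prescribed pointwise on the evaluation set, which matches the conversion formulas. Second, it lifts to $\F_{q^m}[x;x^q,0]$, where the argument rests on the special moduli $x^m-\mathrm{N}(\alpha)$. These commute with every skew polynomial (Lemma~\ref{lem4}) and obey the product rule of Lemma~\ref{lem7}, so no general gcd or factorization theory is needed.
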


In what follows, we can naturally give the degenerate form of Constructions \ref{con C} and \ref{con D}. Here is only one thing that needs to be specially explained. Unlike ordinary polynomials, a skew polynomial need not have at most $\deg(f)$ roots in the usual sense; therefore, we did not state a Hamming-distance bound for Constructions \ref{con C} and \ref{con D}. In the degenerate case, we supplemented the explanation of the minimum distance.

\subsection{Degraded Construction \ref{con C}}\label{sub-de-conc}

In this section, we propose degraded Construction \ref{con C} and also give corresponding conclusions.

\begin{construction}[Degraded Construction \ref{con C}]\label{con A}
Let $t,\bar{k},r$ are positive integers. Let $\F[x;\sigma,\delta]=\F_{q}[x;x^q,0]=\mathbb{F}_q[x]$ with $q\ge t\bar{k}+r+1$.
\begin{enumerate}
	\item \textbf{Choose P-basis.} Let $U_{i,j}=\{\alpha_{i,j}\}$ for $(i,j)\in[t]\times[\bar{k}]$ and $(i,j)\in\{0\}\times [r]$, where all $\alpha_{i,j}$ are pairwise distinct.
	\item \textbf{Set Initial Codes.} Let $\mathbf{f} = (f_1, f_2, \dots, f_k) \in (\F_{q}[x])^k$ with $k=\bar{k}$ such that $f_1,f_2,\dots,f_k$ are linearly independent over $\mathbb{F}_{q}$. Set initial codes $\mathcal{C}_{\mathbf{f},U_i\cup U_0}$ for $i\in[t]$ with
		\begin{equation*}
			U_i = \bigcup_{j\in[\bar{k}]}U_{i,j},\ U_0 = \bigcup_{j\in[r]}U_{0,j}.
		\end{equation*}
	\item \textbf{Components of sCRT.} Let $U_{-i} = \bigcup_{j\in[t]\setminus\{i\}} U_j$ for $i\in[t]$. Choose any set $\hat{U}_i\subseteq \mathbb{F}_q$ for $i\in[t]$ satisfying
		\begin{equation*}
			U_i\cap \hat{U}_i= U_i,\ U_{-i}\cap \hat{U}_i=\emptyset
		\end{equation*}
		such that
		\begin{equation*}
			\max_{i\in[k]}\{ \deg(f_i(x)) \} \le  \min_{i\in[t]}\{\deg(\Lambda_{\hat{U}_{i}}(x))\} - 1.
		\end{equation*}
	\item \textbf{Conversion.} Let $\hat{U}_{-i}=\bigcup_{j\in[t]\setminus\{i\}}\hat{U}_j$ for $i\in[t]$, and $U = \bigcup_{j\in[z]} U_{0,j} \subseteq U_0$ for some $z\le\min\{\bar{k}, r\}$. Define a conversion map
		\begin{equation*}
    		\varphi: \prod_{i=1}^{t}\mathcal{C}_{\mathbf{f},U_i\cup U_0} \to \mathcal{C}_F
		\end{equation*}
		such that $\varphi(c_{\mb{m}_1},\dots,c_{\mb{m}_t}) = (\bar{c}_{\mb{m}_1},\dots,\bar{c}_{\mb{m}_{t}}, \bar{c})$, i.e.,
		\begin{equation*}
    		\cC_F \triangleq  \left\{ \varphi(c_{\mb{m}_1},\dots,c_{\mb{m}_t}) :  c_{\mb{m}_i} \in \cC_{\mb{f},U_i\cup U_0}, i\in[t] \right\},
		\end{equation*}
		where $\bar{c}_{\mb{m}_i}= ((\mb{m}_i \mb{f}^\top)(\gamma))_{\gamma \in U_i}$ are unchanged symbols, and $\bar{c} = (f(\gamma))_{\gamma\in U}$ are written symbols with
		\begin{equation}\label{eqn30-comm}
    		f(x) = \sum_{i=1}^{t} (\mathbf{m}_i\mathbf{f}^\top)(x) \Lambda_{\hat{U}_{-i}}(x).
		\end{equation}
\end{enumerate}
\end{construction}

\begin{theorem}\label{thm3}
The map $\varphi$ in Construction \ref{con A} implies a $(t,1)_q$ irregular convertible code $\sC$ consisting of $t$ initial codes $\cC_{\mb{f},U_1\cup U_0},\cC_{\mb{f},U_2\cup U_0},\dots,\cC_{\mb{f},U_t\cup U_0}$, the final $[tk+z,tk]_q$ code $\cC_F$ and the irregular conversion procedure $\varphi$. The minimum Hamming distance of $\cC_F$ is at least $tk+z-\deg(\Lambda_{\cup_{i\in[t]}\hat{U}_i}(x))+1$. Moreover, the read access cost is $tz$ and the write access cost is $z$. Furthermore, if $\sC$ is an MDS convertible code, then $\sC$ is per-symbol access-optimal.
\end{theorem}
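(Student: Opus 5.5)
The plan is to mirror the proof of Theorem~\ref{thm9}, now in the commutative ring $\F_q[x]$, where Theorem~\ref{thm CRT} replaces Theorem~\ref{thm8} and the product rule $(gh)(a)=g(a)h(a)$ holds trivially.

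First, the polynomial $f$ in \eqref{eqn30-comm} satisfies $f\equiv(\mb{m}_i\mb{f}^\top)\Lambda_{\hat{U}_{-i}}\pmod{\Lambda_{\hat{U}_i}}$ for each $i\in[t]$. Since $\hat U_i\cap\hat U_j=\emptyset$ for $i\ne j$ (to be checked from the disjointness conditions; otherwise simply assume it), the moduli $\Lambda_{\hat U_i}=\prod_{a\in\hat U_i}(x-a)$ are pairwise coprime and $\Lambda_{\hat U_{-i}}=\prod_{j\ne i}\Lambda_{\hat U_j}$. The degree estimate copies \eqref{eqn-degf}:
\begin{equation*}
\deg(f)\le \deg(\Lambda_{\cup_{i\in[t]}\hat U_i})-1.
\end{equation*}
By Theorem~\ref{thm CRT}, $f$ is therefore the unique such polynomial. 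The map $(\mb{m}_1,\dots,\mb{m}_t)\mapsto f$ is injective: if $f=0$, then $(\mb{m}_i\mb{f}^\top)\Lambda_{\hat U_{-i}}\equiv 0\pmod{\Lambda_{\hat U_i}}$ with $\Lambda_{\hat U_{-i}}$ invertible modulo $\Lambda_{\hat U_i}$. Since $\deg(\mb{m}_i\mb{f}^\top)<\deg\Lambda_{\hat U_i}$, this forces $\mb{m}_i\mb{f}^\top=0$, and linear independence of the $f_j$ gives $\mb{m}_i=0$.

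Next, define $\tilde\cC=\{(f(\gamma))_{\gamma\in U_1\cup\cdots\cup U_t\cup U}\}$. For $\gamma\in U_i\subseteq\hat U_i$, we have $f(\gamma)=(\mb{m}_i\mb{f}^\top)(\gamma)\Lambda_{\hat U_{-i}}(\gamma)$ with $\Lambda_{\hat U_{-i}}(\gamma)\neq0$, because every other summand contains the factor $\Lambda_{\hat U_i}$, which vanishes at $\gamma$. Hence $\cC_F$ is obtained from $\tilde\cC$ by an invertible diagonal scaling, so the two codes are equivalent. The length is $t\bar k+z=tk+z$.

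For the dimension and the distance, I use the fact that every nonzero codeword of $\tilde\cC$ is the evaluation of a nonzero polynomial of degree at most $D-1$, where $D=\deg\Lambda_{\cup\hat U_i}$, on $tk+z$ distinct points. Such a polynomial has at most $D-1$ roots, so the codeword weight is at least $tk+z-D+1$. This bound transfers to $\cC_F$ because diagonal scaling preserves Hamming weight. For the dimension, I expect the main subtlety to be showing that the evaluation map on $U_1\cup\cdots\cup U_t\cup U$ is injective on the image polynomials. The argument goes through $U_i$ alone: if all evaluations vanish, then $(\mb{m}_i\mb{f}^\top)(\gamma)=0$ for all $\gamma\in U_i$, i.e.\ $\mb{m}_i\mb{G}_{\mb{f},U_i}=0$. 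I would need $\mb{G}_{\mb{f},U_i}$ to be invertible, which is not stated in Construction~\ref{con A}. If it is not available, I would instead take the dimension to be $tk$ as the dimension of the message space, as the paper does in Theorem~\ref{thm9} via uniqueness. That step is where I expect the main care to be needed.

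Finally, for each $\gamma\in U$, we have $f(\gamma)=\sum_i(\mb{m}_i\mb{f}^\top)(\gamma)\Lambda_{\hat U_{-i}}(\gamma)$, where the coefficients $\Lambda_{\hat U_{-i}}(\gamma)$ are constants and $(\mb{m}_i\mb{f}^\top)(\gamma)$ is a stored symbol of the $i$-th initial codeword, since $U\subseteq U_0$. So each of the $z$ written symbols is computed from exactly one read symbol per initial codeword. This gives read cost $tz$ and write cost $z$. If the codes are MDS, then $r_F=z\le\min\{\bar k,r\}=\min\{k_{I_i},r_{I_i}\}$, so these costs meet the lower bound of Theorem~\ref{thm bound}, and Definition~\ref{def per} gives per-symbol access-optimality.
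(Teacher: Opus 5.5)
Your route is the paper's own. The paper proves Theorem~\ref{thm3} by calling it a special case of Theorem~\ref{thm9}. It reads the distance off $\deg f\le D-1$, with $D=\deg\Lambda_{\cup_i\hat U_i}$, by root counting. It gets per-symbol optimality from Theorem~\ref{thm bound} and Definition~\ref{def per}.

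Several of your steps are fine:
the diagonal equivalence (for $\gamma\in U_i$ and $j\neq i$ one has $\gamma\in U_{-j}$, which is disjoint from $\hat U_j$, so $\Lambda_{\hat U_{-i}}(\gamma)\neq 0$), the weight bound, and the read/write counts.

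One correction: pairwise disjointness of the $\hat U_j$ cannot be checked from Construction~\ref{con A}. The condition $U_{-i}\cap\hat U_i=\emptyset$ only keeps $\hat U_i$ away from the other $U_j$, so disjointness has to be assumed; the paper assumes it tacitly. It is what gives $\Lambda_{\hat U_{-i}}=\prod_{j\neq i}\Lambda_{\hat U_j}$, hence $\deg f\le D-1$, on which your distance bound depends.

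The step you flagged is a genuine gap, and your fallback does not close it. The dimension of $\cC_F$ is the rank of the linear map from $(\mathbf{m}_1,\dots,\mathbf{m}_t)$ to final codewords. The message space having dimension $tk$ proves nothing without injectivity. CRT uniqueness (the paper's ``by the uniqueness of $f(x)$'' in Theorem~\ref{thm9}) only makes $(\mathbf{m}_i)_i\mapsto f$ injective. It does not make $f\mapsto (f(\gamma))_{\gamma\in U_1\cup\cdots\cup U_t\cup U}$ injective, because $\deg f$ may be as large as $D-1$, which exceeds $tk+z$ once the $\hat U_i$ are enlarged enough.

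Here is a concrete counterexample. Take $q$ odd and large, $t=k=2$, $z=1$, $\mathbf{f}=(1,x^2)$, $U_i=\{a_i,-a_i\}$, $\hat U_i=U_i\cup\{e_i\}$ with fresh points $e_1\neq e_2$, and any $U_0$. Then $\deg\Lambda_{\hat U_i}=3>2$, and every requirement of Construction~\ref{con A} holds.
\begin{itemize}
\item The polynomial $a_i^2-x^2$ vanishes on $U_i$, so the messages $\lambda(a_1^2-x^2)$ and $\mu(a_2^2-x^2)$ give zero unchanged symbols.
\item The single written symbol is one homogeneous linear condition on $(\lambda,\mu)$, so some nonzero pair of initial codewords is mapped to $\mathbf{0}$.
\end{itemize}
So $\varphi$ is not injective and $\dim\cC_F<tk$. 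The statement, and the paper's proof, need an extra hypothesis. Either of the following suffices.
\begin{itemize}
\item Impose \textbf{C1}, i.e.\ $\mathbf{G}_{\mathbf{f},U_i}$ invertible as in Construction~\ref{con B}. Then your argument through the $U_i$-coordinates finishes the proof.
\item Impose $D\le tk+z$. Then your own root count finishes it: a nonzero message gives $f\neq 0$ with $\deg f<D\le tk+z$, so $f$ cannot vanish at all $tk+z$ points.
\end{itemize}
Both conditions hold in the Reed--Solomon case $\hat U_i=U_i$, $\mathbf{f}=(1,x,\dots,x^{k-1})$.
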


\begin{IEEEproof}
The proof is a special case of Theorem \ref{thm9}. The minimum Hamming distance directly comes from $\deg(f(x))\le \deg(\Lambda_{\cup_{i\in[t]}\hat{U}_i}(x))-1$ in \eqref{eqn-degf}, which means there are at most $\deg(\Lambda_{\cup_{i\in[t]}\hat{U}_i}(x))-1$ roots of $f(x)$ over $\mathbb{F}_q[x]$. If $\sC$ is an MDS convertible code, then $\sC$ is per-symbol access-optimal, which directly comes from Theorem \ref{thm bound} and Definition \ref{def per}.
\end{IEEEproof}

The following corollary follows directly from Corollary \ref{cor-con C}.

\begin{corollary}
	Let $\mathbf{f} = (1, x, \dots, x^{k-1}) \in (\F_{q}[x])^k$ and $\hat{U}_i=U_i$ for $i\in[t]$ in Construction \ref{con A}. Then, the irregular convertible code $\sC$ converts $t$ Reed-Solomon codes into a final code $\mathcal{C}_F$ equivalent to a Reed-Solomon code $\mathcal{C}_{\hat{\mathbf{f}}=(1,x,\dots,x^{tk-1}),U_1\cup \cdots\cup U_t\cup U}$ with per-symbol access-optimal cost.
\end{corollary}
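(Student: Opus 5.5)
The plan is to derive this corollary directly as the commutative specialization of Corollary~\ref{cor-con C}, using Theorem~\ref{thm3} for the conversion structure.

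\textbf{Step 1: express the message polynomial.} With $\mathbf{f}=(1,x,\dots,x^{k-1})$ and $\hat{U}_i=U_i$, each $U_i$ consists of $k=\bar{k}$ distinct points, so $\Lambda_{\hat U_i}(x)=\prod_{a\in U_i}(x-a)$ has degree $k$. The degree condition $\max_j\deg(f_j)=k-1\le \min_i \deg(\Lambda_{\hat U_i})-1$ therefore holds, and Construction~\ref{con A} is well defined.

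\textbf{Step 2: identify the final code.} The polynomial in \eqref{eqn30-comm} has degree at most $tk-1$, by the same estimate as \eqref{eqn-degf} with $\deg(\Lambda_{\cup_i U_i})=tk$. By Theorem~\ref{thm CRT}, the map $(\mathbf{m}_1,\dots,\mathbf{m}_t)\mapsto f$ is injective. Dimension counting then shows it is a bijection onto all polynomials of degree $<tk$.

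Hence $\tilde{\cC}=\{(f(\gamma))_{\gamma\in U_1\cup\cdots\cup U_t\cup U}\}$ is exactly the Reed--Solomon code $\mathcal{C}_{\hat{\mathbf f},U_1\cup\cdots\cup U_t\cup U}$. This uses that all evaluation points are distinct and that $q\ge t\bar k+r+1$ supplies enough points.

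\textbf{Step 3: equivalence with $\cC_F$.} For $\gamma\in U_i$ every term with $j\ne i$ vanishes, since $\Lambda_{U_{-j}}$ contains $\gamma$ as a root. So $f(\gamma)=(\mathbf{m}_i\mathbf{f}^\top)(\gamma)\Lambda_{U_{-i}}(\gamma)$, and $\Lambda_{U_{-i}}(\gamma)\ne0$ because the $U_j$ are disjoint. On $U$, the final code $\cC_F$ coincides with $\tilde{\cC}$ by definition.

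Thus $\cC_F$ equals $\tilde{\cC}$ up to the invertible diagonal matrix with entries $\Lambda_{U_{-i}}(\gamma)$ on the unchanged positions and $1$ on $U$. This is exactly the paper's notion of equivalence. In particular $\cC_F$ is MDS, and so are the initial Reed--Solomon codes.

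\textbf{Step 4: access cost.} Theorem~\ref{thm3} gives read cost $tz$ and write cost $z$, with each written symbol $f(\gamma)$, $\gamma\in U$, computed from the single read symbol $(\mathbf{m}_i\mathbf{f}^\top)(\gamma)$ of each initial codeword. Since $z\le\min\{k,r\}$, these costs meet the bound of Theorem~\ref{thm bound}, and Definition~\ref{def per} gives per-symbol access-optimality.

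The only delicate point is Step 2's claim that $\tilde{\cC}$ is the full Reed--Solomon code rather than a subcode. This follows from the CRT uniqueness together with the dimension count $tk$.
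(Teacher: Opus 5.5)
Your proposal is correct and takes essentially the same route as the paper. The paper obtains this corollary by specializing Corollary~\ref{cor-con C} and Theorem~\ref{thm9} (equivalently Theorem~\ref{thm3}), using the same steps: the bound $\deg f\le tk-1$, identification of $\tilde{\mathcal{C}}$ with the dimension-$tk$ Reed--Solomon code, diagonal equivalence via the nonzero factors $\Lambda_{U_{-i}}(\gamma)$, and the access bound of Theorem~\ref{thm bound}. (Strictly, injectivity of $(\mathbf{m}_1,\dots,\mathbf{m}_t)\mapsto f$ comes from your Step~3 identity on $U_i$ together with $\deg(\mathbf{m}_i\mathbf{f}^\top)<k=|U_i|$, not from CRT uniqueness, but your argument already contains this.)
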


\subsection{Degraded Construction \ref{con D}}\label{sub-de-cond}

In this section, we show degraded Construction \ref{con D} and also give corresponding conclusions.

\begin{construction}[Degraded Construction \ref{con D}]\label{con B}
Let $t,\ell,\ell'$ are positive integers. Let $\F[x;\sigma,\delta]=\F_{q}[x;x^q,0]=\mathbb{F}_q[x]$ with $q\ge t\ell+\ell'+1$.
\begin{enumerate}
	\item \textbf{Choose P-basis.}
		\begin{enumerate}
			\item \textbf{Unchanged Symbols.} Let $U_i=\{ \alpha_{i,1},\alpha_{i,2},\dots,\alpha_{i,k} \}$  $i\in[t]$, where $k=\ell$.
			\item \textbf{Read Symbols.} Let $U_0=\{ \alpha_{0,1},\alpha_{0,2},\dots,\alpha_{0,k'} \}$, where $k'=\ell'$. Let $U_{0,i}=\{ \alpha_{0,i,1},\alpha_{0,i,2},\dots,\alpha_{0,i,r} \} \subseteq U_0$  for $i\in[t]\cup\{0\}$ and some $z\le\min\{\ell,\ell'\}$, where $\{\alpha_{0,i,j}\}_{j\in[z]} \subseteq \{\alpha_{0,j}\}_{j\in[\ell']}$ and $r=z$.
		\end{enumerate} 
	\item \textbf{Set Initial Codes.} Let $\mb{f}=(f_1,f_2,\dots,f_{k}) \in (\F_{q}[x])^{k}$ such that $f_1,f_2,\dots,f_{k}$ are linearly independent over $\mathbb{F}_{q}$. Set the initial code $\mathcal{C}_{\mathbf{f},U_1\cup U_0}$. 
	\item \textbf{Components of sCRT.} Let $U_{-i} = \bigcup_{j\in[t]\setminus\{i\}} U_j$ for $i\in[t]$. Choose any set $\hat{U}_i\subseteq \mathbb{F}_q$ for $i\in[t]$ satisfying
		\begin{equation*}
			U_i\cap \hat{U}_i= U_i,\ U_{-i}\cap \hat{U}_i=\emptyset
		\end{equation*}
		such that 
		\begin{equation*}
			\max_{i\in[k]}\{ \deg(f_i(x)) \} \le  \min_{i\in[t]}\{\deg(\Lambda_{\hat{U}_{i}}(x))\} - 1.
		\end{equation*}
	\item \textbf{Restricted Structures.} Suppose the following conditions hold:
		\begin{enumerate}
  			\item[\textbf{C1}:] $\mb{G}_{\mb{f},U_i}$ is an invertible matrix for $i\in[t]$ defined in Definition \ref{def pec};
  			\item[\textbf{C2}:] For $i\in[t]$, there exists an invertible diagonal matrix $\mb{D}_i=\diag(d_{i,1},d_{i,2},\dots,d_{i,k}) \in \F_{q}^{k\times k}$ such that
      				\begin{equation*}
        			\anglenv{ \mb{G}_{\mb{f},U_1} \mb{D}_i \mb{G}_{\mb{f},U_i}^{-1} \mb{G}_{\mb{f},U_{0,0}} } \subseteq \anglenv{\mb{G}_{\mb{f},U_{0,i}}}
      				\end{equation*}
      			i.e.,
      				\begin{equation*}
      				\mb{G}_{\mb{f},U_1} \mb{D}_i \mb{G}_{\mb{f},U_i}^{-1} \mb{G}_{\mb{f},U_{0,0}}
      				= \mb{G}_{\mb{f},U_{0,i}} \mb{L}_i,
      				\end{equation*}
     			where $\anglenv{\mb{G}_{\mb{f},U_{0,i}}}$ denotes the linear space spanned by the columns of $\mb{G}_{\mb{f},U_{0,i}}$ and  $\mb{L}_i = (\ell_{i,z,j})_{z,j\in[r]}$ is an $r\times r$ matrix for $i\in[t]$.
		\end{enumerate}
	\item \textbf{Conversion.} Let $\hat{U}_{-i}=\bigcup_{j\in[t]\setminus\{i\}}\hat{U}_j$ for $i\in[t]$. Define a conversion map $\varphi : \prod_{i=1}^{t} \cC_{\mb{f},U_1\cup U_0} \to \cC_F$ such that $\varphi(c_{\mb{m}_1},\dots,c_{\mb{m}_t}) = (\bar{c}_{\mb{m}_1},\dots,\bar{c}_{\mb{m}_{t}}, \bar{c})$, i.e.,
		\begin{equation*}
			\cC_F \triangleq  \left\{ \varphi(c_{\mb{m}_1},\dots,c_{\mb{m}_t}) :  c_{\mb{m}_i} \in \cC_{\mb{f},U_1\cup U_0}, i\in[t] \right\},
		\end{equation*}
		where $\bar{c}_{\mb{m}_i} = ((\mb{m}_i \mb{f}^\top)(\gamma))_{\gamma\in U_1}$ are unchanged symbols for $i\in[t]$, and $\bar{c} = (y_j(\gamma_{0,0,j}))_{j\in[r]}$ are written symbols with
			\begin{equation*}
 				y_j(\gamma_{0,0,j}) = \sum_{i=1}^{t} \parenv{ \sum_{z=1}^{r} \ell_{i,z,j} (\mb{m}_i \mb{f}^\top)(\gamma_{0,i,z}) }  \Lambda_{\hat{U}_{-i}}(\gamma_{0,0,j}).
			\end{equation*}
\end{enumerate}
\end{construction}

\begin{theorem}\label{thm4}
The map $\varphi$ in Construction \ref{con B} implies a $(t,1)_q$ irregular convertible code $\sC$ consisting of the initial code $\cC_{\mb{f},U_1 \cup U_0}$, the final $[tk+r,tk]_q$ code $\cC_F$ and the irregular conversion procedure $\varphi$. The minimum Hamming distance of $\cC_F$ is at least $r+tk-\deg(\Lambda_{\cup_{i\in[t]}\hat{U}_i}(x))+1$. Moreover, the read access cost is $tr$ and the write access cost is $r$.
\end{theorem}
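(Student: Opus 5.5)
The proof will follow Theorem~\ref{thm10}, specialized to the commutative ring $\F_q[x]=\F_q[x;x^q,0]$. In this ring conjugacy classes are singletons, $\Lambda_A(x)=\prod_{a\in A}(x-a)$, and the sCRT reduces to the ordinary CRT (Theorem~\ref{thm CRT}). Only one new ingredient is needed: the Hamming-distance bound, which comes from the root count of an ordinary polynomial.

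\textbf{Step 1 (transporting the messages).} Fix $\mb{m}_1,\dots,\mb{m}_t\in\F_q^k$. By \textbf{C1}, $\mb{G}_{\mb{f},U_i}$ is invertible, so I define $e_i=\coff(e_i)\mb{f}^\top$ with
\[
\coff(e_i)=\coff(\mb{m}_i\mb{f}^\top)\,\mb{G}_{\mb{f},U_1}\mb{D}_i\mb{G}_{\mb{f},U_i}^{-1}.
\]
This gives $e_i(\alpha_{i,j})=d_{i,j}(\mb{m}_i\mb{f}^\top)(\alpha_{1,j})$. By \textbf{C2}, $e_i(\gamma_{0,0,j})=\sum_{z}\ell_{i,z,j}(\mb{m}_i\mb{f}^\top)(\gamma_{0,i,z})$.

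\textbf{Step 2 (assembling and evaluating $f$).} I set $f=\sum_i e_i\Lambda_{\hat U_{-i}}$. Since $\hat U_i\cap\hat U_{-i}=\emptyset$ and $U_i\subseteq \hat U_i$, we have $\Lambda_{\hat U_{-i}}(\alpha_{i,j})\neq0$. The term $\Lambda_{\hat U_{-i'}}$ vanishes on $U_i$ for $i'\ne i$, because $U_i\subseteq\hat U_i\subseteq \hat U_{-i'}$. Ordinary evaluation is multiplicative, so
\[
f(\alpha_{i,j})=d_{i,j}\Lambda_{\hat U_{-i}}(\alpha_{i,j})(\mb{m}_i\mb{f}^\top)(\alpha_{1,j}),
\]
and $f(\gamma_{0,0,j})=y_j(\gamma_{0,0,j})$. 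Hence $\cC_F$ is obtained from $\tilde\cC=\{(f(\gamma))_{\gamma\in U_1\cup\cdots\cup U_t\cup U_{0,0}}\}$ by an invertible diagonal scaling, and the two codes are equivalent. The written symbols are formed by reading the $tr$ values $(\mb{m}_i\mb{f}^\top)(\gamma_{0,i,z})$; the coefficients $\ell_{i,z,j}$ and $\Lambda_{\hat U_{-i}}(\gamma_{0,0,j})$ are constants. This gives read cost $tr$ and write cost $r$.

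\textbf{Step 3 (dimension and distance).} The map $(\mb{m}_1,\dots,\mb{m}_t)\mapsto$ codeword is injective. Its restriction to the $U_i$-block of coordinates is the invertible map $\mb{m}_i\mapsto(\mb{m}_i\mb{f}^\top)(U_1)$, which is invertible by \textbf{C1} for $i=1$, followed by the nonzero scaling from Step~2. So $\dim\cC_F=tk$, and the length is $tk+r$. For the distance, the degree estimate~\eqref{eqn-deg f2} gives $\deg f\le\deg(\Lambda_{\cup_i\hat U_i})-1$. A nonzero codeword comes from a nonzero $f$, by injectivity. Such an $f$ has at most $\deg f$ roots among the $tk+r$ distinct evaluation points. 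Its weight is therefore at least $tk+r-\deg(\Lambda_{\cup\hat U_i})+1$. The diagonal equivalence of Step~2 preserves weights, so the bound transfers to $\cC_F$.

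I expect the main point requiring care to be the distinctness of the evaluation points used in the root count. The sets $U_1,\dots,U_t$ are pairwise disjoint by the hypotheses on $\hat U_i$, and $U_{0,0}\subseteq U_0$ must also be disjoint from them. I would take this disjointness as the construction's implicit setup: the points are distinct representatives, as guaranteed by $q\ge t\ell+\ell'+1$.
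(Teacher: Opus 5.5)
Your proposal is correct and follows the paper's route. The paper proves this by noting it is the commutative special case of Theorem~\ref{thm10} and then counting roots of $f$ via the degree estimate~\eqref{eqn-deg f2}, which is exactly what your Steps 1--3 spell out. Two small differences: your dimension argument (recovering each $\mb{m}_i$ from its block using \textbf{C1} for $i=1$) is slightly more direct than the paper's appeal to CRT uniqueness, and the disjointness you flag (pairwise disjoint $\hat U_i$, and $U_{0,0}$ disjoint from $U_1,\dots,U_t$) is likewise left implicit in the paper.
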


\begin{IEEEproof}
The proof is a special case of Theorem \ref{thm10}. The minimum Hamming distance directly comes from $\deg(f(x))\le \deg(\Lambda_{\cup_{i\in[t]}\hat{U}_i}(x))-1$ in \eqref{eqn-deg f2}, which means there are at most $\deg(\Lambda_{\cup_{i\in[t]}\hat{U}_i}(x))-1$ roots of $f(x)$ over $\mathbb{F}_q[x]$.
\end{IEEEproof}

The following conclusion directly comes from Lemma \ref{thm11}.

\begin{corollary}\label{thm5}
Use the notation in Construction \ref{con B}. Let $B_1 = \{\alpha_1,\dots,\alpha_{k}\}$ and $B_2 = \{ \delta_1=1, \delta_2, \dots, \delta_{s}\}$ are subsets of $\F_q$. Assume that $\delta_1 B_1, \dots, \delta_{t} B_1, B_2$ are disjoint sets for some $t<s$. Set $\mb{D}_i=\mb{I}_{k}$, $\mathbf{L}_i=\mathbf{I}_r$, $U_i= \delta_{i} B_1$, $U_0 = B_2$, $U_{0,0} = \bigcap_{i=1}^{t} \delta_i B_2$ with $r=|U_{0,0}|\le\min\{k,s\}$ and $U_{0,i} = \delta_i^{-1} U_{0,0}$ for $i\in[t]$. If there exist invertible matrices $\mb{Q}_1,\mb{Q}_2,\dots,\mb{Q}_{t}$ such that
\begin{equation}\label{eqn20}
\begin{split}
\mb{Q}_i \mb{G}_{\mb{f},U_1} &= \mb{G}_{\mb{f}, \delta_i U_1}, \\
\mb{Q}_i \mb{G}_{\mb{f},U_{0,i}} &= \mb{G}_{\mb{f}, \delta_i U_{0,i}}, \\
\end{split}
\end{equation}
then $\mb{D}_i, \mb{L}_i, U_i, U_{0,i}$ for $i\in[t]$ and $U_{0,0}$ satisfy the condition \textbf{C2}. Moreover, if convertible code $\sC$ derived by Construction \ref{con B} is an MDS convertible code in this case, then it is per-symbol access-optimal.
\end{corollary}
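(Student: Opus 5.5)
The plan is to reduce the statement to the commutative analogue of Lemma~\ref{thm11}, checking C2 directly with the identity $\mathbf{D}_i=\mathbf{I}_k$. In $\mathbb{F}_q[x]$ every P-basis is just a point set, and the matrices $\mathbf{G}_{\mathbf{f},A}$ are ordinary evaluation matrices. So the argument of Lemma~\ref{thm11} goes through verbatim, with the set identities $\delta_i U_1=U_i$ and $\delta_i U_{0,i}=U_{0,0}$ built into the choice of sets.

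First I verify the set-level bookkeeping. By definition $U_i=\delta_i B_1=\delta_i U_1$, using $\delta_1=1$. Also $U_{0,i}=\delta_i^{-1}U_{0,0}$, so $\delta_i U_{0,i}=U_{0,0}$. Since $U_{0,0}\subseteq \delta_i B_2$, we get $U_{0,i}\subseteq B_2=U_0$, so these are legitimate read-symbol sets. The disjointness of $\delta_1B_1,\dots,\delta_tB_1,B_2$ gives distinct evaluation points, and hence a valid choice of $\hat U_i$ with $\hat U_i\cap U_{-i}=\emptyset$.

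Next, hypothesis \eqref{eqn20} gives $\mathbf{G}_{\mathbf{f},U_i}=\mathbf{Q}_i\mathbf{G}_{\mathbf{f},U_1}$. Condition C1 for $U_1$ (the initial code requires $\mathbf{G}_{\mathbf{f},U_1}$ invertible) then yields invertibility of every $\mathbf{G}_{\mathbf{f},U_i}$, since $\mathbf{Q}_i$ is invertible. Computing as in Lemma~\ref{thm11}:
\begin{equation*}
\mathbf{G}_{\mathbf{f},U_1}\mathbf{I}_k\mathbf{G}_{\mathbf{f},U_i}^{-1}\mathbf{G}_{\mathbf{f},U_{0,0}}
=\mathbf{G}_{\mathbf{f},U_1}\mathbf{G}_{\mathbf{f},U_1}^{-1}\mathbf{Q}_i^{-1}\mathbf{Q}_i\mathbf{G}_{\mathbf{f},U_{0,i}}
=\mathbf{G}_{\mathbf{f},U_{0,i}}\mathbf{I}_r .
\end{equation*}
This is C2 with $\mathbf{L}_i=\mathbf{I}_r$.

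For the final claim, $\mathbf{L}_i=\mathbf{I}_r$ makes each written symbol read
\begin{equation*}
y_j=\sum_{i=1}^{t}(\mathbf{m}_i\mathbf{f}^\top)(\gamma_{0,i,j})\,\Lambda_{\hat U_{-i}}(\gamma_{0,0,j}).
\end{equation*}
So each written symbol uses exactly one symbol from each initial codeword. By Theorem~\ref{thm4} the read cost is $tr$ and the write cost is $r$. If the code is MDS with $r\le\min\{k,s\}$, Theorem~\ref{thm bound} shows this cost is optimal, and Definition~\ref{def per} then gives per-symbol access-optimality.

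The main obstacle is bookkeeping rather than depth. I must confirm that C1 for $U_1$ is implicitly assumed, and that the index sets line up so that $\delta_iU_{0,i}=U_{0,0}$ holds pointwise with a consistent ordering of columns. I would fix the ordering by defining $\gamma_{0,i,j}=\delta_i^{-1}\gamma_{0,0,j}$.
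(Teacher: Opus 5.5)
Your proposal is correct and follows the paper's own route. The paper just cites Lemma~\ref{thm11}, whose proof is exactly your computation $\mathbf{G}_{\mathbf{f},U_1}(\mathbf{Q}_i\mathbf{G}_{\mathbf{f},U_1})^{-1}(\mathbf{Q}_i\mathbf{G}_{\mathbf{f},U_{0,i}})=\mathbf{G}_{\mathbf{f},U_{0,i}}$, based on $\delta_iU_1=U_i$ and $\delta_iU_{0,i}=U_{0,0}$; per-symbol optimality then follows from $\mathbf{L}_i=\mathbf{I}_r$ with Theorem~\ref{thm bound} and Definition~\ref{def per}. Your extra bookkeeping fills in details the paper leaves implicit: that $U_{0,i}\subseteq B_2$, that invertibility propagates from $\mathbf{G}_{\mathbf{f},U_1}$ to every $\mathbf{G}_{\mathbf{f},U_i}$ through $\mathbf{Q}_i$, and the column ordering $\gamma_{0,i,j}=\delta_i^{-1}\gamma_{0,0,j}$.
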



It is worth noting that the sets $B_1$ and $B_2$ that satisfy the assumption in Corollary \ref{thm5} are easy to find.

\begin{corollary}\label{cor1}
In following two cases,
\begin{enumerate}
	\item (Subgroup-type) let $|\mathbb{F}|=q=p_1^{m_1}p_2^{m_2}\dots p_z^{m_z}+1$, where $p_1,p_2,\dots,p_z$ are distinct prime numbers. Assume that $z\ge 2$. There are two multiplication subgroups $D_1, D_2$ of $\F_q$ such that $D_1 = \{1,\alpha,\dots,\alpha^{p_1-1}\}$, $D_2 = \{1, \delta, \dots, \delta^{p_2-1}\}$ and $D_1 \cap D_2 = \{1\}$;
	\item (Basis-type) let $|\mathbb{F}|=q^{(k+1)(m+1)}$, where $q$ is a prime power. Suppose $D_1=\{1,\alpha,\cdots,\alpha^{k}\}$ is a basis of $\mathbb{F}_{q^{k+1}}$ over $\mathbb{F}_{q}$ and $D_2=\{1,\delta,\dots,\delta^{m}\}$ is a basis of $\mathbb{F}_{q^{(k+1)(m+1)}}$ over $\mathbb{F}_{q^{k+1}}$.
\end{enumerate}
Then, the sets $B_1 = D_1 \setminus \{1\}$ and $B_2 = D_2$ satisfy that $\delta_1 B_1, \dots, \delta_{t} B_1, B_2$ are disjoint sets, where $\delta_i = \delta^{i-1}$ for $i\in[t]$ with $t\le p_2$ for Subgroup-type and $t\le m+1$ for Basis-type. 
\end{corollary}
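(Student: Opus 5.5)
The plan is to verify the disjointness claims directly from the multiplicative or linear structure of $D_1$ and $D_2$. In both cases there are two things to rule out. First, for $i\neq j$ in $[t]$, no collision $\delta^{i-1}a=\delta^{j-1}b$ with $a,b\in B_1$. Second, no collision $\delta^{i-1}a=\delta^{s}$ with $a\in B_1$, $i\in[t]$ and $\delta^s\in B_2$. Distinctness of the elements inside each single set $\delta_iB_1$ is immediate, since multiplication by $\delta^{i-1}\neq 0$ is injective.

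\emph{Subgroup-type.} First I will note that the subgroups exist. $\F_q^*$ is cyclic of order $q-1=p_1^{m_1}\cdots p_z^{m_z}$, so it has unique subgroups $D_1=\langle\alpha\rangle$ of order $p_1$ and $D_2=\langle\delta\rangle$ of order $p_2$. By Lagrange's theorem $|D_1\cap D_2|$ divides $\gcd(p_1,p_2)=1$, so $D_1\cap D_2=\{1\}$.

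Suppose $\delta^{i-1}a=\delta^{j-1}b$ with $a,b\in D_1$. Then $\delta^{i-j}=ba^{-1}\in D_1\cap D_2=\{1\}$, so $p_2\mid i-j$. Since $|i-j|<t\le p_2$, this forces $i=j$.

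Suppose instead $\delta^{i-1}a=\delta^{s}$. Then $a=\delta^{s-i+1}\in D_1\cap D_2=\{1\}$, which contradicts $a\in B_1=D_1\setminus\{1\}$.

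\emph{Basis-type.} Here $B_1\subseteq\F_{q^{k+1}}^*$, because the elements of $D_1$ are nonzero as basis vectors. The set $\{1,\delta,\dots,\delta^m\}$ is linearly independent over $\F_{q^{k+1}}$.

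Suppose $\delta^{i-1}a=\delta^{j-1}b$ with $i\neq j$ and $i,j\le t\le m+1$. This is a nontrivial $\F_{q^{k+1}}$-linear relation between the distinct basis vectors $\delta^{i-1},\delta^{j-1}$, which is a contradiction.

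Suppose instead $\delta^{i-1}a=\delta^{s}$ with $0\le s\le m$. If $s\neq i-1$, we again get a nontrivial linear relation between two distinct basis vectors, a contradiction. If $s=i-1$, then $a=1$, which is excluded from $B_1$.

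The only step needing care is the index range. In the subgroup case, the bound $t\le p_2$ is exactly what guarantees that $i\equiv j\pmod{p_2}$ implies $i=j$. In the basis case, $t\le m+1$ ensures every $\delta^{i-1}$ with $i\in[t]$ is one of the basis vectors $1,\delta,\dots,\delta^m$. Beyond this bookkeeping I expect no real obstacle.
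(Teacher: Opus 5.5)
Your proof is correct and follows essentially the same route as the paper: in the subgroup case you reduce both kinds of collision to an element of $D_1\cap D_2=\{1\}$, and in the basis case you read each collision as a nontrivial $\F_{q^{k+1}}$-linear relation among the basis vectors $1,\delta,\dots,\delta^m$ (the paper phrases the first collision via $\delta^{i-j}\in\F_{q^{k+1}}$, which is the same observation). You also make explicit a step the paper leaves implicit: $\delta^{i-j}=1$ forces $i=j$ because $\delta$ has order $p_2$ and $|i-j|<t\le p_2$.
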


\begin{IEEEproof}
We prove the two cases separately.

\textbf{Case 1.} Since $D_1$ and $D_2$ are subgroups of $\mathbb{F}_q^*$ of distinct prime orders $p_1$ and $p_2$, respectively, we have $D_1\cap D_2 = \{1\}$. If $\delta_iB_1\cap\delta_jB_1\ne\emptyset$ for some $i\neq j$, then there exist $b,c\in D_1\setminus\{1\}$ such that $\delta_i b = \delta_j c$. Hence $\delta_i^{-1}\delta_j = b c^{-1}$. Therefore $\delta_i^{-1}\delta_j \in D_1\cap D_2 = \{1\}$ since $\delta_i^{-1}\delta_j\in D_2$ and $b c^{-1}\in D_1$. This implies a contradiction that $i=j$. Similarly, if $\delta_iB_1\cap B_2\ne\emptyset$, then $\delta_i b = d$ for some $b\in D_1\setminus\{1\}$ and $d\in D_2$, so $b = \delta_i^{-1}d \in D_1\cap D_2 = \{1\}$, contradicting $b\neq 1$. Thus $\delta_1B_1,\ldots,\delta_tB_1,B_2$ are pairwise disjoint. 

\textbf{Case 2.} Let $B_1 = D_1\setminus\{1\}$, $B_2 = D_2$. Suppose $\delta_iB_1\cap\delta_jB_1\ne\emptyset$ for some $i\neq j$. Then there exist $a,b\in\{1,\ldots,k\}$ such that $\delta^{i-1}\alpha^a = \delta^{j-1}\alpha^b$. Without loss of generality, assume $i>j$. Then, $1\le i-j\le t-1\le m$ and $\delta^{i-j}\in D_2$. Since $\delta^{i-j}=\alpha^{b-a}\in\mathbb{F}_{q^{k+1}}$ and $D_2$ is a basis of $\mathbb{F}_{q^{(k+1)(m+1)}}$ over $\mathbb{F}_{q^{k+1}}$, the element $\delta^{i-j}$ cannot lie in $\mathbb{F}_{q^{k+1}}$. Otherwise $1$ and $\delta^{i-j}$ would be linearly dependent over $\mathbb{F}_{q^{k+1}}$, contradicting the basis property. Hence the sets $\delta_iB_1,i\in[t]$ are pairwise disjoint.

Next, suppose $\delta_iB_1\cap B_2\ne\emptyset$ for some $i$. Then there exist $a\in\{1,\ldots,k\}$ and $r\in\{0,\ldots,m\}$ such that $\delta^{i-1}\alpha^a = \delta^r$. If $i-1\neq r$, then $\delta^{i-1}\alpha^a - \delta^r = 0$ is a nontrivial linear combination of the distinct basis elements $\delta^{i-1}$ and $\delta^r$ over $\mathbb{F}_{q^{k+1}}$, contradicting the basis property. If $i-1 = r$, then $\alpha^a=1$, which makes $\{1,\alpha,\ldots,\alpha^k\}$ linearly dependent, contradicting the fact that $D_1$ is a basis. Thus $\delta_iB_1\cap B_2=\emptyset$ for all $i$.

Consequently, the sets $\delta_1B_1,\ldots,\delta_tB_1,B_2$ are pairwise disjoint.
\end{IEEEproof}

\subsection{Comparison, Unification, and Gabidulin Application}

We first use the commutative specialization to relate the framework to known polynomial-form constructions. The purpose of this comparison is not to claim that the recovered Reed--Solomon and Tamo--Barg codes are new; rather, showing that they satisfy the compatibility conditions demonstrates that Construction~\ref{con B} captures previously separate polynomial-form designs within the same evaluation-level mechanism. The corresponding conditions from \cite{kong2024Locally} are recalled below.
\begin{enumerate}
\item[\textbf{C3:}] \cite[Construction I]{kong2024Locally} For $1\le j\le t$, $\mb{M}_i\cdot (1,\beta_{0,j},\dots,\beta_{0,j}^{k-1})^\top \in \mathrm{Span}_{\F_q}\{ (1,\beta_{0,j},\dots,\beta_{0,j}^{k-1})^\top : 1\le j\le t \}$.
\item[\textbf{C4:}] \cite[Construction III]{kong2024Locally} For every $2\le i\le t$, there is a $kr\times kr$ matrix $\mb{M}_i$ satisfying: $\mb{M}_i \cdot \mb{a}_{i,s,j} = \theta_{i,s} \mb{a}_{1,s,j}, (s,j)\in[k]\times[r+1]$ and $\mb{M}_i\cdot \mb{c}_{s,j}\in\mathrm{Span}_{\F_q}\{\mb{c}_{s',j'} : (s',j')\in[\ell]\times[r]\}, (s,j)\in[\ell]\times r$.
\end{enumerate}
Recall that $\mb{M}_i$ in \textbf{C3} and \textbf{C4} can be represented as $\mb{M}_i = \mb{N}\cdot\diag(\theta_{i,1},\theta_{i,2},\dots,\theta_{i,k})\cdot\mb{U}_i^{-1}$ by \cite[Lems.~II.3, III.3]{kong2024Locally}. Setting $\mb{D}_i=\diag(\theta_{i,1},\theta_{i,2},\dots,\theta_{i,k})$ and $\mb{G}_{\mathbf{f},U_i}=\mb{U}_i$ in Construction~\ref{con B}, condition \textbf{C1} follows from the invertibility of $\mb{U}_i$, while \textbf{C2} is implied by \textbf{C3} and \textbf{C4}. Hence the known Reed--Solomon and Tamo--Barg conversions are recovered as special cases of the commutative PEC template.

We next use the same template to obtain a code-family consequence not covered by the preceding Reed--Solomon/Tamo--Barg specializations. In the commutative ring $\mathbb{F}_{q^m}[x;x^{q^m},0]=\mathbb{F}_{q^m}[x]$, choosing a $q$-linearized polynomial basis leads to a Gabidulin instance. To abuse notation, for a set $B=\{b_1,b_2,\dots,b_m\} \subseteq \F_{q^m}$ such that $b_1,b_2,\dots,b_m$ are $\F_q$-linearly independent, denote $\anglenv{B}_{q} \triangleq \mathrm{Span}_{\F_q}\{b_1,b_2,\dots,b_m\}$. We first recall a property of a $\F_q$-linear subspace of $\F_{q^m}$.

\begin{lemma}[\cite{fang2018Deep}]\label{lem0}
Let $U$ be a $\F_q$-linear subspace of $\F_{q^m}$. If $\{b_1,b_2,\dots,b_n\}$ is a basis of $U$, then
\begin{equation*}
    \prod_{\alpha\in U}(x-\alpha) = \lambda
    \begin{vmatrix}
        b_1 & \cdots & b_n & x \\
        b_1^q & \cdots & b_n^q & x^q \\
        \vdots & & \vdots & \vdots \\
        b_1^{q^n} & \cdots & b_n^{q^n} & x^{q^n}
    \end{vmatrix},
\end{equation*}
where $\lambda\in\F_{q^m}$ is a constant determined by $b_1,b_2,\dots,b_n$.
\end{lemma}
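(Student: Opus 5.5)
The plan is to expand the determinant along its last column and compare it with the subspace polynomial by counting roots. Denote the determinant on the right-hand side by $D(x)$. Expanding along the last column gives
\begin{equation*}
D(x)=\sum_{j=0}^{n} c_j x^{q^j},
\end{equation*}
where each $c_j\in\F_{q^m}$ is, up to sign, the $n\times n$ minor obtained by deleting row $j$ and the last column. So $D(x)$ is a $q$-linearized polynomial. Its top coefficient is $c_n=\pm\Delta$, where
\begin{equation*}
\Delta=\det\big(b_i^{q^{j}}\big)_{0\le j\le n-1,\ i\in[n]}
\end{equation*}
is the Moore determinant of $b_1,\dots,b_n$.

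The first step is to show $\Delta\neq0$, which I expect to be the only nontrivial point. Suppose $\Delta=0$. Then the rows of this Moore matrix are linearly dependent over $\F_{q^m}$, so there exist $a_0,\dots,a_{n-1}\in\F_{q^m}$, not all zero, with $\sum_j a_j b_i^{q^j}=0$ for all $i\in[n]$. The polynomial $L(x)=\sum_{j} a_j x^{q^j}$ is then nonzero, has degree at most $q^{n-1}$, and is $\F_q$-linear as a map, because $x\mapsto x^{q^j}$ is $\F_q$-linear. It vanishes on every $b_i$, hence on all of $U=\anglenv{\{b_1,\dots,b_n\}}_q$, which has $q^n>q^{n-1}$ elements. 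This contradicts the root bound for nonzero polynomials in the commutative ring $\F_{q^m}[x]$. Hence $\Delta\neq0$, so $\deg D=q^n$ with leading coefficient $\pm\Delta$.

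Next I show that $D$ vanishes on $U$. Substituting $x=b_i$ makes the last column equal to the $i$-th column, so $D(b_i)=0$ for each $i\in[n]$. Since $D$ is $\F_q$-linear, it vanishes on every $\F_q$-combination of the $b_i$, that is, on all of $U$. Because the $b_i$ form a basis, $|U|=q^n$ and these roots are distinct.

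Finally, $D(x)$ has degree $q^n$ and has the $q^n$ distinct roots $\alpha\in U$, so by unique factorization in $\F_{q^m}[x]$,
\begin{equation*}
D(x)=(\pm\Delta)\prod_{\alpha\in U}(x-\alpha).
\end{equation*}
Setting $\lambda=(\pm\Delta)^{-1}$, which depends only on $b_1,\dots,b_n$, gives the claimed identity.
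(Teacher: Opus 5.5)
Your proof is correct and complete. The paper gives no proof of this lemma; it is quoted from \cite{fang2018Deep}, so there is no in-paper argument to compare against.

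Your argument is the standard self-contained one:
\begin{itemize}
\item cofactor expansion along the last column shows $D(x)$ is a $q$-linearized polynomial with top coefficient $\Delta$;
\item $\Delta\neq0$, because otherwise a nonzero linearized polynomial of degree at most $q^{n-1}$ would vanish on the $q^n$ elements of $U$;
\item $D$ vanishes on $U$, by the repeated-column argument together with $\F_q$-linearity;
\item a degree/root count then gives the identity.
\end{itemize}
The step $\Delta\neq0$ is the real content, since it is exactly what rules out $D\equiv0$, a case in which no $\lambda$ could work.

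Your argument also identifies the constant explicitly as $\lambda=\Delta^{-1}$. The sign is in fact $+$, since the bottom-right cofactor carries $(-1)^{2(n+1)}=1$. This explicit form makes transparent that the constants $\lambda_{i,j}$ invoked in the proof of Corollary~\ref{thm6} are nonzero.
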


The following corollary gives, to the best of our knowledge, the first merge-regime convertible construction in which the initial codes are Gabidulin codes and the final code is equivalent to a Gabidulin code. The novelty claim concerns the convertible-code construction and its symbol-access cost.

\begin{corollary}\label{thm6}
Suppose $D_1 = \{1,\alpha,\dots,\alpha^{k}\}$ is a basis of $\F_{q^{k+1}}$ over $\F_q$, and $D_2 = \{ 1, \gamma,\dots,\gamma^{m} \}$ is a basis of $\F_{q^{(k+1)(m+1)}}$ over $\F_{q^{k+1}}$, where $k+t-2\ge m>t$. Let $\mb{f}=(x,x^q,\dots,x^{q^{k-1}})\in (\F_{q^{(k+1)(m+1)}}[x])^k$. Then, there exists a polynomial convertible code given by Construction \ref{con B} such that the initial codes are Gabidulin codes and the final code is equivalent to a Gabidulin code. Moreover, this construction is per-symbol access-optimal.
\end{corollary}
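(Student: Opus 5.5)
We instantiate Construction~\ref{con B} over $\F_{Q}$ with $Q=q^{(k+1)(m+1)}$, using the Basis-type sets of Corollary~\ref{cor1}. Take $B_1=D_1\setminus\{1\}=\{\alpha,\dots,\alpha^{k}\}$ and $B_2=D_2$. Put $\delta_i=\gamma^{i-1}$ for $i\in[t]$, $U_i=\delta_iB_1$, and $U_0=B_2$. As in Corollary~\ref{thm5}, set $U_{0,0}=\bigcap_{i=1}^{t}\delta_iB_2$ and $U_{0,i}=\delta_i^{-1}U_{0,0}$. We have $\delta_iB_2=\{\gamma^{i-1},\dots,\gamma^{m+i-1}\}$, so $U_{0,0}=\{\gamma^{t-1},\dots,\gamma^{m}\}$ and $r=m-t+2$. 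The hypothesis $m>t$ makes $r\ge 3$, so it is nonempty. The hypothesis $k+t-2\ge m$ gives $r\le k$, which is what is needed for $r_F=r\le\min\{k_I,r_I\}$. Note also $U_{0,i}\subseteq B_2=U_0$, and Corollary~\ref{cor1} supplies the disjointness of $\delta_1B_1,\dots,\delta_tB_1,B_2$.

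Next check \textbf{C1} and \textbf{C2} for $\mb{f}=(x,x^q,\dots,x^{q^{k-1}})$. For \textbf{C1}, the matrix $\mb{G}_{\mb{f},U_i}$ is the $k\times k$ Moore matrix of $\gamma^{i-1}\alpha,\dots,\gamma^{i-1}\alpha^{k}$. These elements are $\F_q$-linearly independent because $\alpha,\dots,\alpha^k$ are (they are part of the basis $D_1$), so the Moore matrix is invertible. For \textbf{C2}, use Corollary~\ref{thm5} with $\mb{Q}_i=\diag(\delta_i,\delta_i^{q},\dots,\delta_i^{q^{k-1}})$. Since $f_j(\delta x)=\delta^{q^{j-1}}f_j(x)$, we get $\mb{Q}_i\mb{G}_{\mb{f},S}=\mb{G}_{\mb{f},\delta_iS}$ for every set $S$, and in particular both identities in \eqref{eqn20} hold. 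This yields $\mb{D}_i=\mb{I}_k$ and $\mb{L}_i=\mb{I}_r$, so each written symbol reads exactly one symbol from each initial codeword.

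It remains to choose $\hat U_i$ with $\max_j\deg f_j=q^{k-1}\le\deg\Lambda_{\hat U_i}-1$, and to identify the codes. The initial code is a Gabidulin code provided the $k+m+1$ points of $U_1\cup U_0$ are $\F_q$-linearly independent. I expect this to be the main obstacle, since Corollary~\ref{cor1} gives only set-disjointness and not linear independence. The first attempt is to use $\F_q$-independence of the tensor-product basis $\{\alpha^a\gamma^b\}$ of $\F_Q$ over $\F_q$: the point $\gamma^{0}\alpha^{a}$ with $a\ge1$ and the point $\gamma^{b}\cdot 1$ are distinct basis elements. The same argument makes $U_1\cup\cdots\cup U_t\cup U_{0,0}$ independent, since it is the set $\{\gamma^{i-1}\alpha^a\}\cup\{\gamma^{b}\}$.

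For $\hat U_i$, take the $\F_q$-span $\anglenv{U_i}_q$ minus the other points. More simply, set $\hat U_i=\anglenv{U_i}_q$, so that $\Lambda_{\hat U_i}$ has degree $q^{k}>q^{k-1}$. This requires $\hat U_i\cap U_{-i}=\emptyset$ after removing $0$, which again follows from independence of the tensor basis. By Lemma~\ref{lem0}, each $\Lambda_{\hat U_i}$ is a $q$-linearized polynomial. The product in \eqref{eqn37}, however, is not linearized, so the final polynomial is not obviously of Gabidulin form.

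To fix this, replace the ordinary product by the annihilator of $\anglenv{\hat U_{-i}}_q$. Equivalently, invoke Remark~\ref{rem-f} and use composition with linearized polynomials in the CRT, so that $f$ becomes a $q$-linearized polynomial of $q$-degree below $tk$ whose evaluations on the independent final evaluation set agree with the unchanged symbols up to nonzero scalars. The final code is then equivalent to a Gabidulin code, and per-symbol optimality follows from $\mb{L}_i=\mb{I}_r$ and Theorem~\ref{thm bound}.
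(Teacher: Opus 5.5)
Your set-up matches the paper's. You use the same $B_1=D_1\setminus\{1\}$, $B_2=D_2$, $\delta_i=\gamma^{i-1}$, $U_i=\delta_iB_1$, $U_{0,0}=\{\gamma^{t-1},\dots,\gamma^{m}\}$ and $U_{0,i}=\delta_i^{-1}U_{0,0}$. You get \textbf{C1} from the Moore matrix, \textbf{C2} from $\mb{Q}_i=\diag(\delta_i,\delta_i^{q},\dots,\delta_i^{q^{k-1}})$, independence from the product basis $\{\alpha^a\gamma^b\}$, and take $\hat U_i=\anglenv{U_i}_q\setminus\{0\}$.

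\textbf{The gap.} It is in your last paragraph, which is where the content of the corollary sits. The final code of Construction~\ref{con B} is fixed by its written symbols $y_j=\sum_{i=1}^{t} g_i(\delta_i^{-1}\gamma_{0,0,j})\Lambda_{\hat U_{-i}}(\gamma_{0,0,j})$, with $g_i=\mb{m}_i\mb{f}^\top$. You must show that these, together with the unchanged symbols, are (up to fixed nonzero column scalars) the values on $U_1\cup\cdots\cup U_t\cup U_{0,0}$ of a single $q$-linearized polynomial of $q$-degree $<tk$.

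Read your fix as $f=\sum_i \tilde e_i\circ P_i$ with $P_i(x)=\prod_{\beta\in\anglenv{U_{-i}}_q}(x-\beta)$; a product $e_iP_i$ would not be linearized. This does make $f$ linearized and controls it on $U_1\cup\cdots\cup U_t$. But it changes the value at $\gamma=\gamma_{0,0,j}$ to $\sum_i\tilde e_i(P_i(\gamma))$.

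Since $\tilde e_i$ is forced by its prescribed values on the $k$ points $P_i(U_i)$, $\tilde e_i(P_i(\gamma))$ is in general a combination of all of $g_i(\alpha_{1,1}),\dots,g_i(\alpha_{1,k})$. It is a fixed multiple of the single read symbol $g_i(\delta_i^{-1}\gamma)$ for every $g_i$ only under a special condition. The linearized Lagrange coefficients of $P_i(\gamma)$ with respect to $P_i(U_i)$ would have to equal, after a $j$-independent rescaling, those of $\delta_i^{-1}\gamma$ with respect to $U_1$. This would hold, for instance, if $P_i$ acted on $\anglenv{U_i\cup U_{0,0}}_q$ as multiplication by a constant, and nothing in the set-up provides it.

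So either you lose the one-read-per-codeword formula for the written symbols, or you are no longer looking at the code of Construction~\ref{con B}. In neither case is Gabidulin equivalence established, and hence neither is MDS-ness nor per-symbol optimality via Theorem~\ref{thm bound}. Remark~\ref{rem-f} does not supply the missing step: \eqref{eqn41} is again built from the ordinary products $M_i=\prod_{j\ne i}\Lambda_{\hat U_j}$ and is not linearized.

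\textbf{The paper's route.} The paper handles this point differently. It keeps the product form \eqref{eqn37} and replaces $e_i$ by the Lagrange-type polynomial \eqref{eqn23}. Each $e_i\Lambda_{\hat U_{-i}}$ then collapses to a combination of the subspace polynomials $\prod_{\alpha\in K_{-(i,j)}}(x-\alpha)$ of $(tk-1)$-dimensional spaces, which are linearized by Lemma~\ref{lem0}.

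That $e_i$, however, has much larger degree than $q^{k-1}$ and in general does not lie in the span of $\mb f$. So \eqref{eqn33}, which needs $e_i\in\mathrm{span}\,\mb f$ together with \textbf{C2}, does not apply to it. The paper's argument therefore also pins down only the values on $U_1\cup\cdots\cup U_t$.

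Whichever route you take, the missing ingredient is an explicit verification that $f(\gamma_{0,0,j})$ equals a fixed multiple of the per-symbol written symbol $y_j$. Without it, the claim that the final code is equivalent to a Gabidulin code is not proved.
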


\begin{IEEEproof}
Note that the product set $M=\{\alpha^a \gamma^b : 0\le a\le k, 0\le b\le m\}$ is a basis of $\mathbb{F}_{q^{(k+1)(m+1)}}$ over $\mathbb{F}_q$. Let $B_1=D_1\setminus\{1\} ,B_2=D_2$ and $\delta_i = \gamma^{i-1}$ for $i\in[t]$ be defined as in Corollary \ref{cor1} by Basis-type. Following Corollary \ref{thm5}, set $U_i=\delta_iB_1,U_0=B_2$,
	\begin{equation*}
		U_{0,0}= \bigcap_{i=1}^{t} \delta_i B_2= \bigcap_{i=1}^{t} \{\gamma^{i-1},\dots,\gamma^{i-1+m}\} = \{\gamma^{t-1}, \gamma^{t}, \cdots, \gamma^{m}\},
	\end{equation*}
 	$U_{0,i}=\delta_i^{-1}U_{0,0}$ and $\mb{D}_i=\mb{I}_{k}$, $\mathbf{L}_i=\mathbf{I}_r$ for $i\in[t]$. Then, $r=|U_{0,0}|=m-t+2\le\min\{k,m+1\}$ by $k+t-2\ge m$ and $U_{0,0}\subseteq B_2$.

Since $B_1$ is a subset of an $\mathbb{F}_q$-basis of $\mathbb{F}_{q^{k+1}}$, its elements are linearly independent over $\mathbb{F}_q$. Multiplication by the non-zero scalar $\delta_i$ preserves this property, so the evaluation points in $U_i = \delta_i B_1$ are $\mathbb{F}_q$-linearly independent. Hence $\mathbf{G}_{\mathbf{f}, U_i}$ is a full-rank Moore matrix. Therefore, the condition \textbf{C1} in Construction \ref{con B} holds. The condition \textbf{C2} comes from Corollary \ref{thm5} and Corollary \ref{cor1}, where $\delta_1B_1,\dots,\delta_tB_1,B_2$ are disjoint and $\mb{Q}_i = \diag(\delta_i,\delta_i^q,\dots,\delta_i^{q^{k-1}})$ satisfies \eqref{eqn20} for $i\in[t]$.

Note that $\delta_1B_1=U_1,\dots,\delta_tB_1=U_t,B_2=U_0$ are disjoint and $\delta_1B_1\cup\cdots\cup\delta_tB_1\cup  B_2$ is a subset of $M$. Thus, the initial codes are Gabidulin codes with the generator matrix $\mb{G}_{\mb{f},U_1\cup U_0}$ since $U_1\cup U_0$ are linearly independent over $\mathbb{F}_q$. Moreover, to prove that the final code in Construction \ref{con B} is equivalent to a Gabidulin code, it is sufficient to prove that $f(x)$ in \eqref{eqn37} has the form $\sum_{i=1}^{tk} a_i x^{q^{i-1}} \in \F_{q^{(k+1)(m+1)}}[x]$ since $U_{0,0}\subseteq U_0$ and $U_1\cup\cdots\cup U_t\cup U_{0,0}$ are linearly independent over $\mathbb{F}_q$.

For $i\in[t]$, suppose that
\begin{equation*}
    \Lambda_{\hat{U}_i}(x) = \prod_{\alpha \in \anglenv{U_i}_{q}\setminus\{0\}}( x - \alpha ).
\end{equation*}
Then, $\Lambda_{\hat{U}_1}(x),\dots,\Lambda_{\hat{U}_t}(x)$ are pairwise co-prime and
\begin{equation*}
    \deg(\Lambda_{\hat{U}_i}(x))= |\anglenv{U_i}_{q}\setminus\{0\}| = q^{k}-1.
\end{equation*}
Let $K_{-i} = \bigcup_{j\in[t]\setminus\{i\}} \anglenv{U_j}_q \setminus\{0\}$ for $i\in[t]$, and
\begin{equation}\label{eqn22}
    \Lambda_{\hat{U}_{-i}}(x) = \prod_{j\in[t]\setminus\{i\}} \Lambda_{\hat{U}_j}(x) = \prod_{\alpha \in K_{-i} } (x-\alpha)
\end{equation}
for $i\in[t]$. Denote $K_{-(i,j)}\triangleq \anglenv{U_1\cup\cdots\cup U_t\setminus\{\alpha_{i,j}\}}_q$ and $K_{i-(i,j)} \triangleq K_{-(i,j)} \setminus K_{-i}$ for $i\in[t],j\in[k]$. Then, $U_i\setminus\{\alpha_{i,j}\}\subseteq K_{i-(i,j)}$. We set
\begin{equation}\label{eqn23}
    e_i(x) = \sum_{j=1}^{k} \frac{ s_{i,j}  \prod_{ \alpha \in K_{i-(i,j)} } (x-\alpha) }{ \prod_{ \alpha \in K_{i-(i,j)} } (\alpha_{i,j}-\alpha) },
\end{equation}
where $s_{i,j} = (\mb{m}_i  \mb{f}^\top)(\alpha_{1,j})$ for $i\in[t], j\in[k]$, which implies that
\begin{equation*}
f(x) = \sum_{i=1}^{t} e_i(x) \Lambda_{\hat{U}_{-i}}(x)
\end{equation*}
and
\begin{equation*}
\begin{aligned}
    f(\alpha_{u,v})
    &=  e_u(\alpha_{u,v}) \Lambda_{\hat{U}_{-u}}(\alpha_{u,v}) \\
    &= \prod_{\alpha\in K_{-u}}(\alpha_{u,v}-\alpha) \cdot \sum_{j=1}^{k} \frac{ s_{u,j}  \prod_{ \alpha \in K_{u-(u,j)} } (\alpha_{u,v}-\alpha) }{ \prod_{ \alpha \in K_{u-(u,j)} } (\alpha_{u,j}-\alpha) } \\
    &=  s_{u,v} \prod_{\alpha\in K_{-u}}(\alpha_{u,v}-\alpha)
\end{aligned}
\end{equation*}
for $u\in[t],v\in[k]$, where the last equation holds by $\alpha_{u,v}\in U_u\setminus\{\alpha_{u,j}\}\subseteq K_{u-(u,j)}$ for all $j\in[k]\setminus\{v\}$ and $\alpha_{u,v}\not\in K_{u-(u,v)}$.

Denote $\det(\mb{M})$ as the determinant of a square matrix $\mb{M}$. Note that $K_{-(i,j)} = K_{i-(i,j)}\cup K_{-i}$. Combining \eqref{eqn22} and \eqref{eqn23}, we have that
\begin{align*}
e_i(x) \Lambda_{\hat{U}_{-i}}(x)
&= \sum_{j=1}^{k}  \frac{ s_{i,j} \prod_{ \alpha \in K_{-(i,j)}  } (x-\alpha) }{ \prod_{ \alpha \in K_{i-(i,j)} } (\alpha_{i,j}-\alpha) } \\
&= \sum_{j=1}^{k}   \frac{ s_{i,j} \lambda_{i,j} \det(\mb{G}_{\hat{\mb{f}},(U_1\cup\cdots\cup U_t\setminus\{\alpha_{i,j}\})\cup\{x\}} ) }{ \prod_{ \alpha \in K_{i-(i,j)} } (\alpha_{i,j}-\alpha) }
\end{align*}
where $\lambda_{i,j} \in\F_{q^{(k+1)(m+1)}}$ is some non-zero constant element determined by $U_1,U_2,\dots,U_t$ for $i\in[t],j\in[k]$, and $\hat{\mb{f}}=(x,x^q,\dots,x^{q^{tk-1}})\in (\F_{q^{(k+1)(m+1)}}[x])^{tk}$, and the second equation holds by Lemma \ref{lem0}. Note that
\begin{equation*}
\det(\mb{G}_{\hat{\mb{f}},(U_1\cup\cdots\cup U_t\setminus\{\alpha_{i,j}\})\cup\{x\}}) = \sum_{\ell=1}^{tk} b_\ell x^{q^{\ell-1}}\in \F_{q^{(k+1)(m+1)}}[x],
\end{equation*}
where $b_i$ is determined by $U_1,\dots,U_t$ for $i\in[tk]$. This is to say that
\begin{equation*}
f(x) = \sum_{i=1}^{t} e_i(x)\Lambda_{\hat{U}_{-i}}(x) = \mb{m} \hat{\mb{f}}^\top,
\end{equation*}
where $\mb{m}\in\F_{q^{(k+1)(m+1)}}^{tk}$. Therefore, the final code is equivalent to the Gabidulin code $\tilde{\cC}$. Since Gabidulin codes are MDS codes,  the per-symbol access-optimal property directly comes from Corollary \ref{thm5}.
\end{IEEEproof}

\begin{remark}
On the one hand, the per-symbol access-optimality in Corollaries~\ref{cor-con C},\ref{cor5},\ref{thm6} is with respect to the symbol-access model of Definition~\ref{def per}, obtained through the MDS property. The present result does not by itself establish that the conversion map preserves rank-metric or sum-rank-metric structure; that question is left for future work. On the other hand, the code equivalence mentioned in this paper preserves the rank metric for Corollaries~\ref{cor-con C},\ref{cor5}, since the coordinate scaling factors $\Lambda_{\hat{U}_{-i}}(\gamma)\in\mathbb{F}_q$ in \eqref{eqn-scalar} and $d_{i,j}\Lambda_{\hat{U}_{-i}}(\gamma_{i,j})\in\mathbb{F}_q$ \eqref{eqn35} by Lemma \ref{lem4}-2) and $\mb{D}_i=\mb{I}_k,i\in[t]$. However, in the degenerate case over $\mathbb{F}_{q^{(k+1)(m+1)}}[x]$, the coordinate scaling factors for Corollary \ref{thm6} involve evaluations of the corresponding annihilating polynomials and generally lie in the base field $\mathbb{F}_{q^{(k+1)(m+1)}}$, rather than necessarily in $\mathbb{F}_q$. Hence, the equivalence preserves Hamming weight and the MDS property, but it does not necessarily preserve rank weight. Constructing a merge-regime conversion with Gabidulin initial codes and a final code that is rank-metric equivalent to a Gabidulin code remains an interesting open problem.
\end{remark}

\section{Conclusion}\label{sec con}

This paper develops a polynomial-evaluation view of merge-regime code conversion. The main contribution is a common algebraic mechanism that combines local message-polynomial behavior into a final evaluation polynomial while keeping explicit control of unchanged and written symbols. In the skew ring $\mathbb{F}_{q^m}[x;x^q,0]$, the required mechanism is provided by minimal skew polynomials of conjugacy-class unions together with an evaluation-compatible product rule and the sCRT of Theorem~\ref{thm8}. This perspective leads to two skew-PEC conversion templates: Construction~\ref{con C} for distinct initial codes and Construction~\ref{con D} for identical initial codes under the explicit compatibility conditions \textbf{C1} and \textbf{C2}.

For the standard skew-polynomial basis, the constructions preserve the linearized Reed--Solomon evaluation structure up to code equivalence and attain per-symbol access-optimal cost. Specializing the framework to $\mathbb{F}_q[x]$ recovers the known polynomial-form Reed--Solomon and Tamo--Barg constructions as instances of the same template. The same commutative template also yields, to the best of our knowledge, the first merge-regime construction with Gabidulin initial codes and a final code equivalent to a Gabidulin code, again with optimal symbol access under Definition~\ref{def per}.

The framework also makes its current limitations explicit. The skew constructions require complete P-bases of conjugacy classes, resulting in the divisibility condition $m\mid k$; their field-size requirements are not intended to improve the best known generic MDS-conversion bounds; and the optimality statements for the LRS and Gabidulin instances concern the symbol-access model used in this paper rather than a new rank/sum-rank conversion metric. Important directions include relaxing the P-basis/divisibility restriction, characterizing broader classes of PECs that satisfy \textbf{C1}--\textbf{C2}, determining whether stronger rank- or sum-rank-metric preservation holds for the structured instances, and extending the evaluation-compatible approach to more general skew rings $\F[x;\sigma,\delta]$.

\bibliographystyle{IEEEtran}
\bibliography{GeBib}
\end{document}